\newtheorem{definition}{Definition}
\newtheorem{theorem}{Theorem}
\newtheorem*{theoremB}{Theorem}
\newtheorem{corollary}{Corollary}
\newtheorem{proposition}{Proposition}
\newtheorem{lemma}{Lemma}
\newtheorem*{lemmaB}{Lemma}
\newtheorem{remark}{Comment}
\newcommand{\Tgood}{\mbox{Tx}_1}
\newcommand{\Tbad}{\mbox{Tx}_2}
\newcommand{\Rgood}{\mbox{Rx}_1}
\newcommand{\Rbad}{\mbox{Rx}_2}
\newcommand{\E}{\mathds{E}}
\newcommand{\cov}{\mbox{cov}}
\newcommand{\C}{\mathds{C}}
\newcommand{\tmR}{\tilde{\mathcal{R}}}
\newcommand{\tmC}{\tilde{\mathcal{C}}}
\newcommand{\Amat}{\mathds{A}}
\newcommand{\Bmat}{\mathds{B}}
\newcommand{\Hmat}{\mathds{H}}
\newcommand{\Imat}{\mathds{I}}
\newcommand{\Tmat}{\mathds{T}}
\newcommand{\dsE}{\mathds{E}}
\newcommand{\tvec}{\mathbf{t}}
\newcommand{\Yvec}{\mathbf{Y}}
\newcommand{\yvec}{\mathbf{y}}
\newcommand{\Xvec}{\mathbf{X}}
\newcommand{\xvec}{\mathbf{x}}
\newcommand{\Zvec}{\mathbf{Z}}
\newcommand{\zvec}{\mathbf{z}}
\newcommand{\Svec}{\mathbf{S}}
\newcommand{\uavec}{\underline{\mathbf{a}}}
\newcommand{\avec}{\mathbf{a}}
\newcommand{\setC}{\mathfrak{C}}
\newcommand{\setR}{\mathfrak{R}}
\newcommand{\Rset}{\mathfrak{R}}
\newcommand{\hvec}{\tilde{h}}
\newcommand{\tX}{\tilde{{X}}}
\newcommand{\tY}{\tilde{Y}}
\newcommand{\tx}{\tilde{{x}}}
\renewcommand{\th}{\tilde{h}}
\newcommand{\utH}{\underline{\tilde{H}}}
\newcommand{\uth}{\underline{\tilde{h}}}
\newcommand{\hv}{\mathbf{h}}
\newcommand{\Hv}{\mathbf{H}}
\newcommand{\CN}{\mathcal{CN}}
\newcommand{\N}{\mathcal{N}}
\newcommand{\R}{\mathcal{R}}
\newcommand{\M}{\mathcal{M}}
\newcommand{\mR}{\mathcal{R}}
\newcommand{\mC}{\mathcal{C}}
\newcommand{\mT}{\mathcal{T}}
\newcommand{\mS}{\mathcal{S}}
\newcommand{\mD}{\mathcal{D}}
\newcommand{\hY}{\hat{Y}}
\newcommand{\dsQ}{\mathds{Q}}
\newcommand{\dsI}{\mathds{I}}
\newcommand{\tH}{\tilde{H}}
\newcommand{\tU}{\tilde{U}}
\newcommand{\utU}{\underline{\tilde{U}}}
\renewcommand{\P}{\makebox{P}}
\newcommand{\Real}{\mathfrak{Re}}
\long\def\symbolfootnote[#1]#2{\begingroup\def\thefootnote{\fnsymbol{footnote}}\footnote[#1]{#2}\endgroup}
\newcommand{\tend}{\hfill$\blacksquare$}
\title{Capacity Theorems for the Fading Interference Channel with a Relay and Feedback Links}
\author {Daniel Zahavi and Ron Dabora\\Department of  Electrical and Computer Engineering \\Ben-Gurion University, Israel
\thanks{Email: {\tt zahavida@post.bgu.ac.il, ron@ee.bgu.ac.il}. This work was partially supported by the European Commission's Marie Curie IRG Fellowship PIRG05-GA-2009-246657  under the Seventh Framework Programme.
Parts of this work were presented at the International Symposium on Information Theory (ISIT), July 2011, St. Petersburg, Russia.}
}
\begin{document}
\maketitle
\begin{picture}(0,0)
\put(-10,180){\tt\small Accepted to the IEEE Transactions on
Information Theory,  March 2012.}
\end{picture}

\begin{abstract}
Handling interference is one of the main challenges in the design of wireless networks. One of the key approaches to interference management is node
cooperation, which can be classified into two main types: relaying and feedback. In this work we consider simultaneous application of both cooperation
types in the presence of interference. We obtain exact characterization of the capacity regions for Rayleigh fading and phase fading interference channels with
a relay and with feedback links, in the strong and very strong interference regimes. Four feedback configurations are considered: (1) feedback from {\em both receivers}
to the relay, (2) feedback from {\em each receiver} to the relay and to one of the transmitters (either corresponding or opposite), (3) feedback from {\em one of the receivers}
to the relay, (4) feedback from {\em one of the receivers} to the relay and to one of the transmitters. Our results show that there is a strong motivation for incorporating relaying and feedback into wireless networks.
\end{abstract}

\section{Introduction}
Communication in the presence of interference is one of the main areas of research in information theory. The most basic network in which there is interference is the interference
channel (IC), introduced by Shannon in \cite{Shannon:61}. The IC consists of two transmitter-receiver pairs, Tx$_k$-Rx$_k$, $k\in\{1,2\}$, sharing the same physical channel.
The very strong interference (VSI) regime was first characterized for ICs by Carleial in \cite{Carleial:75}. When VSI occurs in ICs, each receiver can decode the interference
by treating its own signal as noise, without limiting the rate of the other pair. Thus, each pair can communicate at a rate equal to its point-to-point (PtP) interference-free capacity.
A weaker notion called strong interference (SI) was introduced by Sato in \cite{Sato:81}. When SI occurs in ICs, each receiver can decode both messages without reducing the
capacity region of the IC. In \cite{Sato:81} Sato showed that in such a scenario the capacity region of the IC is given by the intersection of the capacity regions of two
multiple-access channels (MACs) -- derived from the IC. The capacity region of the scenario where both messages are required by both receivers was first derived by Ahlswede in \cite{Ahlswede:74}.

One of the key approaches to interference management in wireless networks is relaying. The relay channel was first introduced by van der Meulen in \cite{venderMeulen:68} and it consists of
three nodes -- a transmitter, a receiver, and a relay, which assists the communication between the transmitter and the receiver. In \cite{CoverElGamal:IT79} Cover and El Gamal derived an
achievable rate for the relay channel by using a superposition block-Markov codebook and by decoding the source message at the relay. The relay then sends a message that assists the decoder
resolve the uncertainty about the source message. This scheme is called \textit{decode-and-forward} (DF). Another fundamental scheme introduced in \cite{CoverElGamal:IT79} is based on compression
at the relay. This scheme is commonly referred to as \textit{compress-and-forward} (CF). In addition, Cover and El Gamal provided an outer bound on the capacity of a general relay channel, but the exact capacity remains unknown.
An important contribution to the study of relay networks is the work of Kramer et al. in \cite{Kramer:05}. Kramer et al. obtained capacity theorems as well as achievable rate regions for different
relay networks by using the DF and CF strategies. In \cite{Kramer:05}, capacity results were presented for several relay networks for phase fading and Rayleigh fading channel models.

The classic relay channel of \cite{CoverElGamal:IT79} can be extended by adding a second source node, such that (s.t.) the relay assists the
communications from both sources to the (single) destination. This model is called the multiple-access relay channel (MARC). Some capacity results as well as inner and
outer bounds for the white Gaussian MARC were derived by Kramer et al. in \cite{Kramer:00}. The capacity region of the phase fading MARC was characterized in \cite{Kramer:05}.
Sankaranarayanan et al. presented outer bounds on the capacity region as well as achievable rate regions for the MARC in \cite{Sankaranarayanan:04}. The sum-capacity of the degraded
Gaussian MARC\footnote{A K-user Gaussian MARC is said to be degraded if, given the transmitted signal at the relay, the multiaccess signal received at the destination is a noisier version of
the multiaccess signal received at the relay.} was studied by Sankar in \cite{Sankar:09}. In \cite{Sankar:09} it was shown that while in the relay channel the degradedness assumption
simplified the cut-set bound to coincide with the DF achievable rate region, in the MARC this is not the case.
The MARC model can be generalized by considering multiple relays. The relay nodes are said to be parallel if there is no direct link between them, while all source-relay, relay-destination
and source-destination links exist. The parallel Gaussian MARC, with the relay nodes using the \textit{amplify-and-forward}\footnote{In \textit{amplify-and-forward} the relay simply transmits a scaled
version of its receives signal.} (AF) strategy, was studied by del Coso et al. in \cite{DelCoso:071}.

The MARC can be further extended by adding a second destination node s.t. each transmitter communicates only with a single destination. This gives rise
to the interference channel with a relay (ICR) which consists of five nodes. This channel was first studied by Sahin and Erkip \cite{Sahin:07} and has gained considerable
interest in the past few years. Inner bounds as well as outer bounds on the capacity region were derived for the ICR, see \cite{Sahin:09}, \cite{Maric:08}, \cite{Maric:09} and
\cite{Lasaulce:09_no1} and the references therein. One of the critical aspects in the study of ICRs is to determine what is the best strategy for the relay, since when assisting one
receiver the relay may degrade the performance of the other receiver. Moreover, in some situations the optimal relay strategy would be to forward interference rather than desired information
\cite{Maric:08}.
Thus, there might not be one scheme which increases the achievable rates for both
pairs simultaneously. In \cite{Stridharan:08} it was shown that when the relay is {\em cognitive} then it is able to
assist both pairs simultaneously by simultaneously zero-forcing the interference at each receiver. This assistance was shown to be optimal from the degrees-of-freedom (DoF) perspective for a large range
of channel coefficients. The capacity region of fading ICRs for a non-degraded scenario with a {\em causal} relay and finite
signal-to-noise ratios on all links, was first characterized in \cite{Dabora:101} and \cite{Dabora:10}.  In these works it was shown that in some situations the best strategy for the relay is DF and
 that the relay can optimally assist both receivers simultaneously, from the capacity perspective.
 Lastly, {\em global, instantaneous} CSI was considered in \cite{TianYener:2010}. In the work \cite{TianYener:2010}, fading ICRs with an ``on-and-off" relay were studied.
 Under the assumption of using ``asynchronous relaying" (i.e., the codebooks of the sources and of the relay are mutually independent) and with the assumption that  the fading coefficient
 equals zero with  a positive probability, \cite{TianYener:2010} obtained an achievable rate region.

Another tool for handling interference in wireless networks is feedback from receiving nodes to transmitting nodes. Feedback allows the nodes to coordinate their transmissions and thereby sometimes
helps in achieving higher rates compared to those achieved without coordination. In \cite{Shannon:56} Shannon showed that
feedback does not increase
the capacity of memoryless PtP channels. However, in \cite{Wolf:75} Gaarder and Wolf showed that in a memoryless MAC, if both transmitters have feedback from the receiver, they can cooperate to
increase the capacity region. This was the first time it was shown that feedback increases the capacity region of a memoryless channel. In \cite{CoverElGamal:IT79} Cover and El Gamal showed that
the cut-set bound for the relay channel is achieved with DF when feedback is available at the relay. In such a scenario feedback to the transmitter does not provide further improvement onto
feedback to the relay. Additional results on the achievable rates in the relay channel with receiver-transmitter feedback were obtained in \cite{Bross:09}. For the MARC with feedback from the relay
to the sources, Hou et al. derived an outer bound on the capacity region as well as achievable rate regions in \cite{Kramer:09}. In \cite{Kramer:09} feedback was used to allow each source to decode the
message of the other source, thereby the transmitters could cooperate and resolve the uncertainty at the receiver. The MARC with generalized feedback (MARC-GF) was studied by Ho et al. in \cite{Ho:08}.
The MARC-GF models cellular networks in which all the mobile stations can listen to the ongoing transmissions through the channel.

Feedback was also studied for ICs.
In \cite{Kramer:02} it was shown that for interference channels at SI, the capacity region is enlarged if each transmitter receives feedback from the receiver to which it is sending messages.
The sum-capacity of symmetric deterministic ICs with infinite-capacity feedback links from the receivers to the transmitters, was studied by Sahai et al. in \cite{Sahai:09}. In \cite{Sahai:09} it was shown that having
a single feedback link from one of the receivers to its own transmitter results in the same sum-capacity as having a total of four feedback links - from both receivers to both transmitters.
\cite{Sahai:09} also considered a practical feedback configuration for a TDD based system, where the forward and the feedback channels are symmetric and time-shared and it was shown that in such a
scenario, feedback does not increase the sum capacity of the IC in the SI regime. 
    In \cite{Cadambe:08}  Cadambe and Jafar provided a tight characterization of the generalized degrees-of-freedom (GDoF) for ICs with feedback for values of
    $\alpha \triangleq \frac{\log(INR)}{\log{SNR}} \ge \frac{2}{3}$.  It was observed in \cite{Cadambe:08} that feedback leads to an unbounded capacity gain in the very strong
    interference regime ($\alpha \ge 2$).  In \cite{Tse:09} the capacity region of the Gaussian IC with feedback was characterized to within 2 bits/symbol/Hz, and the exact
    GDoF was characterized for all values of $\alpha$.  In particular, it was shown in \cite{Tse:09} that feedback provides a capacity gain that
    increases with the SNR to infinity also in the  weak interference regime ($0\le \alpha \le \frac{2}{3}$), in addition to the case $\alpha \ge 2$.
    In \cite{Tuni:07} an achievable rate region for ICs with generalized feedback was derived. In this scenario, each transmitter observes outputs from the channel,
    thereby allowing the transmitters to cooperate and achieve higher rates compared to the no-feedback scenario.
    The effect of finite-capacity feedback links on the  capacity region of ICs was also studied in recent works. The work of
    \cite{Aveshimer:11} considered the effect of rate limited feedback on the ICs. In \cite{Aveshimer:11}, communication schemes, based on sending to the
    transmitter partial information on the interfering
    signal, were developed.  The paper \cite{Aveshimer:11} presented a constant-gap result for Gaussian ICs with rate-limited feedback and a tight characterization
    for linear deterministic ICs. In \cite{Gastpar:06} the effect of noisy feedback on the capacity region of Gaussian ICs  was considered.
    For the situation in which both transmitters observe noisy feedback from both receivers, it was shown that feedback looses its value when
    the noise in the feedback signal is of the same variance as the noise in the direct link.
    Finally, note that generalized feedback (or, equivalently source cooperation), studied in  \cite{Tuni:07}, \cite{Yang:09},
    and \cite{Tandon:11} can also considered rate-limited feedback when the SNR is finite. In \cite{Yang:09} and \cite{Tandon:11} outer bounds were derived for ICs with generalized feedback.

    The impact of both relaying and feedback on the DoF of interference channels  was studied in \cite{Cadambe:09}. The work \cite{Cadambe:09} considered a network with
    multiple sources, multiple relays and multiple destinations, in which
    the channel coefficients are random time-varying/frequency-selective and all channel coefficients are known a-priori at all nodes.
    For such a scenario, \cite{Cadambe:09} showed that relays and feedback (and even noisy cooperation between the destinations and the sources) do not
    provide higher total DoF than that obtained without such techniques. However,
    the impact of the combination of relaying and feedback on the capacity of ICs at finite SNRs has not yet been characterized.
In this work we study the capacity of full-duplex fading interference channel with a relay and with
different feedback configurations. We consider the channel when it is subject to phase fading and Rayleigh fading. The phase fading model is mostly applicable to high-speed microwave communications,
in which phase noise is generated by the oscillators or due to the lack of synchronization. The phase fading model also applies to orthogonal frequency division multiplexing (OFDM) \cite{BanNess:04},
as well as to some applications of naval communications. Rayleigh fading models are commonly used in wireless communications and apply to scenarios in which the multipath effect is not negligible, e.g.,
dense urban environments \cite{Sklar:97}.

\subsection*{Main Contributions}
In this paper we present the first investigation of the application of both relaying and feedback to interference channels.
   We provide capacity characterization for the fading interference channel with a relay and feedback links (ICRF), in the SI and VSI regimes. We assume only receiver channel state information (Rx-CSI).
   All capacity regions obtained in this work are derived under the assumptions that the fading channel coefficients are mutually independent and
    i.i.d. in time,  and that the phase of each fading coefficient is uniformly distributed
    over $[0, 2\pi)$, and is independent of its magnitude. Explicit capacity regions are given for two fading models:
    phase fading and Rayleigh fading, which are special cases of this general model.
\begin{itemize}
    \item We first characterize the capacity regions of ICRFs in which both receivers send (noiseless) causal feedback only to the relay,
        for VSI and SI regimes.

    \item Next, we consider the case where feedback is also available at the transmitters to determine whether the transmitters can exploit this
        additional information to cooperate and enlarge the capacity region compared to the first configuration. The answer to this question is not
        immediate since the availability of feedback at the transmitters can enlarge the capacity region of MACs and ICs, but for the relay channel
        it does not provide any improvement once feedback is available at the relay.

    \item We then study the performance when feedback is available only from one of the receivers and examine whether the performance degradation
        is the same for both pairs. Capacity results are provided for this scenario as well.

\end{itemize}
        Identifying optimal strategies for ICRFs has a direct
        impact on the design of future wireless networks in which interference is a critical issue.
        These implications will be highlighted throughout. Some important consequences of our results include a
        proof that a single relay can be optimal simultaneously for two separate Tx-Rx pairs as well as the maximum performance
        gains that can be obtained in different feedback configurations. {\em To the best of our knowledge these are the first capacity results
        for ICs with relaying and feedback}.

The rest of this paper is organized as follows: in section \ref{sec:Model} we define the system model.
In section \ref{sec:Pre} several frequently used lemmas and theorems are provided. In sections \ref{sec:FULL FB VSI} and \ref{sec:FULL FB SI} we provide
an exact characterization of the capacity regions of ICRFs with feedback from both receivers to the relay, in the VSI and SI regimes.
We also provide explicit expressions for the phase fading and Rayleigh fading models\footnote{
For the Rayleigh fading, the expressions include integrations which can be evaluated numerically in a simple manner.}.
In section \ref{sec:FB to tx} we analyze the scenario in which feedback is available both at the relay and at the transmitters.
In section \ref{sec:Partial FB} we consider the case in which partial feedback (only from one of the receivers) is available at the relay.
For this scenario, we characterize the capacity regions in the VSI and SI regimes and provide explicit expressions for the phase fading and Rayleigh
fading models.
Finally, in section \ref{sec:conclusions} we present concluding remarks.

\section {Notations and Channel Model}
\label{sec:Model}
We denote random variables (RVs) with capital letters, e.g., $X,Y$ and their realizations with lower case letters, e.g., $x,y$.
We denote the probability density function (p.d.f.) of a continuous RV $X$ with $f_{X}(x)$.
Capital double-stroke letters are used for matrices, e.g., $\Amat$, with the exception that $\E\{X\}$ denotes the stochastic expectation of $X$.
Vectors are denoted with bold-face letters, e.g., $\mathbf{x}$ and  the $i$'th element of a vector $\mathbf{x}$ is denoted with $x_i$. We use $x_i^j$ where
$i \le j$ to denote the vector $(x_i, x_{i+1},...,x_{j-1},x_j)$. $X^*$ denotes the conjugate of $X$ and $\Amat^H$ denotes the Hermitian transpose of $\Amat$.
Given two $n\times n$ Hermitian matrices, $\Amat, \Bmat$, we write $\Bmat\preceq\Amat$ if $\Amat-\Bmat$ is positive semidefinite (p.s.d.)
and $\Bmat\prec\Amat$ if $\Amat-\Bmat$ is positive definite (p.d.). $A^{(n)}_\epsilon(X,Y)$ denotes the set of weakly jointly typical sequences with
respect to $f_{X,Y}(x,y)$, as defined in \cite[Sec. 8.2]{cover-thomas:it-book}. We denote with $\varnothing$ the empty set. Finally, we denote the Normal distribution with mean $\mu$ and
variance $\sigma^2$ with $\N(\mu,\sigma^2)$, and the circularly symmetric, complex Normal distribution with mean $\mu$ and variance $\sigma^2$ with
$\CN(\mu,\sigma^2)$.

In the interference channel with a relay there are two transmitters and two receivers. Tx$_1$ wants to send a message to Rx$_1$ and Tx$_2$
wants to send a message to Rx$_2$. The received signals at Rx$_1$, Rx$_2$ and the relay at time $i$ are
denoted by $Y_{1,i}$, $Y_{2,i}$, $Y_{3,i}$ respectively. The channel inputs from $\Tgood$, $\Tbad$ and the relay at time $i$ are denoted by
$X_{1,i}$, $X_{2,i}$ and $X_{3,i}$, respectively.
The relationship  between the channel inputs and its outputs is given by:
\begin{subequations}
\label{eqn:ICR_model}
\begin{eqnarray}
    \label{eqn:Rx_1_sig}
\!\!\!\!\!    Y_{1,i} & = & H_{11,i} X_{1,i} + H_{21,i} X_{2,i} + H_{31,i} X_{3,i} + Z_{1,i}\\
    \label{eqn:Rx_2_sig}
\!\!\!\!\!    Y_{2,i} & = & H_{12,i} X_{1,i} + H_{22,i} X_{2,i} + H_{32,i} X_{3,i} + Z_{2,i}\\
    \label{eqn:Relay_sig}
\!\!\!\!\!    Y_{3,i} & = & H_{13,i} X_{1,i} + H_{23,i} X_{2,i} + Z_{3,i},
\end{eqnarray}
\end{subequations}
$i = 1, 2, ..., n$, where $Z_1$, $Z_2$ and $Z_3$ are mutually independent, zero-mean, circularly symmetric complex Normal RVs, $\CN(0,1)$, independent in time and independent of the channel inputs and the channel coefficients.
The channel input signals are subject to per-symbol average power constraints: $\E\big\{|X_k|^2\big\}\le P_k$, $k\in\{1,2,3\}$.
The channel coefficients $H_{lk,i}$ are mutually independent and i.i.d. in time. The magnitude and phase of $H_{lk,i}$ are independent RVs, and the phase is uniformly distributed over $[0,2\pi)$.
\begin{figure}
    \centering
    \includegraphics[scale=0.30]{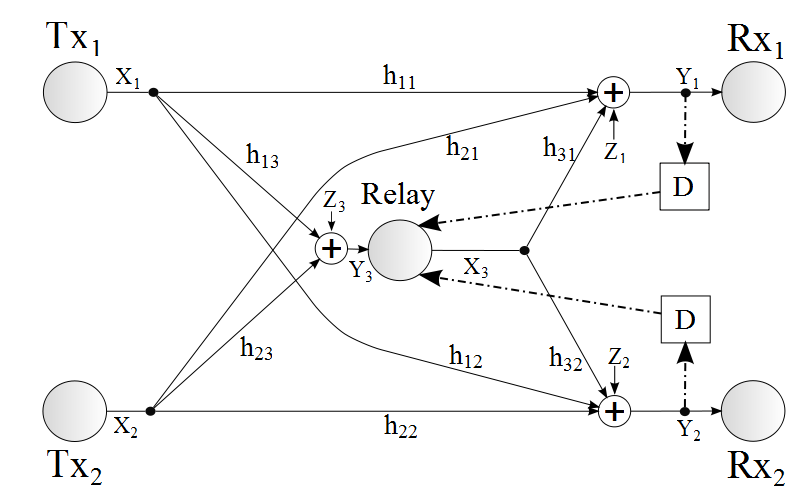}
\caption{The interference channel with a relay and with feedback from both receivers to the relay. The `D' block represents a single-symbol delay.}
\label{fig:ICRF model}
\vspace{-0.7cm}
\end{figure}

Throughout this paper channel state information (CSI) at the receivers is assumed.
        We represent the CSI at receiver $k$  with $\tH_k \triangleq \big( H_{1k}, H_{2k}, H_{3k}\big)$, $k\in\{1,2\}$. As each element in $\tH_k$ is a complex scalar random variable, then
        $\tH_k\in\setC^3$. For consistency of notations
        we use $\tilde{\mathfrak{H}}_k$ to denote the space of the random vector $\tH_k$, thus $\tilde{\mathfrak{H}}_k \equiv \setC^3$.
        In sections \ref{sec:FULL FB VSI} and \ref{sec:FULL FB SI} we assume noiseless feedback links from both receivers to the relay, s.t. the channel outputs $y_{1,1}^{i-1},y_{2,1}^{i-1}$,
        and the corresponding Rx-CSIs, $\th_{1,1}^{i-1}$ and $\th_{2,1}^{i-1}$, are available at the relay at time $i$ prior to transmission. This model is described in Fig.  \ref{fig:ICRF model}.
        Hence, the CSI at the relay is represented by $\utH\triangleq\big(H_{13}, H_{23},\tH_1,\tH_2\big)\in\setC^8$. We denote the space of  $\utH$ with $\underline{\tilde{\mathfrak{H}}}\equiv \setC^8$.

        \vspace{-0.2cm}
\begin{remark}
\em{}
    Note that as feedback contains both channel output and Rx-CSI, feedback from both receivers to the relay, leads to the relay having delayed Tx-CSI on its outgoing links.
    In this work we will show that as the channel is memoryless and the coefficients are i.i.d. with uniformly distributed phases, independent of their magnitudes,
    such feedback does not result in correlated channel inputs.
    Note that  destinations-relay feedback which includes Rx-CSI leads to the conclusion that reliable
    decoding at the destinations guarantees reliable decoding at the relay. This, in turn, leads to the optimality of DF in SI and VSI.
    Without including Rx-CSI in the feedback signal, then, in order to achieve such an implication, it is necessary to impose
    restrictions on the channel coefficients. This decreases the set of channel coefficients for which we can achieve
    the capacity region of the ICRF by using DF at the relay. This will be elaborated upon in Comment \ref{rem:Impact_of_no_CSI_in FB}.
\end{remark}

\begin{remark}
\em{}
    We note that an important problem is the case of global instantaneous CSI. In such a case,
    following the approach in \cite{TianYener:2010} and \cite{Liang:07},  the fading channel is decomposed into parallel Gaussian ICRs. However, for such channels it is not possible
    to use the techniques of the current work to show that mutually independent channel inputs maximize the cut-set bound. This is because  the channel coefficients
    and channel inputs at the same time instant can be correlated, and therefore the nodes can use the CSI to achieve correlation between their signals.
    The case of global instantaneous CSI will not be treated in the current manuscript.
\end{remark}

We now define the code, probability of error, achievable rates, and capacity region:
\begin{definition}
\label{def:code}
\em{}An $(R_1,R_2,n)$ code for the ICRF, depicted in Fig. \ref{fig:ICRF model}, consists of two message sets $\M_k \triangleq \big\{1,2,...,2^{n R_k}\big\}$, $k\in\{1,2\}$,
two encoders at the sources, $e_1, e_2$,  and two decoders at the destinations, $g_1,g_2$; $e_k: \M_k \mapsto \setC^n$, $g_k: \tilde{\mathfrak{H}}_k^n \times \setC^n \mapsto \M_k$, $k\in\{1,2\}$.
At the relay there is a causal encoder. Since in sections \ref{sec:FULL FB VSI} and \ref{sec:FULL FB SI} feedback from both receivers is available at the relay, then the
encoded signal at the relay is a causal function of the channel outputs at the receivers, its own received symbols and the corresponding Rx-CSIs, i.e.,
\begin{equation}
\label{eq:relay encoding}
x_{3,i} = t_i\big(y_{1,1}^{i-1},y_{2,1}^{i-1},y_{3,1}^{i-1},h_{13,1}^{i-1},h_{23,1}^{i-1},\tilde{h}_{1,1}^{i-1},\tilde{h}_{2,1}^{i-1}\big) \in \setC,
\end{equation}
$i=1,2,...,n$.
\end{definition}
\begin{definition}
\em{}The average probability of error is defined as $\P_e^{(n)} \triangleq \Pr\big(g_1(\tH_1^n, Y_1^n) \ne M_1 \mbox{ or } g_2(\tH_2^n, Y_2^n) \ne M_2\big)$,
where $M_1$ and $M_2$ are selected independently and uniformly over their message sets.
\end{definition}
\begin{definition}
\em{}A rate pair $(R_1, R_2)$ is called achievable if for any $\epsilon >0$ and $\delta >0$ there exists some block length $n_0(\epsilon,\delta)$ s.t. for every integer
$n > n_0(\epsilon,\delta)$ there exists an  $(R_1 - \delta, R_2 - \delta ,n)$ code with $\P_e^{(n)} < \epsilon$.
\end{definition}
\begin{definition}
\em{}The capacity region is defined as the convex hull of all achievable rate pairs.
\end{definition}
\noindent
In sections \ref{sec:FB to tx} and \ref{sec:Partial FB}, the definitions of Rx-CSI and the code will be specialized according to the feedback configurations of these sections.

In this paper we also present explicit capacity expressions for phase fading and Rayleigh fading models, which are two fading models that satisfy the general fading model
defined above. These models are defined as follows:
\begin{itemize}
\item {\bf Phase fading channels}:
      The channel coefficients are given by
      $H_{lk,i} = a_{lk} e^{j\Theta_{lk,i}}$, $a_{lk} \in \setR_+$ are non-negative constants corresponding to the attenuation of the signal power from node $l$ to node $k$, and  $\Theta_{lk,i}$ are
       uniformly distributed over $[0,2\pi)$, independent in time and independent of each other and of the additive noises $Z_k$, $k\in\{ 1,2,3\}$.

\item {\bf Rayleigh fading channels}:
      The channel coefficients are given by
      $H_{lk,i} = a_{lk} U_{lk,i}$ , $a_{lk} \in \setR_+$ are non-negative constants corresponding to the attenuation of the signal power from node $l$ to node $k$, and  $U_{lk,i}$ are
      circularly symmetric, complex Normal RVs, $U_{lk,i} \sim \CN(0,1)$, independent in time and independent of each other and of the additive noises $Z_k$, $k\in\{1,2,3\}$.
\end{itemize}

\section {Preliminaries}
In this section we present some of the frequently used lemmas.
\label{sec:Pre}
\subsection{Maximum Entropy for Complex Random Vectors}

\begin{lemma}
\label{lem:lem00}
    Consider a complex random vector, $\Xvec\triangleq(\Xvec_1,\Xvec_2)$.  Let $\Xvec'\triangleq(\Xvec'_1,\Xvec'_2)=\Xvec-\E\{\Xvec\}$. Then  $h(\Xvec'_1|\Xvec'_2)=h(\Xvec_1|\Xvec_2)$.
\end{lemma}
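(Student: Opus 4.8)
The plan is to reduce the statement to the translation invariance of differential entropy: for any random vector $\mathbf{V}$ and any deterministic constant vector $\mathbf{c}$, one has $h(\mathbf{V}+\mathbf{c})=h(\mathbf{V})$. Since $\E\{\Xvec_1\}$ and $\E\{\Xvec_2\}$ are deterministic, $\Xvec'_1$ and $\Xvec'_2$ are obtained from $\Xvec_1$ and $\Xvec_2$ by subtracting constants, so the entire claim rests on showing that such shifts leave the conditional differential entropy unchanged. I assume throughout that the relevant densities, and hence the differential entropies in the statement, exist.

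First I would invoke the chain rule for differential entropy, $h(\Xvec_1|\Xvec_2)=h(\Xvec_1,\Xvec_2)-h(\Xvec_2)$, and identically $h(\Xvec'_1|\Xvec'_2)=h(\Xvec'_1,\Xvec'_2)-h(\Xvec'_2)$. This is convenient because it replaces the conditional quantity, whose transformation is slightly delicate since the conditioning variable also shifts, by a difference of two \emph{unconditional} entropies, each of which is now a pure translation of a fixed vector. Explicitly, $(\Xvec'_1,\Xvec'_2)=(\Xvec_1,\Xvec_2)-(\E\{\Xvec_1\},\E\{\Xvec_2\})$ and $\Xvec'_2=\Xvec_2-\E\{\Xvec_2\}$, so translation invariance yields $h(\Xvec'_1,\Xvec'_2)=h(\Xvec_1,\Xvec_2)$ and $h(\Xvec'_2)=h(\Xvec_2)$; subtracting these two identities gives $h(\Xvec'_1|\Xvec'_2)=h(\Xvec_1|\Xvec_2)$, as required.

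The single step that needs care is justifying translation invariance for \emph{complex} random vectors. I would establish it by passing to the real representation: a vector in $\setC^n$ is identified with a vector in $\setR^{2n}$ through its real and imaginary parts, and subtracting a complex constant corresponds to subtracting a real constant in $\setR^{2n}$. The translation map on $\setR^{2n}$ has unit Jacobian, so the change of variables $\mathbf{u}=\mathbf{v}-\mathbf{c}$ in the defining integral, $h(\mathbf{V}+\mathbf{c})=-\int f_{\mathbf{V}}(\mathbf{v}-\mathbf{c})\log f_{\mathbf{V}}(\mathbf{v}-\mathbf{c})\,d\mathbf{v}=-\int f_{\mathbf{V}}(\mathbf{u})\log f_{\mathbf{V}}(\mathbf{u})\,d\mathbf{u}$, returns exactly $h(\mathbf{V})$. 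I do not expect a genuine obstacle here; the only thing to watch is the real/complex bookkeeping, i.e.\ verifying that the densities are the usual ones on $\setR^{2n}$ and that the shift is indeed measure preserving, after which the result is immediate.
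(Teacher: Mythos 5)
Your proof is correct and is essentially the argument the paper intends: the paper's own proof is the single line ``follows directly from the definition of the differential entropy,'' and your chain-rule reduction plus the unit-Jacobian translation invariance (carried out in the real representation $\setR^{2n}$ of $\setC^n$) is exactly the routine verification being elided. No gap; you have simply written out the details the authors omitted.
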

\begin{proof}
The proof follows directly from the definition of the differential entropy.
\end{proof}

\begin{lemma}
\label{lem:lem0}
    Let $X_1,X_2,...,X_k$ be an arbitrary set of $k$ zero-mean complex random variables with covariance matrix $\mathds{K}$. Let $\mathcal{S}$ be any subset of $n$ elements from $\{1,2,...,k\}$ and $\mathcal{S}^C$ be its complement. Then:
    \begin{equation*}
        h(\Xvec_{\mathcal{S}}|\Xvec_{\mathcal{S}^C}) \le \log\Big((\pi e)^n \det\big(\textnormal{cov} (\Xvec_{\mathcal{S}}|\Xvec_{\mathcal{S}^C})\big)\Big),
    \end{equation*}
    with equality if and only if $X_1,X_2,...,X_k \sim \CN(0,\mathds{K})$.
\end{lemma}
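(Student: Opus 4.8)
The plan is to reduce the conditional statement to the classical (unconditional) maximum-entropy theorem for circularly symmetric complex Gaussian vectors, by means of a linear minimum-mean-square-error (MMSE) change of variables combined with the fact that conditioning cannot increase differential entropy. First I would assume that $\mathds{K}_{\mathcal{S}^C\mathcal{S}^C}\triangleq\cov(\Xvec_{\mathcal{S}^C})$ is invertible (the degenerate case reduces to a lower-dimensional one after discarding linearly dependent components), and introduce the linear estimate $\hat{\Xvec}_{\mathcal{S}}\triangleq\mathds{K}_{\mathcal{S}\mathcal{S}^C}\mathds{K}_{\mathcal{S}^C\mathcal{S}^C}^{-1}\Xvec_{\mathcal{S}^C}$, where $\mathds{K}_{\mathcal{S}\mathcal{S}^C}\triangleq\E\{\Xvec_{\mathcal{S}}\Xvec_{\mathcal{S}^C}^H\}$, together with the estimation error $\mathbf{W}\triangleq\Xvec_{\mathcal{S}}-\hat{\Xvec}_{\mathcal{S}}$. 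A direct computation gives $\cov(\mathbf{W})=\mathds{K}_{\mathcal{S}\mathcal{S}}-\mathds{K}_{\mathcal{S}\mathcal{S}^C}\mathds{K}_{\mathcal{S}^C\mathcal{S}^C}^{-1}\mathds{K}_{\mathcal{S}^C\mathcal{S}}$, which is exactly the deterministic Schur-complement matrix denoted $\cov(\Xvec_{\mathcal{S}}|\Xvec_{\mathcal{S}^C})$ on the right-hand side of the claim.

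The key chain of (in)equalities I would then establish is
\begin{equation*}
h(\Xvec_{\mathcal{S}}|\Xvec_{\mathcal{S}^C})=h(\mathbf{W}|\Xvec_{\mathcal{S}^C})\le h(\mathbf{W})\le\log\!\Big((\pi e)^n\det\big(\cov(\mathbf{W})\big)\Big).
\end{equation*}
The first equality holds because $\hat{\Xvec}_{\mathcal{S}}$ is a function of the conditioning vector $\Xvec_{\mathcal{S}^C}$, so that subtracting it amounts to a constant translation once $\Xvec_{\mathcal{S}^C}$ is fixed; differential entropy is invariant under such a translation, which is the conditional analogue of Lemma \ref{lem:lem00}. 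The middle inequality is ``conditioning reduces entropy,'' $h(\mathbf{W}|\Xvec_{\mathcal{S}^C})\le h(\mathbf{W})$, valid for differential entropy since the mutual information $I(\mathbf{W};\Xvec_{\mathcal{S}^C})$ is nonnegative. The last inequality is the classical unconditional complex maximum-entropy bound for a zero-mean vector with prescribed covariance (the special case $\mathcal{S}^C=\varnothing$), which I would take as a known result. Substituting $\cov(\mathbf{W})=\cov(\Xvec_{\mathcal{S}}|\Xvec_{\mathcal{S}^C})$ yields the stated bound.

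For the equality condition, the ``if'' direction is direct: when $\Xvec\sim\CN(0,\mathds{K})$, the error $\mathbf{W}$ is circularly symmetric complex Gaussian and independent of $\Xvec_{\mathcal{S}^C}$, so both inequalities are tight. The main obstacle is the converse (``only if'') direction, which is the delicate point I would handle carefully: equality forces $\mathbf{W}\sim\CN(0,\cov(\mathbf{W}))$ from the maximum-entropy step and forces $\mathbf{W}$ to be independent of $\Xvec_{\mathcal{S}^C}$ from the conditioning step. These two facts fix the conditional law of $\Xvec_{\mathcal{S}}$ given $\Xvec_{\mathcal{S}^C}$ as the desired Gaussian, but they do not by themselves force $\Xvec_{\mathcal{S}^C}$ to be Gaussian; full joint Gaussianity must be inherited from the additional structure present wherever the lemma is applied (where $\Xvec_{\mathcal{S}^C}$ is itself constrained to be Gaussian). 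I would therefore present the inequality as the operative content and state the Gaussian case as the condition achieving equality, noting explicitly that the converse direction relies on joint Gaussianity of the full vector $(X_1,\dots,X_k)$.
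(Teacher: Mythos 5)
Your proof is correct and follows essentially the same route as the paper's (cited) proof: the LMMSE change of variables $\mathbf{W}=\Xvec_{\mathcal{S}}-\mathds{K}_{\mathcal{S}\mathcal{S}^C}\mathds{K}_{\mathcal{S}^C\mathcal{S}^C}^{-1}\Xvec_{\mathcal{S}^C}$ combined with conditioning-reduces-entropy is exactly the argument of Thomas's Lemma 1, and the unconditional complex maximum-entropy step is Neeser--Massey's Theorem 2 --- precisely the references the paper invokes. Your caveat about the ``only if'' direction is also accurate --- joint Gaussianity is sufficient but not necessary for equality (conditional Gaussianity of $\Xvec_{\mathcal{S}}$ given $\Xvec_{\mathcal{S}^C}$ with linear mean and constant conditional covariance already suffices) --- but this imprecision is inherited from the cited source and is harmless here, since the paper only ever uses the inequality and the ``if'' direction.
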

\begin{proof}
The proof follows along the lines of the proof of \cite[Lemma 1]{Thomas:87} and an application of \cite[Theorem 1]{Massey:93} and \cite[Theorem 2]{Massey:93}.
\end{proof}

\subsection{The Positive Semidefinite Ordering}
\begin{lemmaB}[{\cite[Lemma 3.1]{Wang:05}}]
\label{Wang}
    Let $\mathbf{X_1}$ and $\mathbf{X_2}$ be random vectors with zero mean and covariance matrices $\C_{mk} \triangleq \E\{\Xvec_m \Xvec^H_k\}, m,k\in\{1,2\}$. Define: $\Amat \triangleq \C^{-\frac{1}{2}}_{11}\cdot \C_{12} \cdot \C^{-\frac{1}{2}}_{22}$. Then there exists $\rho \in [0,1]$ s.t.
    \begin{equation*}
        \Imat-\Amat\Amat^H \preceq (1-\rho^2)\Imat.
    \end{equation*}
\end{lemmaB}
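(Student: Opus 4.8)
The plan is to exploit the fact that $\Amat$ is precisely the cross-covariance of the \emph{whitened} versions of $\Xvec_1$ and $\Xvec_2$, which forces all of its singular values into $[0,1]$; the claimed inequality then drops out by choosing $\rho$ to be the smallest singular value. First I would stack the two zero-mean vectors and consider their joint covariance matrix
\begin{equation*}
\C \triangleq \begin{pmatrix} \C_{11} & \C_{12} \\ \C_{21} & \C_{22} \end{pmatrix},
\end{equation*}
which is Hermitian and positive semidefinite because it is the covariance matrix of an actual random vector. Assuming for the moment that $\C_{11}$ and $\C_{22}$ are nonsingular, I would whiten each block by the congruence transformation induced by $\Bmat \triangleq \mathrm{diag}\bigl(\C_{11}^{-1/2},\,\C_{22}^{-1/2}\bigr)$, where the Hermitian positive-definite square roots are used.

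Since congruence preserves the positive semidefinite ordering, $\Bmat\,\C\,\Bmat^H \succeq 0$, and a direct computation using $\C_{21}=\C_{12}^H$ together with the definition of $\Amat$ gives
\begin{equation*}
\Bmat\,\C\,\Bmat^H = \begin{pmatrix} \Imat & \Amat \\ \Amat^H & \Imat \end{pmatrix} \succeq 0.
\end{equation*}
Because the top-left block equals $\Imat \succ 0$, the Schur-complement criterion for positive semidefiniteness shows that this block matrix is p.s.d.\ if and only if its Schur complement $\Imat - \Amat^H\Amat$ is p.s.d., i.e.\ $\Amat^H\Amat \preceq \Imat$. Consequently every singular value of $\Amat$ lies in $[0,1]$, and since $\Amat\Amat^H$ and $\Amat^H\Amat$ share the same nonzero eigenvalues (all of them bounded by one), we also get $\Amat\Amat^H \preceq \Imat$.

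To finish, I would observe that $\Amat\Amat^H$ is Hermitian positive semidefinite with every eigenvalue in $[0,1]$, and set $\rho^2 \triangleq \lambda_{\min}\bigl(\Amat\Amat^H\bigr) \in [0,1]$, so that $\rho \in [0,1]$ and $\Amat\Amat^H \succeq \rho^2\Imat$. Rearranging yields $\Imat - \Amat\Amat^H \preceq (1-\rho^2)\Imat$, as claimed; this is the tightest choice the statement admits, while any smaller value (in particular $\rho=0$) still satisfies the displayed bound by the trivial fact $\Amat\Amat^H \succeq 0$. The only genuine obstacle is the bookkeeping when $\C_{11}$ or $\C_{22}$ is singular, in which case $\C_{11}^{-1/2}$ and $\C_{22}^{-1/2}$ are undefined; I would then restrict attention to the column spaces of the corresponding blocks and replace the inverse square roots by the pseudoinverse square roots on those subspaces, whereupon the whitening and the Schur-complement step go through verbatim and the conclusion is unchanged.
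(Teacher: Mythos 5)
The paper does not prove this lemma at all: it is imported verbatim as \cite[Lemma 3.1]{Wang:05} and used as a black box in Appendix A (steps (h) and (i) of the cut-set-bound maximization), so there is no in-paper proof to compare against. Your argument is nonetheless correct and self-contained. Stacking $\Xvec_1,\Xvec_2$, noting $\C_{21}=\C_{12}^H$, and whitening by $\mathrm{diag}\bigl(\C_{11}^{-1/2},\C_{22}^{-1/2}\bigr)$ does yield $\left(\begin{smallmatrix}\Imat & \Amat\\ \Amat^H & \Imat\end{smallmatrix}\right)\succeq 0$, the Schur-complement test gives $\Amat^H\Amat\preceq\Imat$, and passing to $\Amat\Amat^H$ via the shared nonzero spectrum is valid even when the two vectors have different dimensions. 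Your choice $\rho^2=\lambda_{\min}(\Amat\Amat^H)$ is indeed equivalent to $\Amat\Amat^H\succeq\rho^2\Imat$, which rearranges to the claim; you are also right that, as literally stated, the lemma is already satisfied by $\rho=0$ since $\Amat\Amat^H\succeq 0$ (the nontrivial content of the original lemma in \cite{Wang:05} lies in identifying $\rho$ with a correlation coefficient, which the present paper does not need). The only point requiring care is the one you flag: when $\C_{11}$ or $\C_{22}$ is singular one must check that the range of $\C_{12}$ lies in the range of $\C_{11}$ (and symmetrically for $\C_{12}^H$ and $\C_{22}$), which follows from $\C\succeq 0$, so the pseudoinverse square roots are legitimate and the whitened block matrix is still p.s.d.; with that observation the argument is complete.
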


\subsection{Joint Typicality}
\begin{lemmaB}[{\cite[Lemma 2]{CoverElGamal:IT79}}]
\label{Cover}
    Let $(\Svec_1,\Svec_2,\Svec_3) \sim \prod_{i=1}^n p(s_{1,i},s_{2,i},s_{3,i})$ and ${(\Svec'_1 ,\Svec'_2, \Svec_3)\sim\prod_{i=1}^n p(s_{3,i})\times}$ $ p(s_{1,i}|s_{3,i}) p(s_{2,i}|s_{3,i})$. Then for $n$ s.t. $\Pr\big\{A^n_\epsilon(\Svec_1,\Svec_2,\Svec_3)\big\}\geq 1-\epsilon$, it holds that:
    \begin{eqnarray*}
        & &\Pr\big\{(\Svec'_1,\Svec'_2,\Svec_3)\in A^{(n)}_\epsilon(\Svec_1,\Svec_2,\Svec_3)\big\} \\
        & & \quad \qquad \qquad \qquad \qquad\qquad \qquad\le 2^{-n\big(I(S_1;S_2|S_3)-7\epsilon\big)}.
    \end{eqnarray*}
\end{lemmaB}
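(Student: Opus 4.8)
The plan is to bound the probability directly by enumerating over the jointly typical set and invoking the entropy-characterization inequalities of weak typicality from \cite[Sec. 8.2]{cover-thomas:it-book}. The key observation is that $\Svec_3$ is generated from $\prod_i p(s_{3,i})$ under both laws, and that under the primed law $\Svec'_1$ and $\Svec'_2$ are drawn conditionally independently given $\Svec_3$ from the correct conditional marginals. Hence I would first write
\begin{equation*}
\Pr\big\{(\Svec'_1,\Svec'_2,\Svec_3)\in A^{(n)}_\epsilon\big\} = \sum_{(\mathbf{s}_1,\mathbf{s}_2,\mathbf{s}_3)\in A^{(n)}_\epsilon} p(\mathbf{s}_3)\,p(\mathbf{s}_1|\mathbf{s}_3)\,p(\mathbf{s}_2|\mathbf{s}_3),
\end{equation*}
where $A^{(n)}_\epsilon\equiv A^{(n)}_\epsilon(S_1,S_2,S_3)$ and $p(\mathbf{s}_j|\mathbf{s}_3)=\prod_i p(s_{j,i}|s_{3,i})$. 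The point is that the summand is the product measure induced by the primed (conditionally independent) construction, while the index set is the typical set of the \emph{true} joint law, so the two distributions interact only through the defining inequalities of $A^{(n)}_\epsilon$.

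The next step is to bound each factor uniformly over the typical set. For every $(\mathbf{s}_1,\mathbf{s}_2,\mathbf{s}_3)\in A^{(n)}_\epsilon$ the typicality constraints give $p(\mathbf{s}_3)\le 2^{-n(H(S_3)-\epsilon)}$ and, combining the joint upper bound $p(\mathbf{s}_j,\mathbf{s}_3)\le 2^{-n(H(S_j,S_3)-\epsilon)}$ with the marginal lower bound $p(\mathbf{s}_3)\ge 2^{-n(H(S_3)+\epsilon)}$, the conditional bound $p(\mathbf{s}_j|\mathbf{s}_3)\le 2^{-n(H(S_j|S_3)-2\epsilon)}$ for $j\in\{1,2\}$. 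I would then use the cardinality estimate $|A^{(n)}_\epsilon|\le 2^{n(H(S_1,S_2,S_3)+\epsilon)}$. Substituting these into the sum and using the chain rule identity $H(S_1,S_2,S_3)-H(S_3)-H(S_1|S_3)-H(S_2|S_3)=-I(S_1;S_2|S_3)$, the exponents collapse and leave a bound of the form $2^{-n(I(S_1;S_2|S_3)-c\epsilon)}$ for a small integer constant $c$.

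I expect the only real obstacle to be the bookkeeping of the $\epsilon$-slack, i.e.\ verifying that the accumulated penalties from the marginal, conditional, and cardinality inequalities combine into exactly the claimed $7\epsilon$ rather than some smaller multiple; this depends on the precise set of subset-entropy constraints built into the weak-typicality definition of \cite[Sec. 8.2]{cover-thomas:it-book}, and it is purely mechanical once that definition is fixed. There is no conceptual difficulty beyond this packing argument, and since the statement is quoted verbatim as \cite[Lemma 2]{CoverElGamal:IT79}, an acceptable alternative is simply to invoke the original proof; I would nonetheless include the short enumeration above to keep the preliminaries self-contained.
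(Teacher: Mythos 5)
The paper offers no proof of this lemma---it is quoted verbatim from \cite[Lemma 2]{CoverElGamal:IT79} and invoked by citation---and your enumeration-and-packing argument is precisely the standard proof of that cited result, so it is correct. Note that your $\epsilon$-accounting (one $\epsilon$ from the cardinality bound, one from the bound on $p(\mathbf{s}_3)$, and $2\epsilon$ from each conditional bound) actually yields a slack of $6\epsilon$, giving $2^{-n(I(S_1;S_2|S_3)-6\epsilon)}$, which is at least as strong as, and therefore implies, the stated $7\epsilon$ bound.
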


\subsection{The Capacity of Phase Fading and of Rayleigh Fading MIMO Relay Channels}
    \label{thm:upper_bound}
We now state a slight variation of \cite[Theorem. 8]{Kramer:05} which will be used in this paper:
\begin{theoremB}[{\cite[Theorem. 8]{Kramer:05}}]
    For phase fading and for Rayleigh fading relay channels with multiple antennas, and with Rx-CSI available, the channel inputs
    $\tX_1$ and $\tX_3$ that maximize both the cut-set bound,
    \begin{eqnarray*}
       &  &  \max_{p(\tx_1,\tx_3)}\min\big\{I(\tX_1;\tY_1,\tY_3|\tX_3,\tH_{11},\tH_{31}, \tH_{13}),\\
        &  & \qquad \qquad \qquad \qquad  \qquad  I(\tX_1, \tX_3;\tY_1|\tH_{11},\tH_{31})\big\},
    \end{eqnarray*}
    and the DF rate, $\max_{p(\tx_1,\tx_3)}\min\big\{I(\tX_1;\tY_3|\tX_3,\tH_{13}),$ $I(\tX_1, \tX_3;\tY_1|\tH_{11},\tH_{31})\big\}$,
    are independent complex Normal variables. The best covariance matrix for transmitter $\mbox{Tx}_t$ is
    $\dsQ_{\tX_t} = \sqrt{\frac{P_t}{n_t}} \dsI_{n_t}$, $t = 1,2$, where $\dsI_{n_t}$ is the $n_t\times n_t$ identity matrix.
    DF achieves capacity if its rate is $I(\tX_1,\tX_3;\tY_1)$. The capacity is then given by
    \begin{eqnarray*}
        C_{\mbox{\scriptsize fading relay}} & = & \int_{\th_\mT,1}f(\th_{\mT,1})\log_2 \bigg| \dsI_{l_1} + a_{11}^2\frac{P_1}{n_1}\th_{11}\th_{11}^H\\
    &  & \qquad \qquad\qquad \qquad
                + a_{31}^2\frac{P_3}{n_3}\th_{31}\th_{31}^H\bigg| d\th_{\mT,1},
    \end{eqnarray*}
    where $\th_{\mT,1} \triangleq \big\{ \th_{11}, \th_{31}\big\}$.
\end{theoremB}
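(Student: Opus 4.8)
The plan is to follow the structure of the proof of \cite[Theorem 8]{Kramer:05}, combining the maximum-entropy bound of Lemma \ref{lem:lem0} with the invariance of the fading law under phase rotations, and to carry out a sequence of reductions \emph{in parallel} on the cut-set expression and the DF expression so that the same input distribution is shown to be simultaneously optimal for both. First I would reduce to zero-mean inputs: by Lemma \ref{lem:lem00}, subtracting the means from $(\tX_1,\tX_3)$ leaves every conditional differential entropy, and hence every mutual-information term in either bound, unchanged, while it can only decrease $\E\{|X_t|^2\}$; thus zero-mean inputs are optimal under the power constraints. Next I would establish Gaussianity: conditioned on the Rx-CSI, each channel in \eqref{eqn:ICR_model} is linear in the inputs with additive $\CN(0,1)$ noise, so every term in the cut-set and DF expressions is a channel-output conditional entropy minus a fixed noise entropy. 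For a fixed input covariance, Lemma \ref{lem:lem0} shows that this output entropy is maximized, with equality, by jointly circularly-symmetric complex Normal inputs; since this holds term-by-term, the pointwise minimum defining each bound is also maximized by Gaussian inputs, so I may parametrize the problem by the joint covariance $\dsQ$ of a complex Normal pair $(\tX_1,\tX_3)$.

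The crux is to show that $\dsQ$ may be taken block-diagonal with each block a scaled identity, which yields both the independence of $\tX_1,\tX_3$ and the stated covariance. The key is that the fading law is invariant under independent, per-antenna phase rotations of the coefficient vectors $\tH_{11}$ and $\tH_{31}$, because their phases are uniform on $[0,2\pi)$ and independent of the magnitudes and of each other. Applying rotations $\Phi_1,\Phi_3$ to these coefficients is equivalent, inside each Gaussian mutual-information term, to replacing $\dsQ$ by the congruence $\Phi\dsQ\Phi^H$ with $\Phi=\mathrm{diag}(\Phi_1,\Phi_3)$, while leaving the value of each CSI-averaged term unchanged. Since every such term, and therefore the concave pointwise minimum defining each bound, is a concave function of $\dsQ$ over the p.s.d.\ cone, averaging over independent uniform $\Phi_1,\Phi_3$ and invoking Jensen's inequality shows that $\E_\Phi\{\Phi\dsQ\Phi^H\}$ does at least as well. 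This expectation zeroes the cross-covariance block (independent phases on $\tH_{11}$ versus $\tH_{31}$) and the off-diagonal entries within each block (independent per-antenna phases), forcing independent, diagonal inputs. Finally, since the $n_t$ transmit antennas of node $t$ enter symmetrically (equal magnitudes and exchangeable phases), permutation invariance together with concavity forces equal power allocation, giving $\dsQ_{\tX_t}=\sqrt{P_t/n_t}\,\dsI_{n_t}$.

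It then remains to evaluate the bounds at this optimal input and to identify when DF is tight. With independent scaled-identity Gaussian inputs, the destination-cut term $C\triangleq I(\tX_1,\tX_3;\tY_1|\tH_{11},\tH_{31})$ equals the claimed integral of $\log_2|\dsI + a_{11}^2\frac{P_1}{n_1}\th_{11}\th_{11}^H + a_{31}^2\frac{P_3}{n_3}\th_{31}\th_{31}^H|$ over the CSI. Writing $A\triangleq I(\tX_1;\tY_1,\tY_3|\tX_3,\dots)$ and $B\triangleq I(\tX_1;\tY_3|\tX_3,\tH_{13})$, the cut-set equals $\min\{A,C\}$ and the DF rate equals $\min\{B,C\}$, with $B\le A$ since $(\tY_1,\tY_3)$ is more informative than $\tY_3$ alone. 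Hence when the optimized DF rate equals $C$, we have $C\le B\le A$, so the cut-set also equals $C$; as DF is achievable and the cut-set is an outer bound, $C$ is the capacity.

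I expect the third reduction to be the main obstacle. One must verify the phase-rotation invariance of the CSI-averaged mutual-information terms carefully and, crucially, confirm their joint concavity in $\dsQ$, so that Jensen's inequality applies to the pointwise minimum and not merely to the individual terms; this is also the step where the potential gain of input correlation (coherent beamforming across the source and relay) is shown to vanish under phase fading with only Rx-CSI, which is what makes independent inputs optimal.
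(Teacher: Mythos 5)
Your proposal is correct and follows essentially the same route as the cited proof of \cite[Theorem 8]{Kramer:05} (and as the paper's own Appendix~\ref{app:maximzing dist for cut set bound}, which adapts that argument to the ICRF): reduce to zero-mean Gaussian inputs via the maximum-entropy lemma, then exploit the invariance of the fading law under phase transformations together with concavity of the log-determinant and Jensen's inequality to show that all cross-correlations (and, by antenna symmetry, any non-uniform power allocation) can be removed without loss. Your continuous phase-rotation averaging is a cosmetic variant of the sign-flip-plus-Jensen argument used there (with Wang's p.s.d.-ordering lemma handling the Schur-complement/conditional-covariance terms), and your closing observation that $B\le A$ forces the cut-set bound to collapse to $C$ whenever the DF rate equals $C$ is exactly the intended justification of the final capacity claim.
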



\section {ICRFs in the Very Strong Interference Regime}
\label{sec:FULL FB VSI}
In this section, we consider the ICRF with two noiseless feedback links from the receivers to the relay (see Fig. \ref{fig:ICRF model}) and we characterize the capacity region of ICRFs in the VSI regime.
This result is stated in the following theorem:

\begin{theorem}
\label{thm:ICRF VS}
                Consider the fading ICRF with Rx-CSI. Assume that the channel coefficients are independent in time and independent of each other s.t. their phases are i.i.d. and distributed
                uniformly over $[0,2\pi)$. Let the additive noises be i.i.d. circularly symmetric complex Normal processes, $\CN(0,1)$, and let the sources have power constraints $\E\big\{|X_k|^2\big\} \le P_k$,
                $k\in\{1,2,3\}$. Assume noiseless feedback links from both receivers to the relay (see Fig. \ref{fig:ICRF model}). If
                \begin{subequations}
                \label{eq:ICRF VSI Con}
                \begin{eqnarray}
                \label{eq:first con in VSI con}
                I(X_1,X_3;Y_1|X_2,\tH_1) &\le& I(X_1;Y_2|\tH_2)\\
                \label{eq:first con in VSI con2}
                I(X_2,X_3;Y_2|X_1,\tH_2) &\le& I(X_2;Y_1|\tH_1),
                \end{eqnarray}
                \end{subequations}
                where the mutual information expressions are evaluated with $X_k\sim \CN(0,P_k), k\in\{1,2,3\}$, mutually independent,
                then the capacity region is given by all the nonnegative rate pairs s.t.
                \begin{subequations}
                \label{eq:ICRF_region}
                \begin{eqnarray}
                R_1 &\le& I(X_1,X_3;Y_1|X_2,\tH_1)\\
                R_2 &\le& I(X_2,X_3;Y_2|X_1,\tH_2),
                \end{eqnarray}
                \end{subequations}
                and it is achieved with $X_k\sim \CN(0,P_k), k\in\{1,2,3\}$, mutually independent and with DF strategy at the relay.
\end{theorem}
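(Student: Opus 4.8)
The plan is to prove \eqref{eq:ICRF_region} by a matching converse and achievability, with the VSI conditions \eqref{eq:ICRF VSI Con} entering \emph{only} in the achievability; the converse will be unconditional.

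\textbf{Converse.} For $R_1$ I would start from Fano's inequality and hand $M_2$ to $\Rgood$ as a genie, giving $nR_1 \le I(M_1;Y_1^n|M_2,\tH_1^n)+n\epsilon_n$, where I use that $\tH_1^n$ is independent of the messages. Since $X_2^n$ is a function of $M_2$ and, crucially, $Y_1^n$ depends on the messages only through $(X_1^n,X_2^n,X_3^n,\tH_1^n)$, the Markov chain $(M_1,M_2)\to(X_1^n,X_2^n,X_3^n,\tH_1^n)\to Y_1^n$ lets me pass to $I(X_1^n,X_3^n;Y_1^n|X_2^n,\tH_1^n)$. Here the feedback, which makes $X_3^n$ depend on past outputs and hence on $M_1$, causes no difficulty, precisely because this Markov relation holds regardless of how $X_3$ is generated. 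Writing this as $h(Y_1^n|X_2^n,\tH_1^n)-n\log(\pi e)$ (the negative term collapses to the noise entropy once all inputs and CSI are conditioned on) and single-letterizing by dropping $Y_1^{i-1}$ and the non-current conditioning yields $nR_1 \le \sum_i I(X_{1,i},X_{3,i};Y_{1,i}|X_{2,i},\tH_{1,i})+n\epsilon_n$, and symmetrically for $R_2$.

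\textbf{Gaussianity and decorrelation (the main obstacle).} The crux is to show each per-symbol term is maximized by mutually independent $\CN(0,P_k)$ inputs, even though the feedback permits $X_{3,i}$ to be correlated with $X_{1,i}$. Conditioned on a known $X_{2,i}$, the effective channel $Y_{1,i}-H_{21,i}X_{2,i}=H_{11,i}X_{1,i}+H_{31,i}X_{3,i}+Z_{1,i}$ is exactly a two-input (source plus relay) fading cut, so the maximization over the joint law of $(X_{1,i},X_{3,i})$ is covered by the MIMO-relay variant of \cite{Kramer:05} quoted in Section~\ref{sec:Pre}, whose conclusion is that independent complex Normal inputs are optimal \emph{even when correlated inputs are allowed}. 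To make this mechanism explicit I would bound $h(Y_{1,i}|X_{2,i},\tH_{1,i})$ via the log-determinant of the conditional covariance (Lemma~\ref{lem:lem0}) and parametrize the input cross-covariance through Lemma~\ref{Wang}. The key point is that $X_{1,i},X_{3,i}$ are functions of the messages and of strictly \emph{past} CSI/outputs, hence independent of the current coefficients $\tH_{1,i}$; the cross term $\cov(H_{11,i}X_{1,i},H_{31,i}X_{3,i}|\tH_{1,i})$ therefore carries the factor $H_{11,i}H_{31,i}^*$, whose phase is uniform and independent of its magnitude and of the inputs. Averaging the logarithm over this uniform phase, the map $b\mapsto\E_\phi[\log(c+b\cos\phi)]$ is even and concave, hence maximized at $b=0$, which forces the contribution of $\E\{X_{1,i}X_{3,i}^*\}$ to vanish. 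Concavity of the log-determinant in the per-symbol powers (Jensen, giving equal powers $P_k$) together with the constraints $\E\{|X_{k,i}|^2\}\le P_k$ then gives $R_1\le I(X_1,X_3;Y_1|X_2,\tH_1)$ at mutually independent $\CN(0,P_k)$, and likewise for $R_2$. This decorrelation-under-phase-fading step is where I expect the real work to lie.

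\textbf{Achievability.} I would use block-Markov DF with mutually independent Gaussian codebooks $X_1^n(m_1)$, $X_2^n(m_2)$ and a relay codebook $X_3^n(m_1,m_2)$, decoded at the receivers by backward decoding and analysed with Lemma~\ref{Cover}. Two observations make the scheme attain \eqref{eq:ICRF_region}. First, because the feedback delivers to the relay both receivers' outputs \emph{together with their Rx-CSI}, the relay is at least as informed as either receiver; hence whenever the receivers can decode, so can the relay, and the relay-decoding constraints are subsumed by the receiver-decoding constraints -- this is the extension of the Cover--El Gamal feedback argument \cite{CoverElGamal:IT79} to the ICR. Second, the VSI conditions guarantee that each receiver can also strip the interfering message: for instance \eqref{eq:first con in VSI con2} yields $R_2\le I(X_2,X_3;Y_2|X_1,\tH_2)\le I(X_2;Y_1|\tH_1)$, so $\Rgood$ can decode and remove $m_2$ (treating $X_1,X_3$ as noise) before decoding $m_1$ with the relay's help at rate $I(X_1,X_3;Y_1|X_2,\tH_1)$, and \eqref{eq:first con in VSI con} gives the symmetric statement at $\Rbad$. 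Thus the interfering-message and sum-rate constraints of the underlying MAC at each receiver become inactive, leaving exactly the two bounds in \eqref{eq:ICRF_region}. This matches the converse, and convexity of the resulting rectangular region makes the final convex-hull step immediate.
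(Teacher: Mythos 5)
Your overall architecture --- a cut-set-type converse, decorrelation of the inputs via the uniform-phase/concavity argument, and DF with mutually independent Gaussian codebooks where each receiver first strips the interference treating the rest as noise --- is the same as the paper's, and your converse and Gaussianity/decorrelation steps are sound (your direct Fano derivation is just the cut-set bound done by hand, and your phase-averaging argument is exactly the paper's Appendix~\ref{app:maximzing dist for cut set bound}). The gap is in your first achievability observation: ``the relay is at least as informed as either receiver; hence whenever the receivers can decode, so can the relay, and the relay-decoding constraints are subsumed.'' This does not follow from informativeness alone. $\Rgood$ decodes $m_1$ \emph{with the assistance of the relay's retransmission} (backward decoding over blocks $b$ and $b+1$), which is what supports the rate $I(X_1,X_3;Y_1|X_2,\tH_1)$, whereas the relay must decode $m_{1,b}$ within block $b$ without that assistance, under the constraint $R_1\le I(X_1;Y_1,Y_2,Y_3|X_2,X_3,\utH)$. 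The former can strictly exceed the latter (e.g., phase fading with $a_{31}$ large and $a_{12},a_{13}$ small gives $\log_2(1+a_{11}^2P_1+a_{31}^2P_3)>\log_2(1+P_1(a_{11}^2+a_{12}^2+a_{13}^2))$), so feedback by itself does not subsume the relay constraints; note also that the decoding order is relay first, receiver later, so ``receivers decode $\Rightarrow$ relay decodes'' would be circular if it referred to the receivers' decoding of their own messages.

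What actually closes the argument --- and what your proof is missing --- is the chain
\begin{equation*}
R_1 \le I(X_1,X_3;Y_1|X_2,\tH_1) \le I(X_1;Y_2|\tH_2) \le I(X_1;Y_1,Y_2,Y_3|X_2,X_3,\utH),
\end{equation*}
where the middle inequality is the VSI condition \eqref{eq:first con in VSI con} and the last holds because the independently generated $X_2,X_3$ act as noise in $I(X_1;Y_2|\tH_2)$ while the relay observes $Y_2$ together with the Rx-CSI (this is the paper's \eqref{eq:VSI FB loosened conditions on relay}); the symmetric chain handles the relay's constraint on $R_2$, and the relay's sum-rate constraint must still be verified separately via
\begin{equation*}
I(X_1,X_2;Y_1,Y_2,Y_3|X_3,\utH) \ge I(X_2;Y_1|\tH_1)+I(X_1;Y_2|\tH_2) \ge I(X_2,X_3;Y_2|X_1,\tH_2)+I(X_1,X_3;Y_1|X_2,\tH_1).
\end{equation*}
In short, the relay-decoding constraints become inactive \emph{because of} the VSI conditions combined with the feedback, not because of the feedback alone; with these inequalities inserted your proof coincides with the paper's.
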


\subsection{Proof of Theorem \ref{thm:ICRF VS}}
The proof consists of the following steps:
\begin{itemize}
    \item We obtain an outer bound on the capacity region using the cut-set bound.
    \item We show that the input distribution that maximizes the outer bound is zero-mean, circularly symmetric complex Normal with channel inputs independent of each other and with maximum allowed power.
    \item We derive an achievable rate region based on DF at the relay and by using mutually independent codebooks generated according to the zero-mean, circularly symmetric complex Normal input distribution:
         \begin{itemize}
         \item We derive an achievable rate region for decoding at the relay using steps similar to \cite[Sec. 4.D]{Kramer:05}.
        \item We obtain an achievable rate region for decoding at the destination by decoding the interference first, while treating the relay signal and the desired signal as additive i.i.d. noises, followed by using a backward decoding scheme for decoding the desired message.
        \end{itemize}
    \item We derive the VSI conditions which guarantee that decoding the interference first at each receiver, does not constrain the rate of the other pair.
    \item We conclude that when the VSI conditions hold the achievable region coincides with the cut-set bound.
\end{itemize}
These steps are elaborated in sections \ref{sec:Full FB VSI upper bounds }, \ref{sec:Achievabiliy VSI} and \ref{sec: Full FB CR for VSI}.

\subsubsection{An Outer Bound}
\label{sec:Full FB VSI upper bounds }
The cut-set theorem \cite[Theorem 15.10.1]{cover-thomas:it-book} applied to the ICRF results in the following upper bounds:
\begin{subequations}
\label{eq:cut-set bound}
\begin{eqnarray}
\label{eq:cs bound R11}
\mathcal{S} &\triangleq& \{\mbox{Tx}_1\}, \mathcal{S}^C\triangleq \{\mbox{Tx}_2, \mbox{Relay}, \mbox{Rx}_1, \mbox{Rx}_2\} :\nonumber\\
&  & \quad R_1\le I(X_1;Y_1,Y_2,Y_3|X_2,X_3,\utH)\\
\label{eq:cs bound R12}
\mathcal{S} &\triangleq& \{\mbox{Tx}_1, \mbox{Relay}, \mbox{Rx}_2\},\mathcal{S}^C \triangleq \{\mbox{Tx}_2, \mbox{Rx}_1\} :\nonumber\\
&  & \quad  R_1\le I(X_1,X_3;Y_1|X_2,\tH_1)\\
\label{eq:cs bound R21}
\mathcal{S} &\triangleq& \{\mbox{Tx}_2\}, \mathcal{S}^C \triangleq\{\mbox{Tx}_1, \mbox{Relay}, \mbox{Rx}_1, \mbox{Rx}_2\} :\nonumber\\
&  & \quad  R_2\le I(X_2;Y_1,Y_2,Y_3|X_1,X_3,\utH)\\
\label{eq:cs bound R22}
\mathcal{S} &\triangleq& \{\mbox{Tx}_2, \mbox{Relay}, \mbox{Rx}_1\},\mathcal{S}^C \triangleq \{\mbox{Tx}_1, \mbox{Rx}_2\} :\nonumber\\
&  & \quad  R_2\le I(X_2,X_3;Y_2|X_1,\tH_2)\\
\mathcal{S} &\triangleq& \{\mbox{Tx}_1, \mbox{Tx}_2\}, \mathcal{S}^C \triangleq \{ \mbox{Rx}_1, \mbox{Rx}_2, \mbox{Relay}\} :\nonumber\\
&  & \quad  R_1+R_2\le I(X_1,X_2;Y_1,Y_2,Y_3|X_3,\utH).
\end{eqnarray}
\end{subequations}
Next, we find the channel input distribution that maximizes the cut-set bound. We follow the same approach as in \cite[Proposition 2]{Kramer:05} and
\cite[Theorem 8]{Kramer:05}.
Let $\Xvec$ denote the channel inputs with the maximizing distribution. Note that Lemma \ref{lem:lem00} states that
the zero-mean complex random vector $\Xvec'\triangleq\Xvec-\E\{\Xvec\}$ has the same entropy as $\Xvec$. Hence, the most efficient strategy would be to transmit $\Xvec'$ rather than $\Xvec$, since subtracting
the average reduces the power consumption. {Using the steps detailed in Appendix \ref{app:maximzing dist for cut set bound}}, we conclude that each mutual information expression in \eqref{eq:cut-set bound} is
maximized by (zero-mean) circularly symmetric complex Normal channel inputs, independent of each other, and with the sources transmitting at their maximum available power {\em even though the scenario
consists of a combination of relaying and feedback}.

\begin{remark}
\em{}Note that this conclusion is not immediate from \cite[Theorem 8]{Kramer:05}, since in the ICRF there are two destinations, while the cut-set bound in \cite[Theorem 8]{Kramer:05} considers only
one destination and $T$ transmitting relays. Hence, the conditional entropies in the present case contain more complicated combinations of the correlation coefficients between the channel inputs and thus
each expression needs to be examined individually.
\end{remark}

\begin{remark}
\em{} Note that although the cut-set bound of the ICRF scenario requires maximization over all input distributions of the type $f(x_1)f(x_2)f(x_3|x_1,x_2)$, in  Appendix \ref{app:maximzing dist for cut set bound}
it is shown that $f(x_3|x_1,x_2) = f(x_3)$ is the maximizing distribution at the relay, and that the input distribution is jointly Gaussian (as follows from \cite[Proposition 2]{Kramer:05}).
The intuition behind the mathematical result is that as receivers have Rx-CSI, then the mutual information expressions
involve averaging over all channel coefficients. However, as the phases are all uniformly distributed over $[0,2\pi)$, it follows that for any cross-correlation structure between the channel inputs, the same rate bounds
can be obtained by the negative cross-correlation. Thus the maximum rate is achieved when the cross-correlations are equal to their negatives, and are therefore zero. As the maximizing distribution is
uncorrelated Gaussians, they are also independent. This also reflects the fact that due to the i.i.d. uniform phase of the fading process, it is not possible to correlate the channel codewords of the different
transmitting nodes, leading, due to Gaussianity, to independence.
\end{remark}

\subsubsection{An Achievable Rate Region}
\label{sec:Achievabiliy VSI}
Now we obtain an achievable rate region using the input distribution that maximizes the cut-set bound in \eqref{eq:cut-set bound}.
The achievability is based on DF strategy at the relay. Fix the blocklength $n$ and the input distribution
$f_{X_1,X_2,X_3}(x_1,x_2,x_3)=f_{X_1}(x_1)\cdot f_{X_2}(x_2)\cdot f_{X_3}(x_3)$ where $f_{X_k}(x_k)\sim\CN(0,P_k), k=1,2,3$.
Consider the following coding scheme, in which $B-1$ messages are transmitted using $nB$ channel symbols:

\paragraph{Code Construction}
\label{sec:ICRF Code Book}
For each message $m_k \in \mathcal{M}_k, k\in\{1,2\}$ select a codeword $\xvec_k(m_k)$ according to the p.d.f.
$ f_{\Xvec_k}\big(\xvec_k(m_k)\big) =\prod_{i=1}^n f_{X_k}\big(x_{k,i}(m_k)\big)$.
For each $(m_1,m_2) \in \mathcal{M}_1 \times \mathcal{M}_2 $ select a codeword $\xvec_3(m_1,m_2)$ according to the p.d.f.
$f_{\Xvec_3}\big(\xvec_3(m_1,m_2)\big) =\prod_{i=1}^n f_{X_3}\big(x_{3,i}(m_1,m_2)\big)$.

\paragraph{Encoding at Block $b$}
\label{sec:ICRF Encoding}
At block $b$, Tx$_k$ transmits $m_{k,b}$ using $\xvec_k(m_{k,b}), k\in\{1,2\}$. Let $(\hat{m}_{1,b-1},\hat{m}_{2,b-1})$ denote
the decoded ($m_{1,b-1},m_{2,b-1}$) at block $b-1$ at the relay. At block $b$ the relay transmits $\xvec_3(\hat{m}_{1,b-1},\hat{m}_{2,b-1})$.
At block $b=1$ the relay transmits $\xvec_3(1,1)$, and at block $b=B$, Tx$_1$ and Tx$_2$ transmit $\xvec_1(1)$ and $\xvec_2(1)$, respectively.

\paragraph{Decoding at the Relay at Block $b$}
\label{sec:ICRF relay decoding}
Decoding at the relay is very similar to the MARC case studied in \cite[Sec. 4.D]{Kramer:05}, the difference being that here feedback is available at the relay.
In the present case, the relay uses its knowledge of $\Yvec_1(b),\Yvec_2(b),\Yvec_3(b)$ and $\mathbf{\utH}(b)$ to decode $(m_{1,b},m_{2,b})$ by using a joint-typicality decoder.
The decoder looks for a unique pair, $(m_1,m_2)\in\mathcal{M}_1\times\mathcal{M}_2$ that satisfies:
\begin{eqnarray}
\label{eq:VSI Decoing at Relay Rule}
& &\!\!\!\! \!\!\!\!\!\!    \Big(\xvec_1(m_1),\xvec_2(m_2),\xvec_3(m_{1,b-1},m_{2,b-1}),\yvec_1(b),\yvec_2(b),\nonumber\\
&  &  \yvec_3(b),\mathbf{\uth}(b)\Big) \in A_\epsilon^{(n)}(X_1,X_2,X_3,Y_1,Y_2,Y_3,\utH).
\end{eqnarray}
Following the analysis in \cite[Sec. 4.D]{Kramer:05}, it is concluded that the achievable rate region for decoding at the relay is given by:
\begin{subequations}
\label{eq:VSI Decoing at Relay}
\begin{eqnarray}
 &  &\!\!\!\!\!\!  \!\!\!\!\!\! \!\!\!\!\!\! \R_{\mbox{\scriptsize{Relay Decoding}}}   = \nonumber\\
 &  &\!\!\!\!\!\! \!\!\!\!\!\!    \bigg\{ (R_1, R_2)\in \Rset^2_+: \nonumber\\
 &  & \;R_1  \le  I(X_1;Y_1,Y_2,Y_3|X_2,X_3,\utH)\nonumber\\
&  &   \; R_2  \le  I(X_2;Y_1,Y_2,Y_3|X_1,X_3,\utH)\nonumber\\
    \label{eq:eq36}
&  &   \; R_1 + R_2  \le  I(X_1,X_2;Y_1,Y_2,Y_3|X_3,\utH) \bigg\}.
\end{eqnarray}
\end{subequations}

\paragraph{Decoding at the Destinations at Block $b$}
  The receivers use a backward block decoding method as in \cite[Appendix A]{Kramer:05}. Assume that each receiver has correctly decoded $(m_{1,b+1}, m_{2,b+1})$. Recall that the codebooks are
  generated independently, thus, in order to decode $m_{k,b}, k\in\{1,2\}$ each receiver first decodes the interference, i.e., Rx$_1$ decodes $m_{2,b}$ and Rx$_2$ decodes $m_{1,b}$ by treating
  the signal from the relay and its own desired signal as i.i.d. additive noise, {\em independent of the interfering signal}, which holds by construction of the codebooks and by the i.i.d. channel assumption.
  Note that for this decoding step the channel is treated as a PtP channel,
  the capacity of which is derived in \cite[Ch. 7.1]{cover-thomas:it-book}. Thus, due to Rx-CSI, Rx$_1$ can decode the interference if
        \begin{subequations}
        \label{eq:VSI condition}
        \begin{equation}
        R_2 \le I(X_2;Y_1|\tH_1),
        \end{equation}
        and Rx$_2$ can decode the interference if
        \begin{equation}
        R_1 \le I(X_1;Y_2|\tH_2).
        \end{equation}
        \end{subequations}
        After decoding the interference, each receiver uses its CSI to decode its desired message. We consider the decoding process at Rx$_1$; the decoding process at Rx$_2$ follows the same steps.

\begin{itemize}
    \item Rx$_1$ generates the sets:
    \begin{eqnarray*}
   \!\!\!\!\!\! \!\!\!\!\!\!      \mathcal{E}_{0,b}  &\triangleq& \Big \{m_1\in\mathcal{M}_1: \big(\xvec_1(m_{1,b+1}),\xvec_2(m_{2,b+1}),\\
    & & \;\xvec_3(m_1,\hat{m}_{2,b}) ,\mathbf{y}_1(b+1), \mathbf{\hvec}_1(b+1)\big) \in A_\epsilon^{(n)}\Big\}.\\
  \!\!\!\!\!\! \!\!\!\!\!\!       \mathcal{E}_{1,b}  &\triangleq& \Big\{m_1\in\mathcal{M}_1: \big( \xvec_1(m_1), \xvec_2(\hat{m}_{2,b}),\\
  & & \;\qquad\qquad \qquad \mathbf{y}_1(b),\mathbf{\hvec}_1(b)\big) \in A_\epsilon^{(n)} \Big\}.
    \end{eqnarray*}
    \item Rx$_1$ then decodes $m_{1,b}$ by finding a unique $m_{1}\in \mathcal{E}_{0,b} \cap \mathcal{E}_{1,b}$.
\end{itemize}
    Note that since the codewords are independent of each other, $\mathcal{E}_{0,b}$ is independent of $\mathcal{E}_{1,b}$. Thus, assuming $\hat{m}_{2,b}=m_{2,b}$,
    and using standard joint-typicality arguments \cite[Theorem. 7.6.1]{cover-thomas:it-book}, it follows that
    the probability of decoding error can be made arbitrarily small by taking $n$ large enough as long as
    \begin{subequations}
    \label{eq:VSI R1R2 decoding Bound}
    \begin{eqnarray}
         R_1 & \le & I(X_1;Y_1|X_2,\tH_1)+I(X_3;Y_1|X_1,X_2,\tH_1)\nonumber\\
         & = & I(X_1,X_3;Y_1|X_2,\tH_1),
    \end{eqnarray}
and for decoding at Rx$_2$ we obtain
    \begin{equation}
         \label{eq:VSI R1R2 decoding Bound22}
         R_2 \le I(X_2,X_3;Y_2|X_1,\tH_2).
    \end{equation}
    \end{subequations}
Combining with \eqref{eq:VSI condition} we conclude that subject to reliable decoding at the relay, the achievable rate region for decoding at the destinations is characterized by:
    \begin{subequations}
    \label{eq:VSI Decoing at Destination}
    \begin{eqnarray}
       &  &\!\!\!\!\!\!  \!\!\!\!\!\! \!\!\!\!\!\!   \R'_{\mbox{\scriptsize{Destination Decoding}}}\nonumber\\
       &  &\!\!\!\!\!\!  \!\!\!\!\!\!    =  \bigg\{ (R_1, R_2)\in \Rset^2_+:\nonumber\\
      & &  \;\; R_1 \le  \min\big\{I(X_1,X_3;Y_1|X_2,\tH_1),\nonumber\\
      & & \;\; \qquad\qquad \qquad \qquad I(X_1;Y_2|\tH_2)\big\}\\
      &  &\;\;  R_2 \le \min\big\{I(X_2,X_3;Y_2|X_1,\tH_2),\nonumber\\
      &  & \;\; \qquad\qquad \qquad\qquad I(X_2;Y_1|\tH_1)\big\} \bigg\}.
    \end{eqnarray}
    \end{subequations}
Hence, an achievable rate region is obtained by
\begin{equation}
\label{eq:VSI Full FB achievable region }
    \R_{\mbox{\scriptsize{Achievable}}}=\R_{\mbox{\scriptsize{Relay Decoding}}} \cap \R'_{\mbox{\scriptsize{Destination Decoding}}}.
\end{equation}

\subsubsection{Capacity Region for the Very Strong Interference Regime}
\label{sec: Full FB CR for VSI}
Now we obtain the conditions on the channel coefficients which guarantee that the interference is strong enough s.t. the receivers can decode the interference without reducing the rate region. Combining \eqref{eq:VSI Decoing at Relay} and \eqref{eq:VSI Decoing at Destination} with \eqref{eq:VSI condition}, we obtain the VSI conditions for the ICRF:
    \begin{subequations}
    \label{eq:eq35}
    \begin{eqnarray}
        &  & \!\!\!\!\!\!  \!\!\!\!\!\!\min\big\{I(X_1;Y_1,Y_2,Y_3|X_2,X_3,\utH),\nonumber\\
        &  & \quad I(X_1,X_3;Y_1|X_2,\tH_1)\big\} \le I(X_1;Y_2|\tH_2)\\
        &  & \!\!\!\!\!\!  \!\!\!\!\!\!\min\big\{I(X_2;Y_1,Y_2,Y_3|X_1,X_3,\utH),\nonumber\\
        &  & \quad I(X_2,X_3;Y_2|X_1,\tH_2)\big\} \le I(X_2;Y_1|\tH_1).
    \end{eqnarray}
    \end{subequations}
Thus, when \eqref{eq:eq35} holds the achievable region is given by \eqref{eq:VSI R1R2 decoding Bound} and \eqref{eq:VSI Decoing at Relay}. Note that since the codebooks are independent of each other and of the channel coefficients,
\begin{subequations}
\label{eq:VSI FB loosened conditions on relay}
     \begin{eqnarray}
         I(X_1;Y_2|\tH_2)&=&h(X_1|\tH_2)-h(X_1|Y_2,\tH_2)\nonumber\\
         &=&h(X_1|X_2,X_3,\tH_2)-h(X_1|Y_2,\tH_2)\nonumber\\
         &\le&h(X_1|X_2,X_3,\utH)\nonumber\\
         &  & -h(X_1|Y_1,Y_2,Y_3,X_2,X_3,\utH)\nonumber\\
         \label{eq:eq421}
         &=&I(X_1;Y_1,Y_2,Y_3|X_2,X_3,\utH),
    \end{eqnarray}
we also obtain
\begin{equation}
        \label{eq:eq422}
        I(X_2;Y_1|\tH_1) \le I(X_2;Y_1,Y_2,Y_3|X_1,X_3,\utH).
\end{equation}
\end{subequations}
Hence, the conditions in \eqref{eq:eq35} reduce to
    \begin{subequations}
    \label{eq:VSI Con}
    \begin{eqnarray}
        I(X_1,X_3;Y_1|X_2,\tH_1) &\le& I(X_1;Y_2|\tH_2)\\
        I(X_2,X_3;Y_2|X_1,\tH_2) &\le& I(X_2;Y_1|\tH_1),
    \end{eqnarray}
    \end{subequations}
which give \eqref{eq:ICRF VSI Con}. Next, note that \eqref{eq:VSI FB loosened conditions on relay} and \eqref{eq:VSI Con} imply that the achievable region is characterized by \eqref{eq:VSI R1R2 decoding Bound}
and \eqref{eq:eq36}. However, when \eqref{eq:VSI FB loosened conditions on relay} and \eqref{eq:VSI Con} hold, then
    \begin{eqnarray*}
 & &\!\!\!\! \!\!\!  I(X_1,X_2;Y_1,Y_2,Y_3|X_3,\utH)\\
    & & = I(X_2;Y_1,Y_2,Y_3|X_3,\utH)\\
    &  & \qquad +I(X_1;Y_1,Y_2,Y_3|X_2,X_3,\utH)\\
    & & \ge I(X_2;Y_1|\tH_1)+I(X_1;Y_2|\tH_2)\\
    & & \ge I(X_2,X_3;Y_2|X_1,\tH_2)+I(X_1,X_3;Y_1|X_2,\tH_1).
    \end{eqnarray*}
Therefore, we see that in the VSI regime, the sum-rate condition, \eqref{eq:eq36}, is always satisfied. We conclude that when \eqref{eq:ICRF VSI Con} holds, \eqref{eq:ICRF_region} defines the achievable region.
Finally, note that the rate region characterized by \eqref{eq:ICRF_region} coincides with the cut-set bound in section \ref{sec:Full FB VSI upper bounds } (since \eqref{eq:ICRF_region} is only a subset of the
constraints but it is achievable), hence it is the capacity region of the ICRF in the VSI regime.
\vspace{-1cm}
\begin{flushright}
$\blacksquare$
\end{flushright}
\vspace{-0.5cm}

\subsection{Ergodic Phase Fading}
The capacity region of ICRFs under ergodic phase fading in the VSI regime is characterized explicitly in the following corollary:

\begin{corollary}
\label{Cor:ICRF VSI Phase fading}
            Consider the phase fading ICRF with Rx-CSI and noiseless feedback links from both receivers to the relay, s.t. $y_{1,1}^{i-1},y_{2,1}^{i-1},\hvec_{1,1}^{i-1}$ and $\hvec_{2,1}^{i-1}$ are available at the relay at time $i$. If the channel coefficients satisfy
            \begin{subequations}
            \label{eq:VSPhaseFadingConditions}
            \begin{eqnarray}
            a_{11}^2P_1+a_{31}^2P_3 &\le& \frac{a_{12}^2P_1}{1+a_{22}^2P_2+a_{32}^2P_3}\\
            a_{22}^2P_2+a_{32}^2P_3 &\le& \frac{a_{21}^2P_2}{1+a_{11}^2P_1+a_{31}^2P_3},
            \end{eqnarray}
            \end{subequations}
            then the capacity region is characterized by all the nonnegative rate pairs s.t.
            \begin{subequations}
            \label{eq:VSPhaseFadingICRF_region}
            \begin{eqnarray}
            R_1 &\le& \log_2\big(1+a_{11}^2P_1+a_{31}^2P_3\big)\\
            R_2 &\le& \log_2\big(1+a_{22}^2P_2+a_{32}^2P_3\big),
            \end{eqnarray}
            \end{subequations}
            and it is achieved with $X_k\sim \CN(0,P_k), k\in\{1,2,3\}$, mutually independent and with DF strategy at the relay.
\end{corollary}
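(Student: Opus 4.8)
The plan is to derive the corollary as a direct specialization of Theorem~\ref{thm:ICRF VS} to the phase fading model, so the entire task reduces to evaluating the four mutual information functionals appearing in \eqref{eq:ICRF VSI Con} and \eqref{eq:ICRF_region} under the prescribed input law $X_k\sim\CN(0,P_k)$, $k\in\{1,2,3\}$, mutually independent. The key structural observation I would exploit is that in phase fading the magnitude $a_{lk}$ of every coefficient is deterministic, so conditioning on the Rx-CSI fixes only the phases; since none of the second-order statistics entering the Gaussian mutual informations depend on those phases, every expectation over $\tH_1$ or $\tH_2$ collapses to a constant, and the CSI-averaging becomes trivial.

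First I would evaluate the two region constraints. Conditioning on $(X_2,\tH_1)$ in $I(X_1,X_3;Y_1|X_2,\tH_1)$, the term $H_{21}X_2$ is a known deterministic shift and may be subtracted, leaving the effective channel $H_{11}X_1+H_{31}X_3+Z_1$ with independent inputs $X_1\sim\CN(0,P_1)$, $X_3\sim\CN(0,P_3)$, deterministic gains $a_{11},a_{31}$, and $\CN(0,1)$ noise. The standard Gaussian two-input formula then gives $I(X_1,X_3;Y_1|X_2,\tH_1)=\log_2\big(1+a_{11}^2P_1+a_{31}^2P_3\big)$, and symmetrically $I(X_2,X_3;Y_2|X_1,\tH_2)=\log_2\big(1+a_{22}^2P_2+a_{32}^2P_3\big)$. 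Substituting these into \eqref{eq:ICRF_region} yields exactly \eqref{eq:VSPhaseFadingICRF_region}.

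Next I would evaluate the two VSI mutual informations. In $I(X_1;Y_2|\tH_2)$ the signal is $X_1$ while $H_{22}X_2+H_{32}X_3+Z_2$ plays the role of noise; because the codebooks are mutually independent and Gaussian, this aggregate noise is $\CN\big(0,1+a_{22}^2P_2+a_{32}^2P_3\big)$ and independent of $X_1$, so $I(X_1;Y_2|\tH_2)=\log_2\big(1+\tfrac{a_{12}^2P_1}{1+a_{22}^2P_2+a_{32}^2P_3}\big)$, and analogously $I(X_2;Y_1|\tH_1)=\log_2\big(1+\tfrac{a_{21}^2P_2}{1+a_{11}^2P_1+a_{31}^2P_3}\big)$. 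Inserting these expressions together with the region constraints into the VSI conditions \eqref{eq:ICRF VSI Con}, and invoking the strict monotonicity of $\log_2(\cdot)$ to strip the logarithms and cancel the common additive $1$, the two conditions reduce verbatim to \eqref{eq:VSPhaseFadingConditions}. The claim then follows immediately from Theorem~\ref{thm:ICRF VS}.

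I do not anticipate a genuine obstacle here, since every step is a routine Gaussian capacity computation once Theorem~\ref{thm:ICRF VS} is in hand; the only point requiring care is to justify that the interfering and relay signals may be folded into a single circularly symmetric Gaussian noise of the stated variance. This rests on the independence of the codebooks established in the achievability part of Theorem~\ref{thm:ICRF VS} and on the fact that the deterministic magnitudes render the phase-averaging in the CSI expectation immaterial.
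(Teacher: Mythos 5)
Your proposal is correct and follows essentially the same route as the paper's proof: specialize Theorem \ref{thm:ICRF VS}, evaluate the two region constraints via the (deterministic-magnitude) Gaussian formula, and evaluate the two VSI conditions by folding the relay and desired signals into circularly symmetric Gaussian noise of the stated variance, exactly as the paper does via \eqref{eq:eq6} and its footnote on the Gaussianity of $H_{lk}X_k$ under phase fading. No gaps.
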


\begin{proof}
The proof follows from the expressions of Theorem \ref{thm:ICRF VS}. In order to obtain the conditions on the channel coefficients in \eqref{eq:VSPhaseFadingConditions}, we evaluate $I(X_1,X_3;Y_1|X_2,\tH_1)$ and $I(X_2,X_3;Y_2|X_1,\tH_2)$ using the right-hand side (r.h.s.) of equation \eqref{eq:eq6} in Appendix \ref{app:maximzing dist for cut set bound}. Recall that the channel inputs that maximize these expressions are mutually independent, zero mean, circularly symmetric complex Normal and with maximum power. Thus, we obtain
    \begin{eqnarray*}
    I(X_1,X_3;Y_1|X_2,\tH_1) &=& \log_2(1+a_{11}^2P_1+a_{31}^2P_3)\\
    I(X_2,X_3;Y_2|X_1,\tH_2) &=& \log_2(1+a_{22}^2P_2+a_{32}^2P_3).
    \end{eqnarray*}
    Note that for evaluating the r.h.s. of \eqref{eq:ICRF VSI Con}, $\{H_{31,i}X_{3,i}\}_{i=1}^n$ and $\{H_{11,i}X_{1,i}\}_{i=1}^n$ are treated as additive Gaussian noises\footnote{Recalling $X_1,X_2$ and $X_3$ are i.i.d. circularly symmetric complex Normal RVs with zero mean, their phases are distributed uniformly over $[0,2\pi)$ i.i.d. and independent of each other and of the magnitudes. Under the phase fading model, the channel coefficients have fixed amplitudes and their phases are i.i.d. and distributed uniformly over $[0,2\pi)$. Thus,  $H_{31}X_3,H_{11}X_1,H_{32}X_3$ and $H_{22}X_2$ are mutually independent, zero mean, circularly symmetric complex Normal RVs as well.} at Rx$_1$ and $\{H_{32,i}X_{3,i}\}_{i=1}^n$ and $\{H_{22,i}X_{2,i}\}_{i=1}^n$ are treated as additive Gaussian noises at Rx$_2$. Hence, we obtain
    \begin{eqnarray*}
    I(X_2;Y_1|\tH_1)&=&\log_2\Big(1+\frac{a_{21}^2P_2}{1+a_{11}^2P_1+a_{31}^2P_3}\Big)\\
    I(X_1;Y_2|\tH_2)&=&\log_2\Big(1+\frac{a_{12}^2P_1}{1+a_{22}^2P_2+a_{32}^2P_3}\Big).
    \end{eqnarray*}
Thus \eqref{eq:ICRF VSI Con} results in conditions \eqref{eq:VSPhaseFadingConditions} and \eqref{eq:ICRF_region} results in \eqref{eq:VSPhaseFadingICRF_region}.
\end{proof}

\subsection{Ergodic Rayleigh Fading}
In this section the capacity region of ICRFs under ergodic Rayleigh fading in the VSI regime is characterized. Define $\tU_k \triangleq(U_{kk}, U_{3k}), k\in\{1,2\}$ and define $E_1(x)$ as in \cite[Eqn. 5.1.1]{Abramowitz}:
    \begin{equation*}
        E_1(x)\triangleq\int^{\infty}_x \frac{e^{-t}}{t}dt .
    \end{equation*}

\begin{corollary}
            Consider the Rayleigh fading ICRF with Rx-CSI and noiseless feedback links from both receivers to the relay, s.t. $y_{1,1}^{i-1},y_{2,1}^{i-1},\hvec_{1,1}^{i-1}$ and $\hvec_{2,1}^{i-1}$ are available at the relay at time $i$. If the channel coefficients satisfy
            \begin{subequations}
            \label{eq:VSRayleighFadingConditions}
            \begin{eqnarray}
            \label{eqn:VSI_conds_Rayleigh_Rx1}
            & &\!\!\!\!\!\!\!\! \frac{\frac{ a_{12}^2  P_1 }{1 + a_{22}^2 P_2 + a_{32}^2  P_3}}{e^{ \frac{1 +a_{22}^2  P_2 + a_{32}^2  P_3}{ a_{12}^2  P_1 }}  E_1\left(  \frac{1 + a_{22}^2 P_2 + a_{32}^2  P_3}{ a_{12}^2  P_1 }\right)}\nonumber\\
               & & \qquad \qquad \qquad \ge     (1 + a_{11}^2 P_{1} + a_{31}^2 P_{3}  )\\
            \label{eqn:VSI_conds_Rayleigh_Rx2}
            & &\!\!\!\! \!\!\!\! \frac{\frac{ a_{21}^2  P_2 }{1 +a_{11}^2  P_1 + a_{31}^2  P_3}}{e^{ \frac{1 + a_{11}^2 P_1 + a_{31}^2  P_3}{ a_{21}^2  P_2 }}  E_1\left(  \frac{1 + a_{11}^2 P_1 + a_{31}^2  P_3}{ a_{21}^2  P_2 }\right)}\nonumber\\
               & & \qquad \qquad  \qquad \ge    (1 +  a_{22}^2 P_{2} + a_{32}^2 P_{3}  ),
            \end{eqnarray}
            \end{subequations}
            then the capacity region is characterized by all the nonnegative rate pairs s.t.
            \begin{subequations}
            \label{eq:VSRayleighFadingICRF_region}
            \begin{eqnarray}
            \!\!\!\!\!\!\!\!\!\!\!\!R_1 &\!\le&\! \E_{\tU_1} \!\big\{\log_2(1+a_{11}^2|U_{11}|^2P_1+a_{31}^2|U_{31}|^2P_3)\big\}\\
            \!\!\!\!\!\!\!\!\!\!\!\!R_2 &\!\le&\! \E_{\tU_2} \!\big\{\log_2(1+a_{22}^2|U_{22}|^2P_2+a_{32}^2|U_{32}|^2P_3)\big\},
            \end{eqnarray}
            \end{subequations}
            and it is achieved with $X_k\sim \CN(0,P_k), k\in\{1,2,3\}$, mutually independent and with DF strategy at the relay.
\end{corollary}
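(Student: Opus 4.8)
The plan is to obtain this result as a direct specialization of Theorem~\ref{thm:ICRF VS} to the Rayleigh fading model, exactly as the phase fading case of Corollary~\ref{Cor:ICRF VSI Phase fading} was obtained; the only new ingredient is the evaluation of the ergodic mutual information terms when the fading magnitudes are Rayleigh, i.e.\ when $|U_{lk}|^2$ is exponentially distributed with unit mean. With the maximizing inputs $X_k\sim\CN(0,P_k)$, $k\in\{1,2,3\}$, mutually independent (optimality of this distribution being inherited from Theorem~\ref{thm:ICRF VS} and Appendix~\ref{app:maximzing dist for cut set bound}), four quantities must be computed: the two ``direct'' terms $I(X_1,X_3;Y_1|X_2,\tH_1)$ and $I(X_2,X_3;Y_2|X_1,\tH_2)$, which yield the rate region \eqref{eq:VSRayleighFadingICRF_region}, and the two ``interference-decoding'' terms $I(X_1;Y_2|\tH_2)$ and $I(X_2;Y_1|\tH_1)$, which, combined with the direct terms through the VSI inequalities \eqref{eq:ICRF VSI Con}, yield \eqref{eq:VSRayleighFadingConditions}.

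First I would evaluate the direct terms. Conditioning on the Rx-CSI $\tH_1$, the term $H_{21}X_2$ is known once $X_2$ is given and is removed, so the channel reduces to a point-to-point Gaussian channel with known gains $H_{11},H_{31}$ and independent Gaussian inputs, giving
\begin{equation*}
I(X_1,X_3;Y_1|X_2,\tH_1)=\E_{\tU_1}\big\{\log_2\big(1+a_{11}^2|U_{11}|^2P_1+a_{31}^2|U_{31}|^2P_3\big)\big\},
\end{equation*}
which is exactly the first bound in \eqref{eq:VSRayleighFadingICRF_region}; the second follows by symmetry. This part is exact, so the rate region of the corollary is the ergodic capacity region with no approximation.

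The interference-decoding terms are the crux. For $I(X_1;Y_2|\tH_2)$, the relay and desired signals $H_{22}X_2+H_{32}X_3+Z_2$ are treated, given $\tH_2$, as circularly symmetric Gaussian noise of \emph{random} power $1+a_{22}^2|U_{22}|^2P_2+a_{32}^2|U_{32}|^2P_3$, so the exact expression is a threefold fading expectation with no clean closed form. The idea is therefore not to match the VSI inequality of Theorem~\ref{thm:ICRF VS} exactly, but to replace it by a \emph{sufficient} condition built from two one-sided bounds. I would upper-bound the direct term by Jensen's inequality (concavity of $\log$),
\begin{equation*}
I(X_1,X_3;Y_1|X_2,\tH_1)\le \log_2\big(1+a_{11}^2P_1+a_{31}^2P_3\big),
\end{equation*}
and lower-bound the interference term in two steps. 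Writing $\beta\triangleq \frac{a_{12}^2P_1}{1+a_{22}^2P_2+a_{32}^2P_3}$ and $V\triangleq|U_{12}|^2$: (i) since the per-realization rate is convex in the interfering-plus-relay power, Jensen lets that power be replaced by its mean $a_{22}^2P_2+a_{32}^2P_3$, giving the lower bound $\E_V\{\log_2(1+\beta V)\}$; (ii) applying Jensen to the convex map $-\log_2(\cdot)$ gives $\E_V\{\log_2(1+\beta V)\}\ge \log_2\big(1/\E_V\{(1+\beta V)^{-1}\}\big)$, where $\E_V\{(1+\beta V)^{-1}\}=\beta^{-1}e^{1/\beta}E_1(1/\beta)$, so that
\begin{equation*}
I(X_1;Y_2|\tH_2)\ge \log_2\Big(\frac{\beta}{e^{1/\beta}E_1(1/\beta)}\Big).
\end{equation*}
Combining the two bounds, the requirement $1+a_{11}^2P_1+a_{31}^2P_3\le \beta/\big(e^{1/\beta}E_1(1/\beta)\big)$ implies the VSI inequality \eqref{eq:first con in VSI con}, and this is precisely \eqref{eqn:VSI_conds_Rayleigh_Rx1}; \eqref{eqn:VSI_conds_Rayleigh_Rx2} follows by exchanging the roles of the two pairs.

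The main obstacle lies in this last step: producing the exponential-integral closed form while keeping the resulting inequality a genuine \emph{sufficient} condition for the exact VSI condition of Theorem~\ref{thm:ICRF VS}. This hinges on the correct orientation of each Jensen step---an upper bound on the decoded direct rate and a matched lower bound on the interference-decoding rate---together with recognizing the identity $\int_0^\infty(1+\beta v)^{-1}e^{-v}\,dv=\beta^{-1}e^{1/\beta}E_1(1/\beta)$ (equivalently $\int_0^\infty\log_2(1+\beta v)e^{-v}\,dv=\frac{1}{\ln2}e^{1/\beta}E_1(1/\beta)$). One should also emphasize that, unlike the phase fading case, the VSI conditions \eqref{eq:VSRayleighFadingConditions} are sufficient rather than necessary-and-sufficient, since both mutual information terms have been bounded; the rate region \eqref{eq:VSRayleighFadingICRF_region}, by contrast, is exact.
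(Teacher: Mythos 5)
Your proposal is correct and follows essentially the same route as the paper: the corollary is a specialization of Theorem~\ref{thm:ICRF VS} in which the direct terms are evaluated exactly (giving \eqref{eq:VSRayleighFadingICRF_region}) and the interference-decoding terms on the r.h.s.\ of \eqref{eq:ICRF VSI Con} are bounded via the $E_1$ function, the paper deferring that calculation to \cite{Dabora:10}. Your reconstruction of the bounding steps --- Jensen upward on the direct term, Jensen twice downward on the interference term together with $\int_0^\infty(1+\beta v)^{-1}e^{-v}\,dv=\beta^{-1}e^{1/\beta}E_1(1/\beta)$ --- reproduces exactly the conditions \eqref{eq:VSRayleighFadingConditions} as sufficient (not necessary) conditions, which is precisely what the paper intends.
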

\begin{proof}
The proof follows the same approach as in Corollary \ref{Cor:ICRF VSI Phase fading}. The detailed calculation of \eqref{eq:VSRayleighFadingConditions} can be found in \cite{Dabora:10}. Recall that for Rayleigh fading, the channel coefficients are complex Normal RVs. Thus, for decoding the interference at the receivers, $H_{31}X_3,H_{11}X_1,H_{32}X_3$ and $H_{22}X_2$ cannot be treated as additive Gaussian noises. Hence, the mutual information expressions on the r.h.s. of \eqref{eq:ICRF VSI Con} need to be bounded using the $E_1$ function.
\end{proof}

\subsection{Comments}
\begin{remark}
\label{rem:Comparing_FB_with_NFB_VSI}
\em{}

    We now compare the capacity region of the ICRF in VSI to the ICR without feedback in VSI.
        First, consider the phase fading model, and define the set of coefficients $\mD_{PF}$ as follows:
        \begin{subequations}
    \label{eqns:decode_at_relay}
    \begin{eqnarray}
   & &\!\!\!\!\!\!\!\!\!\! \!\!\!\!\!\!\!\!\!\! \mD_{PF}  \triangleq  \Big\{ (a_{31}, a_{13}, a_{32}, a_{23}) \in \setR_+^4:\nonumber\\
             \label{eqn:PF_VSI_cond_1}
    & &    \qquad \qquad    a_{11}^2P_1 + a_{31}^2 P_3  \le  a_{13}^2 P_1\\
             \label{eqn:PF_VSI_cond_2}
    & &    \qquad \qquad    a_{22}^2P_2 + a_{32}^2 P_3  \le  a_{23}^2 P_2\\
             \label{eqn:PF_VSI_cond_3}
& &             \!\!\!\!\!\!\!(1\! + a_{11}^2P_1\! + a_{31}^2 P_3) (1\! + a_{22}^2P_2\! +\! a_{32}^2 P_3) \nonumber\\
             & &\qquad\qquad\qquad  \le  1\! +\! a_{13}^2 P_1\! +\! a_{23}^2 P_2\Big\}.
    \end{eqnarray}
    \end{subequations}
    In \cite[Theorem 1]{Dabora:101} it is shown that when the channel coefficients satisfy \eqref{eq:VSPhaseFadingConditions}
     and  $(a_{31}, a_{13}, a_{32}, a_{23}) \in \mD_{PF}$, then capacity of the ICR is given by \eqref{eq:VSPhaseFadingICRF_region}.
    Note that $(a_{31}, a_{13}, a_{32}, a_{23}) \in \mD_{PF}$ implies that the links from the transmitters to the relay are good in the sense that if a rate pair can be reliably decoded
    at the destinations, then, it can also be reliably decoded at the relay.
    Observe that feedback does not affect the rate constraints \eqref{eq:VSPhaseFadingICRF_region}, thus, when capacity is achieved without any feedback, then additional  feedback links from each receiver to the relay
    do not enlarge the capacity region. Hence, the main benefit of feedback to the relay is that it allows to achieve capacity in VSI {\em for any quality of links from the transmitters to the relay}.
    Therefore, the set of channel coefficients for which capacity is achieved is defined only by \eqref{eq:VSPhaseFadingConditions} without the additional restrictions of $\mD_{PF}$.

    We also note that for the ICRF, when \eqref{eq:VSPhaseFadingConditions} holds, the cut-set bound is given by
    \eqref{eq:VSPhaseFadingICRF_region}. Consider next the ICR in which \eqref{eq:VSPhaseFadingConditions} holds yet $(a_{31}, a_{13}, a_{32}, a_{23}) \notin \mD_{PF}$.
    Taking $a_{13}, a_{23} \rightarrow 0$, we eventually obtain that
    the cut-set bound (see \cite[Eqn. (C.1)]{Dabora:10}) is a subset of \eqref{eq:VSPhaseFadingICRF_region}. For such scenarios, feedback {\em enlarges} the capacity region compared to the no-feedback case.
    Similar conclusions hold also for Rayleigh fading.
\end{remark}
\begin{remark}
\label{rem:fig:Commnet_on_VSI-FULLNO}
\em{} 

        Fig. \ref{fig:VSI-FULLNO} shows the position of the relay in a 2D-plane in which the VSI conditions,  \eqref{eq:ICRF VSI Con}, are satisfied for the phase
        fading scenario with $P_1=P_2=10, P_3=3$. For phase fading  \eqref{eq:ICRF VSI Con} are evaluated to be \eqref{eq:VSPhaseFadingConditions} .
        Each channel coefficient $a_{ij}$ is related to the distance $d_{ij}$ from node $i$ to node $j$ via $a_{ij}=\frac{1}{d_{ij}^2}$, and hence the path-loss exponent is $4$, corresponding to the two-ray propagation model.
        The locations of the transmitters and the receivers are fixed, thus the corresponding channel coefficients are fixed to be
         $a_{11}=a_{22}=0.18$ and $a_{12}=a_{21}=0.25$. Note that indeed the cross-links are stronger than the direct links.

        From the figure we observe that with feedback, the VSI conditions \eqref{eq:VSPhaseFadingConditions} hold (hence, the capacity region is known) in both
        the black and the gray areas, while without feedback, the conditions \cite[Eqns. (8) and (9)]{Dabora:101}
        hold only in the black area, thus capacity is achieved with DF only in that area. This clearly shows the benefits of feedback.
        Note that without feedback, the relay has to be close to the transmitters and far enough from the destinations, to satisfy the conditions \cite[Eqns. (8) and (9)]{Dabora:101}.
        This is because the signal received from the relay should not increase too much the noise level when decoding the interference first, and it also should not increase too much the rate of the desired information.
        This is needed in order to make sure that the unintended receiver can decode its interference based only on the cross-link signal component, while the desired message and the relay signal are treated as noises.
\end{remark}
\begin{figure}[h]
    \centering
    \includegraphics[scale=0.6]{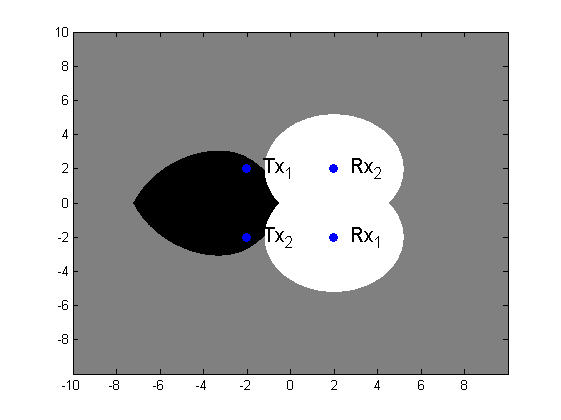}
\caption{\small The geographical position in the 2D-plane in which the VSI conditions hold for the ICR subject to
        phase fading. The black region shows the location of the relay for ICRs without feedback in which DF at the relay achieves capacity at VSI. The union of the black and gray regions shows the location of the
        relay for ICRFs in which DF at the relay achieves capacity at VSI. The scenario parameters are detailed in Comment \ref{rem:fig:Commnet_on_VSI-FULLNO}.}
\label{fig:VSI-FULLNO}
\end{figure}

\begin{remark}
\em{Note from \eqref{eq:ICRF_region} that in the VSI regime, the ICRF behaves like \em{two parallel relay channels}\em{}.}
\end{remark}

\begin{remark}
\em{Although in practice there is only one relay node, it is \em{simultaneously optimal}\em{} for both ``parallel relay channels" s.t. capacity is achieved in both
simultaneously. From a practical aspect, this observation gives a strong motivation to employ a combination of relaying and feedback in wireless networks since a relatively
small number of relay stations can optimally assist several nodes simultaneously.}
\end{remark}

\begin{remark}
\em{Note that since the capacity achieving channel inputs are mutually independent, adding relay nodes to the existing wireless networks \em{does not require any modifications}\em{} in the
transmitters codebooks. Hence, these techniques (relaying with feedback) can be incorporated into current designs in a relatively simple manner.}
\end{remark}

\begin{remark}
\label{rem:Impact_of_no_CSI_in FB}
\em{}
    We now discuss the implication of having feedback of only channel outputs without CSI.
    Recall that in Comment \ref{rem:Comparing_FB_with_NFB_VSI} it is noted that, since feedback includes Rx-CSI as well as channel outputs, then, when feedback from
    both destinations to the relay is available, we can  employ the DF scheme to obtain a characterization of the capacity region for any quality of links from the transmitters to the relay.
    When feedback does not include CSI from the receivers, then decoding the sources' messages at the relay leads to additional restrictions on the channel coefficients,
    which are needed in order to arrive to a capacity characterization using DF.
    These restrictions decrease the set of channel coefficients for which the capacity region of the ICRF is achieved by the DF scheme.
    It should be emphasized that when the channel coefficients satisfy the additional restrictions, then the SI/VSI conditions are the same as those obtained with feedback that includes both Rx-CSI as well as channel output,
    and so are the rate constraints.
\end{remark}

\begin{remark}
\em{}Note that the capacity result in Theorem \ref{thm:ICRF VS} holds also when there is no independent receiver at the relay, i.e., when $Y_3 = \varnothing$,
in the VSI regime. This observation holds only in scenarios where there are two noiseless feedback links, one from each receiver to the relay and not with
partial feedback at the relay which will be studied in section \ref{sec:Partial FB}. This is because the feedback turns each component relay channel into a degraded channel in the sense of \cite{CoverElGamal:IT79}.
In the next sections \ref{sec:FB to tx}, \ref{sec:Partial FB}, where we consider feedback to the transmitters and partial feedback, degradedness does not occur.
\end{remark}


\section {ICRFs in the Strong Interference Regime}
\label{sec:FULL FB SI}
In this section, we characterize the capacity region of ICRFs in the SI regime. We consider two noiseless feedback links, one from each receiver to the relay. This capacity region is characterized in the following theorem:

\begin{theorem}
\label{thm:ICRF SI}
                Consider the fading ICRF with Rx-CSI. Assume that the channel coefficients are independent in time and independent of each other s.t. their phases are i.i.d. and distributed uniformly over $[0,2\pi)$. Let the additive noises be i.i.d. circularly symmetric complex Normal processes, $\CN(0,1)$, and let the sources have power constraints $\E\big\{|X_k|^2\big\} \le P_k$, $k\in\{1,2,3\}$. Assume noiseless feedback links from both receivers to the relay. If
                \begin{subequations}
                \label{eq:SI Conditions}
                \begin{eqnarray}
                \label{eq:first con in SI con}
                I(X_1,X_3;Y_1|X_2,\tH_1) &\le& I(X_1;Y_2|X_2,\tH_2)\\
                I(X_2,X_3;Y_2|X_1,\tH_2) &\le& I(X_2;Y_1|X_1,\tH_1),
                \end{eqnarray}
                \end{subequations}
                where the mutual information expressions are evaluated with $X_k\sim \CN(0,P_k), k\in\{1,2,3\}$, mutually independent,
                then the capacity region is given by all the nonnegative rate pairs s.t.
                \begin{subequations}
                \label{eq:SI_region}
                \begin{eqnarray}
                R_1 &\le& I(X_1,X_3;Y_1|X_2,\tH_1)\\
                R_2 &\le& I(X_2,X_3;Y_2|X_1,\tH_2)\\
                R_1+R_2&\le& \min \big\{I(X_1,X_2,X_3;Y_1|\tH_1),\nonumber\\
                & & \qquad \;\;\;\;\;I(X_1,X_2,X_3;Y_2|\tH_2)\big\},
                \end{eqnarray}
                \end{subequations}
                and it is achieved with $X_k\sim \CN(0,P_k), k\in\{1,2,3\}$, mutually independent and with DF strategy at the relay.
\end{theorem}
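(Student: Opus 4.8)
The plan is to follow the three-part template of the proof of Theorem~\ref{thm:ICRF VS}: a cut-set outer bound, identification of the maximizing input distribution, and a decode-and-forward (DF) achievability argument. The essential difference from the VSI case is that under strong interference a receiver can no longer strip the interference while treating its own signal as noise, so the relay channels no longer decouple; instead each receiver must \emph{jointly} decode the message pair, and the channel at each destination behaves like a multiple-access channel (MAC). First I would reuse verbatim the cut-set computation of section~\ref{sec:Full FB VSI upper bounds }: bounds \eqref{eq:cs bound R12} and \eqref{eq:cs bound R22} give the two individual constraints $R_1\le I(X_1,X_3;Y_1|X_2,\tH_1)$ and $R_2\le I(X_2,X_3;Y_2|X_1,\tH_2)$, and the maximizing-distribution argument of Appendix~\ref{app:maximzing dist for cut set bound} again forces $X_k\sim\CN(0,P_k)$, mutually independent and at full power (this step is insensitive to the interference regime). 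These facts establish the first two lines of \eqref{eq:SI_region}.

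The new content of the converse is the sum-rate bound $R_1+R_2\le\min\{I(X_1,X_2,X_3;Y_1|\tH_1),I(X_1,X_2,X_3;Y_2|\tH_2)\}$, which is strictly tighter than the cut-set sum constraint in \eqref{eq:cut-set bound} and therefore cannot be obtained from the cut-set bound alone. Here I would invoke the classical strong-interference converse of \cite{Sato:81}: one shows that the single-letter SI conditions \eqref{eq:SI Conditions} imply that reliable decoding of $M_2$ at $\Rbad$ forces reliable decoding of $M_2$ at $\Rgood$ once $M_1$ is revealed, and symmetrically. Granting this, $\Rgood$ decodes the whole pair, so by Fano's inequality $n(R_1+R_2)\le I(M_1,M_2;Y_1^n|\tH_1^n)+n\epsilon_n$; since $X_{3,i}$ is a deterministic function of the feedback and $Y_{1,i}$ is memoryless given the current inputs and $\tH_{1,i}$, the data-processing and chain-rule steps single-letterize this to $n\,I(X_1,X_2,X_3;Y_1|\tH_1)$, and the symmetric argument at $\Rbad$ yields the second term of the minimum.

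For achievability I would keep the DF code construction, encoding, and relay-decoding analysis of section~\ref{sec:Achievabiliy VSI} unchanged, so that the relay-decoding region is again \eqref{eq:VSI Decoing at Relay}. The destination step is where the SI proof departs: instead of decoding the interference first, each receiver runs a \emph{joint} backward-decoding MAC decoder for $(m_{1,b},m_{2,b})$, using both the current block and the subsequent relay block that carries those indices. Standard MAC joint-typicality analysis (as in \cite{cover-thomas:it-book}) then yields at $\Rgood$ the region $R_1\le I(X_1,X_3;Y_1|X_2,\tH_1)$, $R_2\le I(X_2,X_3;Y_1|X_1,\tH_1)$, $R_1+R_2\le I(X_1,X_2,X_3;Y_1|\tH_1)$, together with the mirror region at $\Rbad$; the achievable destination region is their intersection, which is then intersected with \eqref{eq:VSI Decoing at Relay}.

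It remains to verify that this intersection collapses to \eqref{eq:SI_region} under \eqref{eq:SI Conditions}. The SI conditions make the cross-constraints slack: from \eqref{eq:first con in SI con}, $I(X_1,X_3;Y_1|X_2,\tH_1)\le I(X_1;Y_2|X_2,\tH_2)\le I(X_1,X_3;Y_2|X_2,\tH_2)$, so $\Rgood$'s own-rate constraint dominates the $R_1$ constraint at $\Rbad$, and the symmetric chain handles $R_2$ and the interference constraints. For the relay region I would establish the SI analog of \eqref{eq:VSI FB loosened conditions on relay}, namely $I(X_1;Y_2|X_2,\tH_2)\le I(X_1;Y_1,Y_2,Y_3|X_2,X_3,\utH)$ (valid because the relay observes all three outputs with full CSI while the codebooks are independent), and combine it with \eqref{eq:SI Conditions} to show that both the individual and the sum relay constraints of \eqref{eq:VSI Decoing at Relay} exceed the corresponding destination constraints, so the relay region is non-binding. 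What survives is exactly \eqref{eq:SI_region}; being achievable and contained in the outer bound, it is the capacity region. I expect the main obstacle to be the converse sum-rate, specifically lifting the single-letter SI conditions \eqref{eq:SI Conditions}, which are stated only for the Gaussian maximizing input, to the $n$-letter statement that each receiver can decode both messages, while correctly handling the relay's dependence on the CSI-carrying feedback.
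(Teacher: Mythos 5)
Your achievability argument is, modulo packaging, the paper's: the paper routes it through the two ``enhanced MARCs'' (Appendix~\ref{App:MARC+feedback - capacity region} and Proposition~\ref{pro:same strategy}), whose capacity regions \eqref{eq:EMARC1}--\eqref{eq:EMARC2} are exactly your two per-destination joint-MAC backward-decoding regions intersected with the relay-decoding region, achieved by one common DF code; your reduction of that intersection to \eqref{eq:SI_region} under \eqref{eq:SI Conditions} also matches the paper's ``Simplification of the Capacity Region'' step, and your individual-rate cut-set bounds with the independent-Gaussian maximizing distribution are the paper's as well. The divergence --- and the gap --- is the converse sum-rate. You propose to import Sato's strong-interference converse: show that \eqref{eq:SI Conditions} implies the $n$-letter statement that each receiver can reliably decode both messages, then apply Fano and single-letterize $I(M_1,M_2;Y_1^n|\tH_1^n)$. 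You flag the lifting step as ``the main obstacle'' but do not supply it, and it is not a routine adaptation: the conditions \eqref{eq:SI Conditions} are evaluated only at the mutually independent Gaussian input, whereas a Sato-type argument needs an inequality of the form $I(X_1;Y_2|X_2)\ge I(X_1;Y_1|X_2)$ to hold for the (arbitrary) $n$-letter input distribution induced by an arbitrary code; here the relay input is a causal function of CSI-bearing feedback from both receivers, the channel is fading with only local Rx-CSI at each destination, and no such distribution-free inequality is established. The paper explicitly avoids this route (see the Comment at the end of section~\ref{sec: SI, CONVERSE full FB}: the argument ``uses only local Rx-CSI, as opposed to the argument of Sato \cite{Sato:81}'').

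What the paper does instead is take $\mathcal{C}_{ICRF}(\uavec)\subseteq\mathcal{C}_{\scriptsize{\mbox{EMARC$_1$}}}(\uavec)\cap\mathcal{C}_{\scriptsize{\mbox{EMARC$_2$}}}(\uavec)$ as the defining property of the SI regime, and then derive \eqref{eq:SI Conditions} as sufficient conditions for this requirement to be costless: the maximal individual rates (attained by the independent Gaussian DF codebooks) must not exceed the rate $R_k'$ at which the opposite receiver, after decoding and stripping its own message, can decode the interference while treating the relay codeword as independent additive noise. The sum-rate constraint in \eqref{eq:SI_region} then falls out of the EMARC cut $\{\mbox{Tx}_1,\mbox{Tx}_2,\mbox{Relay},\mbox{Rx}_2\}$ versus $\{\mbox{Rx}_1\}$ --- a cut that is valid for the EMARC but not for the ICRF, which is exactly why, as you correctly observe, it cannot be extracted from the ICRF cut-set bound \eqref{eq:cut-set bound}. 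If you wish to keep your Fano-based route you must actually prove the $n$-letter decodability transfer under \eqref{eq:SI Conditions}, handling the relay's dependence on the feedback; as written, that step is asserted rather than proved, and the entire sum-rate converse hangs on it.
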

\subsection{Proof of Theorem \ref{thm:ICRF SI}}
The proof consists of the following steps:
\begin{itemize}
    \item From the ICRF we obtain the enhanced MARC (EMARC) as a MARC whose message destination is one of the destinations of the ICRF, but the relay receives feedback from both receivers.
    Therefore, EMARC$_1$ is defined by equations \eqref{eqn:ICR_model} and its receiver is Rx$_1$ and EMARC$_2$ is defined by equations \eqref{eqn:ICR_model} and its
    receiver is Rx$_2$.\footnote{
            Note that this definition is different from the usual definition of MARC, since in the present scenario feedback comes from both receivers but only one receiver
            is decoding. Thus, for each EMARC $(y_{1,1}^{i-1},y_{2,1}^{i-1},\hvec_{1,1}^{i-1},\hvec_{2,1}^{i-1})$ denotes the available feedback at the relay at time $i$ prior to the transmission of the $i$'th symbol.
        }
    \item We derive the capacity region of EMARC$_1$ and EMARC$_2$.
    \item We show that the \em{same} \em{coding strategy at the sources and at the relay achieves capacity for both EMARCs} \em{simultaneously}\em{.}
    \begin{itemize}
        \item We therefore obtain an achievable rate region for the ICRF as the intersection of capacity regions of EMARC$_1$ and EMARC$_2$.
    \end{itemize}
    \item We show that in the SI regime the intersection of the capacity regions of EMARC$_1$ and EMARC$_2$ contains the capacity region of the ICRF .
    \item We characterize the SI conditions for the ICRF.
    \item We conclude the capacity region of ICRF in the SI regime is equal to the intersection of the capacity regions of EMARC$_1$ and EMARC$_2$.
\end{itemize}
The first three steps are detailed in section \ref{sec: Full FB SI Achievable region} and the last three steps are detailed in section \ref{sec: SI, CONVERSE full FB}.
\subsubsection{An Achievable Rate Region}
\label{sec: Full FB SI Achievable region}
Define $\uavec \triangleq(a_{11},a_{21},a_{31},a_{12},a_{22},a_{32},a_{13},a_{23})$. Let $y_{1,1}^{i-1},y_{2,1}^{i-1},\hvec_{1,1}^{i-1},\hvec_{2,1}^{i-1}$ denote the available feedback at the relay at time
    $i$ in EMARC$_1$ and  EMARC$_2$ and let $\mathcal{C}_{\scriptsize{\mbox{EMARC$_1$}}}(\uavec)$ and $\mathcal{C}_{\scriptsize{\mbox{EMARC$_2$}}}(\uavec)$ denote their capacity region,
    respectively. Let $t_k(R_1,R_2)$ denote the coding strategy (codebooks, encoders and decoders) for EMARC$_k$ that achieves rate pair
    $(R_1,R_2)$. The capacity regions of the EMARCs are shown in Appendix \ref{App:MARC+feedback - capacity region} to be:
\begin{subequations}
    \label{eq:EMARC1}
    \begin{eqnarray}
     & & \!\!\!\!\!\!\!\!\!\!\!\!   \mathcal{C}_{\scriptsize{\mbox{EMARC$_1$}}}(\uavec)\nonumber\\
     & & \!\!\!\! \!\!\!\! =  \bigg\{ (R_1, R_2)\in \Rset^2_+:\nonumber\\
        \label{eq:EMARC1 capacity R1}
     & & \;\;  R_1 \le \min\big\{I(X_1;Y_1,Y_2,Y_3|X_2,X_3,\utH),\nonumber\\
     &  & \qquad \qquad \qquad I(X_1,X_3;Y_1|X_2,\tH_1)\big\}\\
        \label{eq:EMARC1 capacity R2}
     & & \;\; R_2 \le \min\big\{I(X_2;Y_1,Y_2,Y_3|X_1,X_3,\utH),\nonumber\\
     & & \qquad \qquad \qquad I(X_2,X_3;Y_1|X_1,\tH_1)\big\}\\
     & &  \;\; R_1+R_2 \le \min\big\{I(X_1,X_2;Y_1,Y_2,Y_3|X_3,\utH),\nonumber\\
     &  & \qquad \qquad \qquad I(X_1,X_2,X_3;Y_1|\tH_1)\big\} \bigg\}
    \end{eqnarray}
\end{subequations}
\begin{subequations}
    \label{eq:EMARC2}
    \begin{eqnarray}
& & \!\!\!\!\!\!\!\!\!\!\!\!        \mathcal{C}_{\scriptsize{\mbox{EMARC$_2$}}}(\uavec)  \nonumber\\
     & & \!\!\!\! \!\!\!\! =  \bigg\{ (R_1, R_2)\in \Rset^2_+:\nonumber\\
        \label{eq:EMARC2 capacity R1}
     & & \;\;  R_1 \le  \min\big\{I(X_1;Y_1,Y_2,Y_3|X_2,X_3,\utH),\nonumber\\
     &  & \qquad \qquad \qquad I(X_1,X_3;Y_2|X_2,\tH_2)\big\}\\
        \label{eq:EMARC2 capacity R2}
     & & \;\; R_2 \le \min\big\{I(X_2;Y_1,Y_2,Y_3|X_1,X_3,\utH),\nonumber\\
     & & \qquad \qquad \qquad I(X_2,X_3;Y_2|X_1,\tH_2)\big\}\\
     & &  \;\; R_1+R_2 \le \min\big\{I(X_1,X_2;Y_1,Y_2,Y_3|X_3,\utH),\nonumber\\
     & & \qquad \qquad \qquad I(X_1,X_2,X_3;Y_2|\tH_2)\big\} \bigg\},
    \end{eqnarray}
\end{subequations}
where $X_k\sim \CN(0,P_k), k\in\{1,2,3\}$ are mutually independent and DF is used at the relay.

Next, we have the following proposition:
\begin{proposition}
\label{pro:same strategy}
The same coding strategy at the sources and at the relay achieves capacity for both EMARCs simultaneously, i.e.,
\begin{equation}
    \mathcal{C}_{\scriptsize{\mbox{EMARC$_1$}}}(\uavec) \cap \mathcal{C}_{\scriptsize{\mbox{EMARC$_2$}}}(\uavec) \subseteq \mathcal{C}_{ICRF}(\uavec).
\end{equation}
\end{proposition}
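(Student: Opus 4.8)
The plan is to exploit the fact that the achievability direction of the EMARC capacity theorem (established in Appendix \ref{App:MARC+feedback - capacity region}) is realized by a \emph{destination-agnostic} coding strategy: mutually independent codebooks with $X_k\sim\CN(0,P_k)$, $k\in\{1,2,3\}$, at the two sources and the relay, with the relay running DF on $(m_1,m_2)$ from its own observation together with the feedback it receives from both receivers. The central observation is that this strategy --- the random codebook generation, the source encoders, and the relay decode-and-forward rule --- never references which receiver is the intended destination, since both EMARCs share an identical relay feedback structure $(Y_1,Y_2,Y_3,\utH)$ and identical channel inputs. Consequently one and the same strategy can play the role of both $t_1(R_1,R_2)$ and $t_2(R_1,R_2)$; the two EMARCs differ \emph{only} in their final decoder, located at Rx$_1$ in EMARC$_1$ and at Rx$_2$ in EMARC$_2$.

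First I would fix a rate pair $(R_1,R_2)\in\mathcal{C}_{\scriptsize{\mbox{EMARC$_1$}}}(\uavec)\cap\mathcal{C}_{\scriptsize{\mbox{EMARC$_2$}}}(\uavec)$ and draw a \emph{single} random codebook according to the common construction above, used simultaneously by both EMARCs. Because $(R_1,R_2)\in\mathcal{C}_{\scriptsize{\mbox{EMARC$_1$}}}(\uavec)$, the EMARC$_1$ analysis guarantees that, averaged over this code, the probability that Rx$_1$ fails to decode the pair $(m_1,m_2)$ is at most some $\epsilon_1$ that vanishes with $n$; membership in $\mathcal{C}_{\scriptsize{\mbox{EMARC$_2$}}}(\uavec)$ similarly bounds by some $\epsilon_2\to 0$ the probability that Rx$_2$ fails to decode $(m_1,m_2)$. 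The event that Rx$_1$ errs on $m_1$ is contained in the event that Rx$_1$ errs on the pair $(m_1,m_2)$, and likewise for Rx$_2$ and $m_2$. Hence, by the union bound, the expected ICRF average error probability $\E\{\P_e^{(n)}\}$ --- which by definition is the probability that Rx$_1$ errs on $m_1$ or Rx$_2$ errs on $m_2$ --- is at most $\epsilon_1+\epsilon_2\to 0$. A standard derandomization then yields a single realization of the shared codebook with $\P_e^{(n)}$ arbitrarily small, so $(R_1,R_2)$ is achievable for the ICRF and therefore lies in $\mathcal{C}_{ICRF}(\uavec)$, giving the claimed inclusion.

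The step I expect to be the main obstacle is justifying that a \emph{single} code realization can be certified good for both EMARCs at once. A priori, the standard random-coding argument extracts a good code for each EMARC by a separate averaging argument, and these two good codes need not coincide. The resolution is to never separate the two experiments: I would run one random-coding experiment with one shared codebook and carry both receivers' error analyses over that same ensemble, applying the derandomization once to the combined quantity $\epsilon_1+\epsilon_2$ rather than to each EMARC in isolation. This in turn hinges on verifying, at the level of the appendix proof, that the relay's DF operation, the input distribution, and the backward-decoding block structure are literally identical across the two EMARCs, so that one ensemble simultaneously supports both destination decoders.
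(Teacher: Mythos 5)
Your proposal is correct and follows essentially the same route as the paper: both arguments rest on the observation that the capacity-achieving strategy for each EMARC (independent Gaussian codebooks, identical source encoders, and the same DF rule at the relay operating on $(\yvec_1,\yvec_2,\yvec_3,\uth)$) is destination-agnostic, so a single strategy $t_{DF}(R_1,R_2)$ serves as both $t_1$ and $t_2$ and any rate pair in the intersection is achievable in the ICRF. Your explicit treatment of the shared-ensemble union bound and the single derandomization step is a welcome refinement of a point the paper leaves implicit, but it is not a different argument.
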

\begin{proof}
In Appendix \ref{App:MARC+feedback - capacity region} it is shown that the capacity region of each EMARC is achieved with DF strategy at the relay and codebooks generated according to independent circularly
symmetric complex Normal distribution at the sources and at the relay (the same distributions are used in both EMARCs). In both EMARCs, for all rate pairs $(R_1,R_2)$, the relay
codebook has $2^{n(R_1+R_2)}$ codewords generated i.i.d. according to $\CN(0,P_3)$, independent of the codewords at the sources. For all rate pairs $(R_1,R_2)$ the same scheme is used
at the relay in both EMARCs: at block $b$ the relay decodes the messages $(m_{1,b},m_{2,b})$ via a joint-typicality decoder using $\big(\yvec_1(b),\yvec_2(b),\yvec_3(b), \mathbf{\uth}(b)\big)$,
and transmits $\xvec_3(m_{1,b},m_{2,b})$. Thus, all rate pairs $(R_1,R_2)$ s.t. $(R_1,R_2)\in\mathcal{C}_{\scriptsize{\mbox{EMARC$_1$}}}(\uavec) \cap \mathcal{C}_{\scriptsize{\mbox{EMARC$_2$}}}(\uavec)$
are achieved at both EMARCs simultaneously with $t_1(R_1,R_2)=t_2(R_1,R_2)=t_{DF}(R_1,R_2)$, where $t_{DF}(R_1,R_2)$ is the coding strategy detailed in Appendix \ref{App:MARC+feedback - capacity region},
for achieving the rate pair $(R_1,R_2)$.
\end{proof}
From Proposition \ref{pro:same strategy} it follows that an achievable rate region for the ICRF, $\R_{\scriptsize{ICRF}}(\uavec,t_{DF})$ (here $t_{DF}$ should be understood as the DF strategy appropriate for
each rate pair in the achievable region, see  Appendix \ref{App:MARC+feedback - capacity region}), can be obtained by:
 \begin{eqnarray}
    \label{eqn:ach_region_ICRF}
    &  &\!\!\!\!\!\!\! \R_{\scriptsize{ICRF}}(\uavec,t_{DF})\nonumber\\
    & & \quad =\mathcal{C}_{\scriptsize{\mbox{EMARC$_1$}}}(\uavec) \cap \mathcal{C}_{\scriptsize{\mbox{EMARC$_2$}}}(\uavec)\subseteq\mathcal{C}_{\scriptsize{\mbox{ICRF}}}(\uavec),
 \end{eqnarray}
and it is achieved with $X_k\sim \CN(0,P_k), k\in\{1,2,3\}$, independent of each other and with DF strategy at the relay.
\subsubsection{Converse}
\label{sec: SI, CONVERSE full FB}
By definition of the SI regime, in this regime both receivers can decode both messages without reducing the capacity region, i.e., any achievable rate pair
$(R_1,R_2)\in\mathcal{C}_{\scriptsize{\mbox{ICRF}}}(\uavec)$ is also achievable in EMARC$_1$ and EMARC$_2$, hence
$\mathcal{C}_{\scriptsize{\mbox{ICRF}}}(\uavec)\subseteq \mathcal{C}_{\scriptsize{\mbox{EMARC$_1$}}}(\uavec)\cap\mathcal{C}_{\scriptsize{\mbox{EMARC$_2$}}}(\uavec)$.
Combined with equation \eqref{eqn:ach_region_ICRF} we conclude that in the SI regime
$\mathcal{C}_{\scriptsize{\mbox{ICRF}}}(\uavec)= \mathcal{C}_{\scriptsize{\mbox{EMARC$_1$}}}(\uavec)\cap\mathcal{C}_{\scriptsize{\mbox{EMARC$_2$}}}(\uavec)$.
Hence, the only problem left open is to determine the SI conditions for the ICRF. Note that from proposition \ref{pro:same strategy} we obtain
that $\mathcal{C}_{\scriptsize{\mbox{EMARC$_1$}}}(\uavec)$ and $\mathcal{C}_{\scriptsize{\mbox{EMARC$_2$}}}(\uavec)$ are achieved with $t_{DF}$, thus {for the rest of the proof we only consider mutually independent, circularly symmetric complex Normal channel inputs with zero mean}. The rest of the proof consists
of the following steps:
\begin{itemize}
    \item We assume an achievable rate pair $(R_1,R_2)$ in the ICRF.

    \item We characterize the maximal rate at which each receiver can decode its desired message. We conclude that this rate is achieved with independent channel inputs generated i.i.d. according to the circularly symmetric complex Normal distribution (see Theorem \ref{thm: capacity of EMARC} in Appendix \ref{App:MARC+feedback - capacity region}).

    \item We characterize the worst case conditions for each receiver to decode the interfering message.

    \item We derive the conditions for which decoding both messages at each receiver does not reduce the capacity region.
\end{itemize}

For the first two steps note that the maximal rates for decoding at the destinations are given by the cut-set bounds in \eqref{eq:cut-set bound}, i.e.,
\begin{subequations}
\begin{eqnarray}
    \mbox{max }R_1&=&\!\!\!\!\!\!\!\! \sup_{f(x_1,x_2,x_3)}\Big\{\min \big\{I(X_1;Y_1,Y_2,Y_3|X_2,X_3,\utH),\nonumber\\
    & & \qquad \qquad I(X_1,X_3;Y_1|X_2,\tH_1)\big\}\Big\}\\
    \mbox{max }R_2&=&\!\!\!\!\!\!\!\! \sup_{f(x_1,x_2,x_3)}\Big\{\min \big\{I(X_2;Y_1,Y_2,Y_3|X_1,X_3,\utH),\nonumber\\
    & & \qquad \qquad I(X_2,X_3;Y_2|X_1,\tH_2)\big\}\Big\},
\end{eqnarray}
\end{subequations}
and they are achieved with mutually independent, circularly symmetric complex Normal channel inputs with zero mean and with DF at the relay (see Appendix \ref{App:MARC+feedback - capacity region} for a detailed proof). The worst case scenario for decoding at the destinations, however, is when the signal from the relay degrades the performance of the receivers. Given two vectors, $\mathbf{a}$ and $\mathbf{b}$, define the notation $\mathbf{a}\cdot\mathbf{b}\triangleq(a_1b_1,a_2b_2,...,a_nb_n)$. Assume the rate pair $(R_1,R_2)\in\mathcal{C}_{\scriptsize{\mbox{ICRF}}}(\uavec)$. If Rx$_1$ can decode $m_1$ from the signal
\begin{equation*}
    \yvec_1=\hv_{11}\cdot\xvec_1+\hv_{21}\cdot\xvec_2+\hv_{31}\cdot\xvec_3+\zvec_1,
\end{equation*}
then it can create the signal
\begin{equation*}
    \yvec_1'=\hv_{21}\cdot\xvec_2+\hv_{31}\cdot\xvec_3+\zvec_1,
\end{equation*}
from which it can decode $m_2$ by treating $\hv_{31}\cdot\xvec_3$ as additive noise\footnote{Note that for this step we use the fact that the codebooks are generated independently, hence
the relay signal can be treated as additive noise.} if
\begin{equation*}
    R_2\le I(X_2;Y_1'|\tH_1)\triangleq R_2'.
\end{equation*}
Similarly, Rx$_2$ can decode $m_1$ if
\begin{equation*}
    R_1\le I(X_1;Y_2'|\tH_2)\triangleq R_1'.
\end{equation*}
Next, we should guarantee that decoding both messages at each receiver does not reduce the capacity region of the ICRF.
This is achieved if $\max R_1\le R_1'$
and $\max R_2\le R_2'$, i.e.,
\begin{eqnarray*}
  & &\!\!\! \!\!\!\!\!\!\!\!\!\!\!\!  \sup_{f(x_1,x_2,x_3)}\Big\{\min \big\{I(X_1;Y_1,Y_2,Y_3|X_2,X_3,\utH),\\
  & & \qquad \qquad\qquad \qquad I(X_1,X_3;Y_1|X_2,\tH_1)\big\}\Big\}\\
    & &            \le I(X_1;Y_2'|\tH_2)\\
    & & \stackrel{(a)}{=}I(X_1;Y_2|X_2,\tH_2),\\
 & &  \!\!\!\!\!\!\!\!\!\!\!\!\!\!\!  \sup_{f(x_1,x_2,x_3)}\Big\{\min \big\{I(X_2;Y_1,Y_2,Y_3|X_1,X_3,\utH),\\
    & & \qquad  \qquad\qquad\qquad    I(X_2,X_3;Y_2|X_1,\tH_2)\big\}\Big\} \\
    & & \le I(X_2;Y_1'|\tH_1)\\
    & & =I(X_2;Y_1|X_1,\tH_1),
\end{eqnarray*}
where (a) follows from the fact that the capacity region of the ICRF in the SI regime,  as well as the supremum on the left-hand side (l.h.s.) of the inequality, are achieved with
$X_k\sim \CN(0,P_k), k\in\{1,2,3\}$, \em{independent of each other and of the channel coefficients}\em{;}
thus all mutual information expressions are evaluated with the same distribution.
Note that from arguments similar to those used in section \ref{sec: Full FB CR for VSI}, we also obtain that
\begin{eqnarray*}
    I(X_1;Y_2|X_2,\tH_2)&\le& I(X_1;Y_1,Y_2,Y_3|X_2,X_3,\utH)\\
    I(X_2;Y_1|X_1,\tH_1)&\le& I(X_2;Y_1,Y_2,Y_3|X_1,X_3,\utH).
\end{eqnarray*}
Thus, to guarantee that $\max R_1\le R_1'$ and $\max R_2\le R_2'$, it is enough to require
\begin{subequations}
\label{eq: Full FB SI Filtered conditions}
\begin{eqnarray}
    I(X_1,X_3;Y_1|X_2,\tH_1) &\le& I(X_1;Y_2|X_2,\tH_2)\\
    I(X_2,X_3;Y_2|X_1,\tH_2) &\le& I(X_2;Y_1|X_1,\tH_1),
\end{eqnarray}
\end{subequations}
which give \eqref{eq:SI Conditions}. Hence, when \eqref{eq: Full FB SI Filtered conditions} holds,
$\mathcal{C}_{\scriptsize{\mbox{ICRF}}}(\uavec)= \mathcal{C}_{\scriptsize{\mbox{EMARC$_1$}}}(\uavec_1)\cap\mathcal{C}_{\scriptsize{\mbox{EMARC$_2$}}}(\uavec_2)$.
\begin{remark}
{\em
    Note that the argument presented here uses only local Rx-CSI, as opposed to the argument of Sato \cite{Sato:81}.
}
\end{remark}
\subsubsection{Simplification of the Capacity Region}
Consider the constraints on $R_1$ in \eqref{eq:EMARC1 capacity R1} and \eqref{eq:EMARC2 capacity R1}. Note that if \eqref{eq: Full FB SI Filtered conditions} holds,
since the channel inputs are independent of each other and of the channel coefficients, we get
\begin{eqnarray*}
     I(X_1,X_3;Y_1|X_2,\tH_1) &\le& I(X_1;Y_2|X_2,\tH_2)\\
    & \le & I(X_1,X_3;Y_2|X_2,\tH_2),\\
    I(X_1,X_3;Y_1|X_2,\tH_1) &\le& I(X_1;Y_2|X_2,\tH_2)\\
    & \le& I(X_1;Y_1,Y_2,Y_3|X_2,X_3,\utH).
\end{eqnarray*}
Thus, the constraints on $R_1$ in \eqref{eq:EMARC1 capacity R1} and \eqref{eq:EMARC2 capacity R1}  can be reduced to
\begin{subequations}
\begin{equation}
    R_1\le I(X_1,X_3;Y_1|X_2,\tH_1).
\end{equation}
Following the same steps, the constraints on $R_2$ in \eqref{eq:EMARC1 capacity R2}  and \eqref{eq:EMARC2 capacity R2}  can be reduced to
\begin{equation}
    R_2\le I(X_2,X_3;Y_2|X_1,\tH_2).
\end{equation}
\end{subequations}
Finally, note that since the channel inputs are independent of each other and of the channel coefficients, then
\begin{eqnarray*}
    I(X_1;Y_2|\tH_1)&\le& I(X_1;Y_2|X_2,\utH)\\
        & \le & I(X_1;Y_1,Y_2,Y_3|X_2,X_3,\utH)\\
    I(X_2;Y_1|\tH_1)&\le& I(X_2;Y_1,Y_2,Y_3|X_3,\utH).
\end{eqnarray*}
Hence, when \eqref{eq: Full FB SI Filtered conditions} is satisfied,
\begin{eqnarray*}
    & &\!\!\!\!\!\!\!\!\!\!\!\!\!\!\!\!\!\!\!\!\!\!\!\!\!I(X_1,X_2;Y_1,Y_2,Y_3|X_3,\utH)\\
    &=&I(X_2;Y_1,Y_2,Y_3|X_3,\utH)\\
    & &\qquad  +I(X_1;Y_1,Y_2,Y_3|X_2,X_3,\utH)\\
    &\ge&I(X_2;Y_1|\tH_1)+I(X_1,X_3;Y_1|X_2,\tH_1)\\
    &=&I(X_1,X_2,X_3;Y_1|\tH_1),
\end{eqnarray*}
and
\begin{equation*}
    I(X_1,X_2;Y_1,Y_2,Y_3|X_3,\utH)\ge I(X_1,X_2,X_3;Y_2|\tH_2),
\end{equation*}
implying that in the SI regime, the sum-rate conditions for decoding at the relay is always satisfied. This shows that when \eqref{eq:SI Conditions} holds,  the capacity region is characterized in \eqref{eq:SI_region}.
\vspace{-0.5cm}
\begin{flushright}
$\blacksquare$
\end{flushright}
\vspace{-0.5cm}
\subsection{Ergodic Phase Fading}
When the channel is subject to ergodic phase fading, we obtain the following explicit result:
\begin{corollary}
\label{Cor:SI Phase fading}
            Consider the phase fading ICRF with Rx-CSI and noiseless feedback links from both receivers to the relay, s.t. $y_{1,1}^{i-1},y_{2,1}^{i-1},\hvec_{1,1}^{i-1}$ and $\hvec_{2,1}^{i-1}$ are available at the relay at time $i$. If the channel coefficients satisfy
            \begin{subequations}
            \label{eq:SIPhaseFadingConditions}
            \begin{eqnarray}
            a_{11}^2P_1+a_{31}^2P_3 &\le& \frac{a_{12}^2P_1}{1+a_{32}^2P_3}\\
            a_{22}^2P_2+a_{32}^2P_3 &\le& \frac{a_{21}^2P_2}{1+a_{31}^2P_3},
            \end{eqnarray}
            \end{subequations}
            then the capacity region is characterized by all the nonnegative rate pairs s.t.
            \begin{subequations}
            \label{eq:SIPhaseFadingICRF_region}
            \begin{eqnarray}
            R_1 &\le& \log_2\big(1+a_{11}^2P_1+a_{31}^2P_3\big)\\
            R_2 &\le& \log_2\big(1+a_{22}^2P_2+a_{32}^2P_3\big)\\
            R_1+R_2&\le& \min\bigg\{\log_2(1+a_{11}^2P_1+a_{21}^2P_2+a_{31}^2P_3),\nonumber\\
            & &  \log_2(1+a_{12}^2P_1+a_{22}^2P_2+a_{32}^2P_3)\bigg\},
            \end{eqnarray}
            \end{subequations}
            and it is achieved with $X_k\sim \CN(0,P_k), k\in\{1,2,3\}$, mutually independent and with DF strategy at the relay.
\end{corollary}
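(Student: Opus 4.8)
The plan is to specialize the general strong-interference (SI) capacity characterization of Theorem~\ref{thm:ICRF SI} to the phase fading model, exactly paralleling the route taken for the very strong interference case in Corollary~\ref{Cor:ICRF VSI Phase fading}. Theorem~\ref{thm:ICRF SI} already pins down the capacity-achieving inputs as mutually independent, zero-mean, circularly symmetric complex Normal, $X_k\sim\CN(0,P_k)$, $k\in\{1,2,3\}$, with DF at the relay, so no optimization remains: the whole task reduces to evaluating the mutual information expressions appearing in \eqref{eq:SI Conditions} and \eqref{eq:SI_region} under $H_{lk,i}=a_{lk}e^{j\Theta_{lk,i}}$ and reading off the explicit constants.

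The single structural fact that makes every evaluation routine---identical to the observation recorded in the footnote of Corollary~\ref{Cor:ICRF VSI Phase fading}---is that for phase fading each product $H_{lk}X_k$ is again a zero-mean, circularly symmetric complex Normal variable of variance $a_{lk}^2P_k$: the magnitude $|H_{lk}|=a_{lk}$ is deterministic while the phase is uniform on $[0,2\pi)$ and independent of the already circularly symmetric input. Hence every conditional mutual information collapses to a point-to-point log formula, and since all magnitudes are constant the expectation over $\tH_k$ vanishes. First I would dispatch the two individual rate bounds: conditioning on $\tH_1$ and $X_2$ makes $H_{21}X_2$ known, giving $I(X_1,X_3;Y_1|X_2,\tH_1)=\log_2(1+a_{11}^2P_1+a_{31}^2P_3)$, and symmetrically $I(X_2,X_3;Y_2|X_1,\tH_2)=\log_2(1+a_{22}^2P_2+a_{32}^2P_3)$. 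For the sum rate I would keep all three independent signals present, obtaining $I(X_1,X_2,X_3;Y_k|\tH_k)=\log_2(1+a_{1k}^2P_1+a_{2k}^2P_2+a_{3k}^2P_3)$ for $k=1,2$; their minimum is precisely the third bound in \eqref{eq:SIPhaseFadingICRF_region}.

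It remains to translate the SI conditions \eqref{eq:SI Conditions}. The one point deserving attention is that, unlike the VSI case, the interference-decoding terms are conditioned on the desired message: in $I(X_1;Y_2|X_2,\tH_2)$ the contribution $H_{22}X_2$ is removed, so only the relay signal $H_{32}X_3\sim\CN(0,a_{32}^2P_3)$ survives as effective additive noise, yielding $I(X_1;Y_2|X_2,\tH_2)=\log_2\!\big(1+\tfrac{a_{12}^2P_1}{1+a_{32}^2P_3}\big)$, and symmetrically $I(X_2;Y_1|X_1,\tH_1)=\log_2\!\big(1+\tfrac{a_{21}^2P_2}{1+a_{31}^2P_3}\big)$; this is why the SI denominators contain only $1+a_{3k}^2P_3$ rather than the additional desired-signal term present in the VSI conditions. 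Substituting these together with the individual rate bounds into \eqref{eq:SI Conditions}, cancelling the common logarithm and invoking its monotonicity, reduces the conditions to \eqref{eq:SIPhaseFadingConditions}. I do not anticipate a genuine obstacle: the argument is purely computational, and its only subtlety---justifying that $H_{lk}X_k$ stays circularly symmetric Gaussian so that it can be absorbed into an effective noise variance---is already established for phase fading and transfers verbatim.
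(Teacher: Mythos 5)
Your proposal is correct and follows essentially the same route as the paper: specialize the expressions of Theorem~\ref{thm:ICRF SI} to phase fading, evaluate the individual rate bounds exactly as in Corollary~\ref{Cor:ICRF VSI Phase fading}, treat the relay signal $H_{3k}X_3$ as circularly symmetric Gaussian noise of variance $a_{3k}^2P_3$ in the conditioned interference-decoding terms (which is precisely why the SI denominators lack the desired-signal term present in VSI), and evaluate the sum-rate constraints directly. Nothing is missing.
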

\begin{proof}
The result follows from the expressions of Theorem \ref{thm:ICRF SI}. In order to obtain the conditions on the channel coefficients in \eqref{eq:SIPhaseFadingConditions} we first
evaluate $I(X_1,X_3;Y_1|X_2,\tH_1)$ and $I(X_2,X_3;Y_2|X_1,\tH_2)$ as in Corollary
\ref{Cor:ICRF VSI Phase fading}, by using the r.h.s. of \eqref{eq:eq6}. Also note that $\{H_{31,i}X_{3,i}\}_{i=1}^n$ can be considered as additive Gaussian
noise\footnote{Here we follow the same arguments as in Corollary \ref{Cor:ICRF VSI Phase fading}.} at Rx$_1$ and $\{H_{32,i}X_{3,i}\}_{i=1}^n$ can be considered as additive
Gaussian noise at Rx$_2$. Therefore from the independence of the channel inputs we obtain
    \begin{eqnarray*}
    I(X_2;Y_1|X_1,\tH_1)&=&\log_2\Big(1+\frac{a_{21}^2P_2}{1+a_{31}^2P_3}\Big)\\
    I(X_1;Y_2|X_2,\tH_2)&=&\log_2\Big(1+\frac{a_{12}^2P_1}{1+a_{32}^2P_3}\Big).
    \end{eqnarray*}
Thus, by evaluating \eqref{eq:SI Conditions} we obtain the conditions in \eqref{eq:SIPhaseFadingConditions}. Finally, we evaluate $I(X_1,X_2,X_3;Y_1|\tH_1)$ and $I(X_1,X_2,X_3;Y_2|\tH_2)$ by using the r.h.s. of \eqref{eq:eq12}:
    \begin{eqnarray*}
    & & I(X_1,X_2,X_3;Y_1|\tH_1)\\
    & & \qquad \qquad\qquad= \log_2(1+a_{11}^2P_1+a_{21}^2P_2+a_{31}^2P_3)\\
    & & I(X_1,X_2,X_3;Y_2|\tH_2)\\
    & & \qquad \qquad \qquad=\log_2(1+a_{12}^2P_1+a_{22}^2P_2+a_{32}^2P_3).
    \end{eqnarray*}
\end{proof}

\subsection{Ergodic Rayleigh Fading}
Define $\tU_k \triangleq(U_{kk},U_{3k}),k\in\{1,2\}$ and $\utU \triangleq(U_{11},U_{12},U_{21},U_{22},U_{31},U_{32})$. When the channel is subject to ergodic Rayleigh fading, we obtain the following explicit result:
\begin{corollary}
            Consider the Rayleigh fading ICRF with Rx-CSI and noiseless feedback links from both receivers to the relay, s.t. $y_{1,1}^{i-1},y_{2,1}^{i-1},\hvec_{1,1}^{i-1}$ and $\hvec_{2,1}^{i-1}$ are available at the relay at time $i$. If the channel coefficients satisfy
            \begin{subequations}
            \label{eq:SIRayleighFadingConditions}
            \begin{eqnarray}
            \!\!\!\!\!\!\!\!\!\!\!\!\frac{\frac{a_{12}^2P_1}{1+a_{32}^2P_3}}{e^{ \frac{1+ a_{32}^2P_3}{a_{12}^2P_1}}E_1\left(\frac{1+ a_{32}^2P_3}{ a_{12}^2P_1}\right)}&\ge&(1 + a_{11}^2 P_{1} + a_{31}^2 P_{3}  )\\
            \!\!\!\!\!\!\!\!\!\!\!\!\frac{\frac{a_{21}^2P_2}{1+a_{31}^2P_3}}{e^{ \frac{1+a_{31}^2P_3}{a_{21}^2P_2}}E_1\left(\frac{1+a_{31}^2P_3}{a_{21}^2  P_2}\right)}&\ge&(1 +a_{22}^2P_{2}+a_{32}^2P_{3}),
            \end{eqnarray}
            \end{subequations}
            then the capacity region is characterized by all the nonnegative rate pairs s.t.
            \begin{subequations}
            \label{eq:SIRayleighFadingICRF_region}
            \begin{eqnarray}
            & &\!\!\!\!\!\!\!\!\!\!\!\!\!\!\!\! R_1    \le \E_{\tU_1}\! \big\{\log_2(1+a_{11}^2|U_{11}|^2P_1+a_{31}^2|U_{31}|^2P_3)\big\}\\
            & &\!\!\!\!\!\!\!\!\!\!\!\!\!\!\!\! R_2    \le \E_{\tU_2}\! \big\{\log_2(1+a_{22}^2|U_{22}|^2P_2+a_{32}^2|U_{32}|^2P_3)\big\}\\
            & &\!\!\!\!\! \!\!\!\!\!\!\!\!\!\!\! R_1+R_2 \le  \min_{k\in\{1,2\}}\Big\{\E_{\utU}\!\big\{\log_2(1+a_{1k}^2|U_{1k}|^2P_1\nonumber\\
            & & \qquad \qquad +a_{2k}^2|U_{2k}|^2P_2+a_{3k}^2|U_{3k}|^2P_3)\big\}\Big\},
            \end{eqnarray}
            \end{subequations}
            and it is achieved with $X_k\sim \CN(0,P_k), k\in\{1,2,3\}$, mutually independent and with DF strategy at the relay.
\end{corollary}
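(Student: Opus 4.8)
The plan is to specialize Theorem~\ref{thm:ICRF SI} to the Rayleigh fading model, exactly as Corollary~\ref{Cor:SI Phase fading} specializes it to phase fading and as the analogous Rayleigh fading corollary does in the VSI regime. Theorem~\ref{thm:ICRF SI} already guarantees that the region \eqref{eq:SI_region} is achieved by mutually independent $X_k\sim\CN(0,P_k)$, $k\in\{1,2,3\}$, with DF at the relay, so the entire task reduces to evaluating the mutual information expressions appearing in \eqref{eq:SI_region} and in the SI conditions \eqref{eq:SI Conditions} under the Rayleigh law $H_{lk}=a_{lk}U_{lk}$, $U_{lk}\sim\CN(0,1)$. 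The one structural difference from phase fading is that the fading magnitudes are now random, so I must carry out each evaluation conditioned on the available Rx-CSI and only then average over the corresponding fading variables.

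First I would obtain the rate bounds \eqref{eq:SIRayleighFadingICRF_region}. Conditioned on $\tH_1$ and on the known $X_2$, the quantity $I(X_1,X_3;Y_1|X_2,\tH_1)$ is the mutual information of a Gaussian channel with \emph{deterministic} gains $H_{11},H_{31}$ and independent complex Normal inputs, so for each realization it equals $\log_2(1+a_{11}^2|U_{11}|^2P_1+a_{31}^2|U_{31}|^2P_3)$; averaging over $\tU_1$ as in \eqref{eq:eq6} gives the claimed bound on $R_1$, and the bound on $R_2$ follows symmetrically. The sum-rate terms $I(X_1,X_2,X_3;Y_1|\tH_1)$ and $I(X_1,X_2,X_3;Y_2|\tH_2)$ are handled identically, now treating all three inputs as signal, which produces the two expectations over $\utU$ inside the $\min\{\cdot,\cdot\}$ (cf.\ \eqref{eq:eq12}).

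The delicate step is turning the SI conditions \eqref{eq:SI Conditions} into the explicit inequalities \eqref{eq:SIRayleighFadingConditions}. In the phase fading proof the term $\{H_{3k,i}X_{3,i}\}$ was itself circularly symmetric complex Normal with variance $a_{3k}^2P_3$ and could be absorbed into the additive noise, yielding closed-form logarithms. Under Rayleigh fading this shortcut is unavailable: $H_{32}X_3$ is a product of two complex Gaussians, hence non-Gaussian, and $I(X_1;Y_2|X_2,\tH_2)$ admits no elementary closed form. To reach a clean condition I would (i) upper bound the desired-message rate $I(X_1,X_3;Y_1|X_2,\tH_1)$ by Jensen's inequality, i.e.\ by $\log_2(1+a_{11}^2P_1+a_{31}^2P_3)$, and (ii) lower bound the interference-decoding rate $I(X_1;Y_2|X_2,\tH_2)$ by treating the relay contribution as Gaussian noise of variance $a_{32}^2P_3$ and invoking the ergodic point-to-point Rayleigh capacity $\E_U\{\log_2(1+\gamma|U|^2)\}=\log_2(e)\,e^{1/\gamma}E_1(1/\gamma)$ with $\gamma=\frac{a_{12}^2P_1}{1+a_{32}^2P_3}$. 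Imposing that the upper bound not exceed the lower bound, and following the companion computation in \cite{Dabora:10}, then yields the first inequality in \eqref{eq:SIRayleighFadingConditions}; the second follows from the mirror-image argument for decoding $X_2$ at Rx$_2$'s counterpart, namely Rx$_1$.

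I expect the genuine obstacle to be exactly this last reduction: passing from two competing ergodic mutual informations to the compact ratio $\frac{a_{12}^2P_1/(1+a_{32}^2P_3)}{e^{1/\gamma}E_1(1/\gamma)}\ge 1+a_{11}^2P_1+a_{31}^2P_3$, which involves only the deterministic attenuations, the powers, and a single exponential integral. Making this passage rigorous requires a tightness estimate for $e^{1/\gamma}E_1(1/\gamma)$---concretely an inequality of the form $e^{1/\gamma}E_1(1/\gamma)+\ln\big(e^{1/\gamma}E_1(1/\gamma)\big)\ge\ln\gamma$---which certifies that the stated compact condition is sufficient (and in fact slightly conservative) for the underlying SI inequality $I(X_1,X_3;Y_1|X_2,\tH_1)\le I(X_1;Y_2|X_2,\tH_2)$. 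This is the only fading-specific piece of the argument, and it is where I would reproduce the detailed $E_1$ manipulations of \cite{Dabora:10}; the remaining algebra is routine.
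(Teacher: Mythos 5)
Your proposal is correct and follows essentially the same route as the paper: the paper's proof of this corollary simply plugs the Rayleigh law into the expressions of Theorem~\ref{thm:ICRF SI}, mirrors the phase fading argument of Corollary~\ref{Cor:SI Phase fading}, and delegates the $E_1$-based reduction of the SI conditions to \cite[Proposition 1]{Dabora:10}, exactly as you do. You additionally make explicit the Jensen upper bound on the desired-signal rate, the Jensen/Gaussian-noise lower bound on the interference-decoding rate, and the $e^{1/\gamma}E_1(1/\gamma)$ tightness step that the cited proposition supplies, which is a faithful unpacking of the same argument rather than a different one.
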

\begin{proof}
The result follows from the expressions of Theorem \ref{thm:ICRF SI} and follows the same approach as in the proof of Corollary \ref{Cor:SI Phase fading}. Here \eqref{eq:SIRayleighFadingConditions} follows from \cite[Proposition 1]{Dabora:10}.
\end{proof}

\begin{figure}[t]
  \centering
  \subfloat[The range of $a_{12}$ and $a_{21}$ which satisfy the conditions for the VSI regime (gray area only) and for the SI regime (both black and gray areas) when DF achieves capacity in the phase fading scenario.]
  {\label{fig:fig2}\includegraphics[scale=0.5]{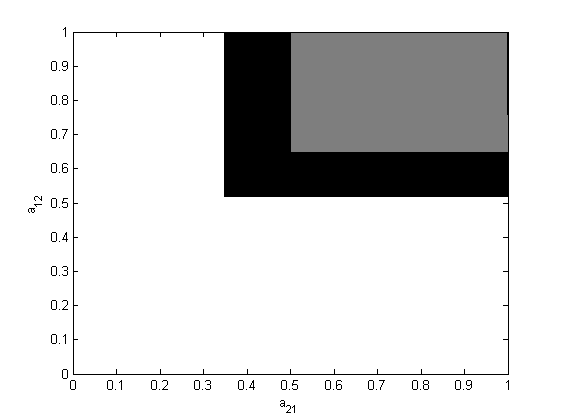}}
  \qquad
  \subfloat[The capacity region of the ICRF for the phase fading scenario in the VSI regime (both black and gray areas, $a_{12}=a_{21} = 0.7$) and the SI regime (black area only, $a_{12} = 0.53$, $a_{21} = 0.36$)] 
  {\label{fig:fig1}\includegraphics[scale=0.5]{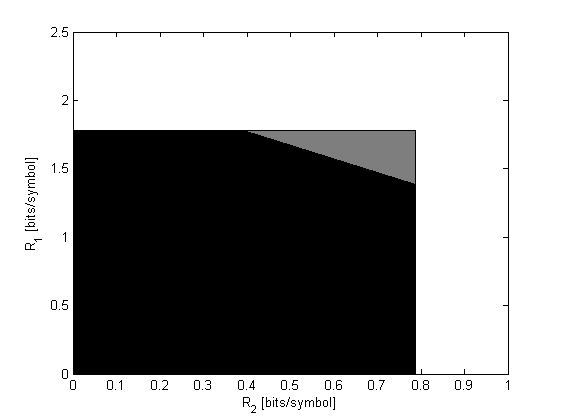}}
  \caption{SI and VSI for $P_1=P_2=P_3=10, a_{11}=0.42, a_{22}=0.25, a_{31}=0.26$ and $a_{32}=0.1.$}
  \label{fig:fig1+2}
\end{figure}
\subsection{Comments}
\begin{remark}
\label{rem:Compare_FB_with_NFB_SI}
\em{}

 In order to compare the feedback capacity region of Corollary \ref{Cor:SI Phase fading}  to that obtained without feedback, let $\mD_k$ be the set of channel coefficients
    $\avec_k =\big(a_{1k}, a_{2k}, a_{3k}, a_{13}, a_{23} \big) \in \setR^5_+$ that satisfy
    \begin{subequations}
          \begin{eqnarray}
              a_{1k}^2 P_1 + a_{3k}^2 P_3 & \le & a_{13}^2 P_1\\
              a_{2k}^2 P_2 + a_{3k}^2 P_3 & \le & a_{23}^2 P_2\\
              \label{eqn:SI_dec_relay_coeff_sum}
              a_{1k}^2 P_1 + a_{2k}^2 P_2 + a_{3k}^2 P_3 & \le & a_{13}^2 P_1 + a_{23}^2 P_2,
          \end{eqnarray}
    \end{subequations}
    $k = 1,2$. Let $\avec \triangleq \big(a_{11}, a_{12}, a_{13}, a_{21}, a_{22}, a_{23},  a_{31}, a_{32}\big) \in \setR^8_+$ and let
    $\avec \in \mD_1 \cap \mD_2$ be a short form notation to denote that $\avec_1$ and $\avec_2$
    satisfy  $\big\{  \avec_1 \cup \avec_2 = \avec \big\} \bigcap \big\{\avec_1 \in \mD_1\big\} \bigcap \big\{ \avec_2 \in \mD_2\big\} $.

    \cite[Theorem 2]{Dabora:101}  states that when the channel coefficients satisfy  \eqref{eq:SIPhaseFadingConditions} and also $\avec \in \mD_1 \cap \mD_2$, then the capacity region is
    given by \eqref{eq:SIPhaseFadingICRF_region}.  Similar to VSI (see Comment \ref{rem:Comparing_FB_with_NFB_VSI}),
    observe that the rate constraints are the same for both the feedback and the no-feedback cases, thus when capacity is achieved without feedback, then feedback does not enlarge the
    capacity region. Using similar arguments as in the discussion in Comment \ref{rem:Comparing_FB_with_NFB_VSI},
    it is possible to show that when \eqref{eq:SIPhaseFadingConditions} hold, then there are situations in which feedback enlarges
    the capacity region. The same conclusion applies to Rayleigh fading as well.
\end{remark}
\begin{remark}
\em{}Since the optimal codewords are generated independent of each other and of the channel coefficients, we obtain
\begin{eqnarray*}
    I(X_1;Y_2|\tH_2) &\le& I(X_1;Y_2|X_2,\tH_2)\\
    I(X_2;Y_1|\tH_1) &\le& I(X_2;Y_1|X_1,\tH_1),
\end{eqnarray*}
thus, its easy to see that the SI conditions in \eqref{eq:SI Conditions} are weaker than the VSI conditions in \eqref{eq:ICRF VSI Con}, as depicted in Fig. \ref{fig:fig2}. The capacity region
are compared in Fig. \ref{fig:fig1}.
\end{remark}
\begin{remark}
    \em{}Although in the SI regime the resulting model can be thought of as  a ``compound EMARC", it is important to note that both EMARCs share \em{the same relay}\em{} and thus they are not separate,
    contrary to ICs without relay. Note that the strategy at the relay is \em{optimal for both EMARCs}\em{} s.t. capacity is achieved for both \em{simultaneously}.
\end{remark}

\section{ICRs with Feedback to the Relay and Transmitters}
\label{sec:FB to tx}
In this section we study the scenarios in which feedback is available both at the relay and at the transmitters. We consider two configurations:
(1) feedback from each receiver to the relay and to its opposite transmitter, (2) feedback from each receiver to the relay and to its corresponding transmitter.

\subsection{Feedback to the Opposite Transmitters}
First, we study how the capacity region is affected if there are two noiseless feedback links from each receiver, both to the relay and to its opposite transmitter,
s.t. $y_{1,1}^{i-1},\hvec_{1,1}^{i-1}$ are available at Tx$_2$, $y_{2,1}^{i-1},\hvec_{2,1}^{i-1}$ are available at Tx$_1$, and $y_{1,1}^{i-1},y_{2,1}^{i-1},\hvec_{1,1}^{i-1},\hvec_{2,1}^{i-1}$ are
available at the relay at time $i$, prior to the transmission at each node. for this scenario, the definitions of the encoders at the transmitters
in Definition \ref{def:code} are modified as follows:
\begin{subequations}
\label{eq:source sig tx fb 1}
\begin{eqnarray}
    x_{1,i}&=&e_{1,i}(m_1, y_{2,1}^{i-1},\tilde{h}_{2,1}^{i-1})\\
    x_{2,i}&=&e_{2,i}(m_2, y_{1,1}^{i-1},\tilde{h}_{1,1}^{i-1});
\end{eqnarray}
\end{subequations}
the rest of the definitions remain unchanged and they are the same as in section \ref{sec:Model}. This model can represent scenarios where each transmitter is close to its opposite receiver, e.g., when VSI occurs in the ICRF. This configuration is depicted in Fig. \ref{fig:ICRF2}.
\begin{figure*}[!t]
    \centering
    \includegraphics[scale=0.35]{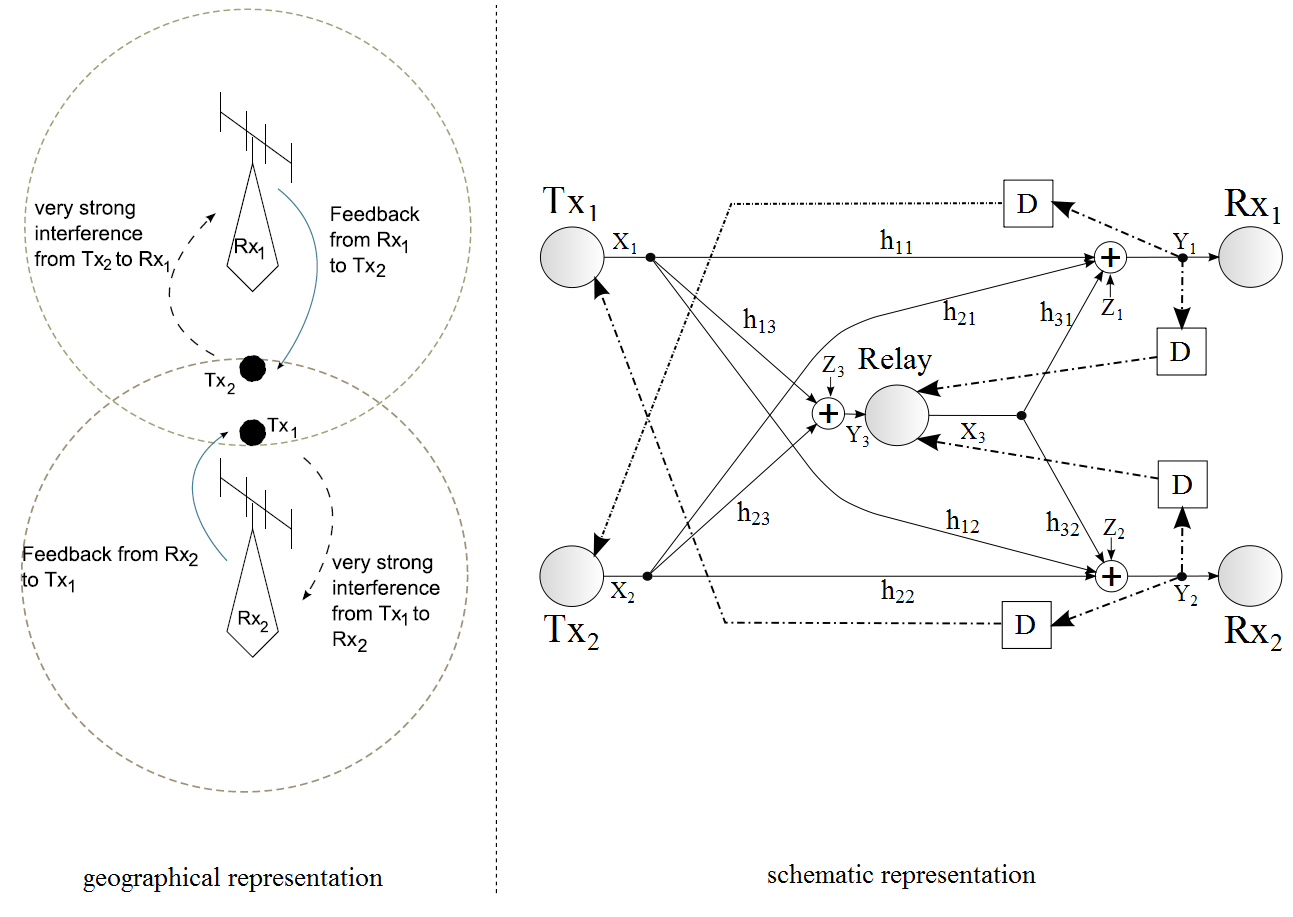}
\caption{The interference channel with a relay and feedback from both receivers to the relay and to their opposite transmitters.
    The `D' block represents a single-symbol delay.}
\label{fig:ICRF2}
\end{figure*}
\begin{proposition}
\label{thm:feedback doesnt improve capacity}
Consider the ICRF in which there is a noiseless feedback link from each receiver to the relay. Then, additional feedback links from each receiver to its opposite transmitter (see Fig. \ref{fig:ICRF2}), do not provide any further enlargement  to the capacity region in the VSI regime.
\end{proposition}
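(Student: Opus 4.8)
The plan is to sandwich the capacity region of the augmented model between an achievable region and an outer bound that both coincide with the region \eqref{eq:ICRF_region} of Theorem \ref{thm:ICRF VS}. The achievability direction is immediate: the DF scheme constructed in the proof of Theorem \ref{thm:ICRF VS}---independent $\CN(0,P_k)$ codebooks at the sources and the relay, relay decoding from $(\Yvec_1,\Yvec_2,\Yvec_3,\utHvec)$, and backward decoding at the destinations---never uses the feedback that reaches the transmitters. Hence every rate pair in \eqref{eq:ICRF_region} remains achievable when the extra feedback links of \eqref{eq:source sig tx fb 1} are present, so the capacity region with feedback to the opposite transmitters can only be at least as large as \eqref{eq:ICRF_region}.

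For the converse I would re-derive the cut-set bound for the augmented model. The cut-set theorem \cite[Theorem 15.10.1]{cover-thomas:it-book} remains a valid outer bound in the presence of feedback and the cut structure is unchanged, so the two cuts that are binding in the VSI regime (those of \eqref{eq:cs bound R12} and \eqref{eq:cs bound R22}) still yield
\begin{subequations}
\label{eq:FBtx_cutset}
\begin{eqnarray}
R_1 &\le& I(X_1,X_3;Y_1|X_2,\tH_1)\\
R_2 &\le& I(X_2,X_3;Y_2|X_1,\tH_2),
\end{eqnarray}
\end{subequations}
where the mutual informations are now evaluated under the joint input distribution induced by the code. The essential difference from the setting of Theorem \ref{thm:ICRF VS} is that the transmitter feedback of \eqref{eq:source sig tx fb 1} can correlate $X_1$ and $X_2$ with each other and with $X_3$, so the admissible class of input laws is no longer $f(x_1)f(x_2)f(x_3|x_1,x_2)$ but a general $f(x_1,x_2,x_3)$.

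The crux of the argument is therefore to show that the maximization of the expressions in \eqref{eq:FBtx_cutset} over this larger class is still attained by mutually independent, zero-mean, circularly symmetric complex Normal inputs at full power. I would invoke the symmetrization argument of Appendix \ref{app:maximzing dist for cut set bound}: since the phase of each fading coefficient is uniform on $[0,2\pi)$ and independent of its magnitude, averaging over the Rx-CSI makes each rate expression depend on the input cross-correlations only through quantities that are invariant under sign reversal of those cross-correlations; concavity then forces the optimum to the point where all cross-correlations equal their negatives, i.e. to zero, after which Lemma \ref{lem:lem0} fixes the optimizing marginals as $\CN(0,P_k)$. The point that must be checked carefully is that this symmetrization never relied on the product structure $f(x_1)f(x_2)f(x_3|x_1,x_2)$ and hence applies verbatim to a general $f(x_1,x_2,x_3)$; this is precisely the statement that transmitter feedback cannot create beneficial correlation under i.i.d. uniform-phase fading, and I expect it to be the main obstacle in making the proof fully rigorous.

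Once this is established, the right-hand sides of \eqref{eq:FBtx_cutset} evaluate to the right-hand sides of \eqref{eq:ICRF_region}, so the outer bound coincides with the achievable region. Note that the two individual bounds in \eqref{eq:FBtx_cutset} already delimit the region \eqref{eq:ICRF_region}, so---just as in the proof of Theorem \ref{thm:ICRF VS}---the sum-rate cut need not be examined separately in the VSI regime, which avoids having to re-assess the sum-rate constraint under correlated inputs. Combining the matching inner and outer bounds shows that the capacity region is unchanged, proving that feedback from each receiver to its opposite transmitter provides no enlargement beyond feedback to the relay alone.
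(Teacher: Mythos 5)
Your proposal is correct and follows essentially the same route as the paper: achievability via the feedback-ignoring DF scheme of Theorem \ref{thm:ICRF VS}, and a converse via the cut-set bounds \eqref{eq:cs bound R12} and \eqref{eq:cs bound R22}, whose maximization over the enlarged class of joint input laws is indeed covered by the symmetrization argument of Appendix \ref{app:maximzing dist for cut set bound}, which never uses the product structure $f(x_1)f(x_2)f(x_3|x_1,x_2)$. The one point the paper makes explicit that you leave implicit is why the cut-set theorem remains applicable to these particular cuts: for $\mathcal{S}^C=\{\mbox{Tx}_2,\mbox{Rx}_1\}$ the encoder of $\mbox{Tx}_2$ in \eqref{eq:source sig tx fb 1} depends only on information generated inside $\mathcal{S}^C$ (and symmetrically for the cut bounding $R_2$), which is precisely what fails for feedback to the corresponding transmitters (cf. Comment \ref{rem:Not_Possible_Cut-Set}).
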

\begin{proof}
Let $m_1\in\M_1, m_2\in\M_2$ denote the messages that Tx$_1$ and Tx$_2$ send to Rx$_1$ and Rx$_2$, respectively. Let the encoders at Tx$_1$ and Tx$_2$ map their messages and the information
received from their feedback links into the channel input symbols $x_{1,i}$ and $x_{2,i}$, respectively. Thus, the encoders at the transmitters are given in \eqref{eq:source sig tx fb 1}.
The encoder at the relay remains unchanged, i.e., it is the causal function given in \eqref{eq:relay encoding}. Consider the cut-set bound expressions in \eqref{eq:cut-set bound}.
Observe that in the cut-set bounds on $R_1$, Rx$_1$ and Tx$_2$ belong to $\mathcal{S}^C$ while Tx$_1$ and Rx$_2$ belong to $\mathcal{S}$. Hence, by inspection of the proof of the cut-set bound
\cite[Theorem 15.10.1]{cover-thomas:it-book}, it is evident that the encoders at Tx$_1$ and Tx$_2$ used for the cut-set expressions are exactly those in \eqref{eq:source sig tx fb 1} and therefore the
cut-set expressions for rates $R_1$ and $R_2$ in \eqref{eq:cut-set bound} remain unchanged when feedback is also sent from each receiver to its opposite transmitter.

Finally, note that Theorem \ref{thm:ICRF VS} proves that in the VSI regime, if feedback from both receivers is available at the relay then the cut-set bounds \eqref{eq:cs bound R12} and
\eqref{eq:cs bound R22} are achievable and there is no constraint on the sum-rate. Hence, we conclude that when feedback from both receivers is available at the relay then additional
feedback links from each receiver to its opposite transmitter do not enlarge the capacity region of the ICRF in the VSI regime.
\end{proof}
\subsubsection{Comments}
\begin{remark}
\em{} In \cite{Wolf:75} it was shown that feedback can increase the capacity region of the discrete memoryless MAC by allowing the sources to coordinate their transmissions.
In the ICRF with additional feedback links from each receiver to its opposite transmitter, since the cut-set bound expressions are maximized with mutually independent channel
inputs then such coordination is not beneficial and in fact it is not possible.
\end{remark}
\begin{remark}
    \em{}We conclude that if, due network limitations, each receiver may send feedback either to the relay or to its opposite transmitter (when in the VSI regime \eqref{eq:ICRF VSI Con}),
    then its preferable to send feedback to the relay,
    since the relay can exploit the additional information to achieve the capacity in the VSI regime.
\end{remark}

\subsection{Feedback to the Corresponding Transmitters}
In this section, we study how the capacity region is affected if there are two noiseless feedback links from each receiver, both to the relay and to its corresponding transmitter, s.t.
$y_{1,1}^{i-1},\hvec_{1,1}^{i-1}$ are available at Tx$_1$, $y_{2,1}^{i-1},\hvec_{2,1}^{i-1}$ are available at Tx$_2$, and $y_{1,1}^{i-1},y_{2,1}^{i-1},\hvec_{1,1}^{i-1},\hvec_{2,1}^{i-1}$ are available at the relay,
at time $i$, prior to the transmission at each node. For this scenario the encoders at the transmitters in Definition \ref{def:code} are changed to
\begin{subequations}
\label{eq:source sig tx fb 2}
\begin{eqnarray}
    x_{1,i}&=&e_{1,i}(m_1, y_{1,1}^{i-1},\tilde{h}_{1,1}^{i-1})\\
    x_{2,i}&=&e_{2,i}(m_2, y_{2,1}^{i-1},\tilde{h}_{2,1}^{i-1}),
\end{eqnarray}
\end{subequations}
the rest of the definitions are the same as in section \ref{sec:Model}. This configuration is depicted in Fig. \ref{fig:ICRF3}.
\begin{figure}[h]
    \centering
    \includegraphics[scale=0.30]{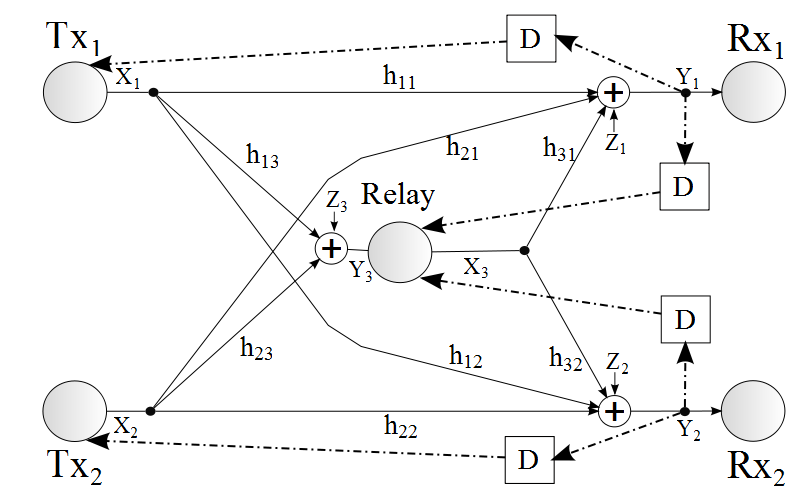}
\caption{The interference channel with a relay and feedback from both receivers to the relay and to their corresponding transmitters.
    The `D' block represents a single-symbol delay.}
\label{fig:ICRF3}
\end{figure}

    Let $\tmC_{MAC-FB}$ and $\tmR_{OB}$ be defined as
    \begin{subequations}
    \label{eqn:cap_region_mac_fb}
    \begin{eqnarray}
       & & \!\!\!\!\!\!\!\!\! \!\!\!\!\!\!\! \tmC_{MAC-FB} \nonumber \\
       & & \!\!\!\!\!\!\!\!\!\!\!\! \triangleq \bigg\{ (R_1, R_2) \in \setR_+^2:\nonumber\\
       & &   R_1  \le  I(X_1;Y_1,Y_2,Y_3|X_2, X_3, \utH) \\
       & &   R_2  \le  I(X_2;Y_1,Y_2,Y_3|X_1,  X_3, \utH)\\
       & & R_1+R_2  \le  I(X_1,X_2;Y_1,Y_2,Y_3|X_3, \utH)\; \bigg\},
    \end{eqnarray}
    \end{subequations}
    \begin{subequations}
    \label{eqn:outer_bound_2}
    \begin{eqnarray}
       & & \!\!\!\!\!\!\!\!\! \!\!\!\!\!\!\!  \tmR_{OB} \nonumber\\
       & & \!\!\!\!\!\!\!\!\!\!\!\!\triangleq \bigg\{ (R_1, R_2) \in \setR_+^2:\nonumber\\
       & &  R_1  \le  I(X_1,X_3;Y_1,Y_2|X_2, \utH) \\
       & &  R_2  \le  I(X_2,X_3;Y_1,Y_2|X_1, \utH)\\
       & & R_1+R_2  \le  I(X_1,X_2,X_3;Y_1,Y_2| \utH)\quad \bigg\},
    \end{eqnarray}
    \end{subequations}
    where all mutual information expressions in \eqref{eqn:cap_region_mac_fb} and \eqref{eqn:outer_bound_2}  are evaluated with $X_k \sim \CN(0,P_k)$, $k=1,2,3$, mutually independent.
    Next, define the region $\tmR_{ICRF}^{VSI}$ as follows:
    \begin{eqnarray*}
      & & \!\!\!\!\!\!\!\!\! \!\!\!\!\!\!\!   \tmR_{ICRF}^{VSI} \\
      & & \!\!\!\!\!\!\!\!\!\!\!\! \!\!\!\!\!\!\!\triangleq \bigg\{ (R_1,R_2) \in \setR_+^2:\\
      & & R_2  \le  I(X_2,X_3;Y_2|X_1,\tH_2)\\
      & & R_1  \le  I(X_1,X_2,X_3;Y_1|\tH_1) \\
      & & \qquad \qquad - \frac{I(X_2;Y_1|\tH_1)}{I(X_2,X_3;Y_2|X_1,\tH_2)}R_2\bigg\},
    \end{eqnarray*}
     where all mutual information expression are evaluated with $X_k \sim \CN(0,P_k)$, $k=1,2,3$, mutually independent. We now state the inner and outer bound in the following proposition:

    \smallskip

    \begin{proposition}
    \label{prop:capacity_increased_VSI_fb_corrs_Txs}
    The capacity region of the ICRF with  noiseless feedback links from each receiver to the relay and to its corresponding
    transmitter, denoted $\tmC_{ICRF}$, is outer bounded by $\tmC_{ICRF} \subseteq \tmC_{MAC-FB}\cap\tmR_{OB}$.
    Furthermore, if the VSI conditions (3) hold and also
    \begin{equation}
    \label{eq:VSI hardened con with feedback to own transmitter}
       I(X_1,X_2,X_3;Y_1|\tH_1)\le I(X_1;Y_2|X_2,X_3,\tH_2),
    \end{equation}
    holds for mutually independent Gaussian inputs, $X_k \sim \CN(0,P_k)$, $k=1,2,3$.
    Then, the corresponding capacity region of the ICRF in the VSI regime, denoted $\tmC_{ICRF}^{VSI}$, satisfies $\tmC_{ICRF}^{VSI} \supseteq \tmR_{ICRF}^{VSI}$.
    \end{proposition}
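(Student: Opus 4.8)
The plan is to prove the two inclusions separately: the outer bound by the cut-set theorem, and the inner bound by time-sharing between two explicit corner points. For the outer bound $\tmC_{ICRF}\subseteq\tmC_{MAC-FB}\cap\tmR_{OB}$, I would apply the cut-set argument of \eqref{eq:cut-set bound} to six cuts. The three cuts isolating the sources, $\mathcal{S}=\{\Tgood\}$, $\mathcal{S}=\{\Tbad\}$ and $\mathcal{S}=\{\Tgood,\Tbad\}$, give the three constraints of $\tmC_{MAC-FB}$ in \eqref{eqn:cap_region_mac_fb}, and the three cuts grouping the relay with the sources, $\mathcal{S}=\{\Tgood,\mbox{Relay}\}$, $\mathcal{S}=\{\Tbad,\mbox{Relay}\}$ and $\mathcal{S}=\{\Tgood,\Tbad,\mbox{Relay}\}$, give the three constraints of $\tmR_{OB}$ in \eqref{eqn:outer_bound_2}. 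The cut-set bound is valid for any code, so the feedback to the transmitters does not affect its applicability; the only way it could enter is by allowing correlated channel inputs, but by the maximizing-distribution argument of Appendix \ref{app:maximzing dist for cut set bound} (the i.i.d.\ uniform phases forbid any gain from cross-correlating the inputs, as in Proposition \ref{thm:feedback doesnt improve capacity}), each of the six terms is still maximized by mutually independent $\CN(0,P_k)$ inputs, which is exactly the distribution used in \eqref{eqn:cap_region_mac_fb} and \eqref{eqn:outer_bound_2}. Intersecting the six constraints yields the outer bound.

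For the inner bound, the key structural observation is that the sloped boundary of $\tmR_{ICRF}^{VSI}$ is the segment joining $P_1\triangleq\big(I(X_1,X_2,X_3;Y_1|\tH_1),\,0\big)$ and $P_2\triangleq\big(I(X_1,X_3;Y_1|X_2,\tH_1),\,I(X_2,X_3;Y_2|X_1,\tH_2)\big)$: substituting $R_2=0$ and $R_2=I(X_2,X_3;Y_2|X_1,\tH_2)$ into the $R_1$-constraint reproduces $P_1$ and $P_2$, and the slope of $P_1P_2$ equals $-I(X_2;Y_1|\tH_1)/I(X_2,X_3;Y_2|X_1,\tH_2)$. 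Because the capacity region is a convex hull, it therefore suffices to achieve $P_1$ and $P_2$ and then time-share. The corner $P_2$ is the top-right corner of the VSI rectangle \eqref{eq:ICRF_region} and is already achievable by the DF scheme of Theorem \ref{thm:ICRF VS} under \eqref{eq:ICRF VSI Con}, simply ignoring the extra feedback.

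The corner $P_1$ is the new point, and this is where the feedback to the corresponding transmitter is used. Since $R_2=0$, I would turn $\Tbad$ into a second decode-and-forward relay for $m_1$: the hardened condition \eqref{eq:VSI hardened con with feedback to own transmitter}, $I(X_1,X_2,X_3;Y_1|\tH_1)\le I(X_1;Y_2|X_2,X_3,\tH_2)$, guarantees that $m_1$ is decodable from $\Rbad$'s output at every rate up to the target $I(X_1,X_2,X_3;Y_1|\tH_1)$, so $\Tbad$ recovers $m_{1,b-1}$ from its delayed feedback $y_{2,1}^{i-1}$ (it knows its own $m_2$ and, from earlier feedback, the relay message, which is exactly the conditioning $X_2,X_3$ in \eqref{eq:VSI hardened con with feedback to own transmitter}) and re-encodes it in block $b$, while the relay performs its usual DF of $m_{1,b-1}$. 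With mutually independent codebooks (again justified by the phase-fading argument, so no transmitter phase coordination is assumed), $\Rgood$ then receives three independent block-Markov streams all describing $m_1$ and, using backward decoding as in section \ref{sec: Full FB CR for VSI}, resolves $m_1$ at the multiple-access sum rate $I(X_1,X_2,X_3;Y_1|\tH_1)$, which is exactly $P_1$.

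The main obstacle is the error analysis of the $P_1$ scheme, where three decoding requirements must hold simultaneously. First, the relay must decode $m_1$; by the same entropy chain as in \eqref{eq:VSI FB loosened conditions on relay} one has $I(X_1;Y_2|X_2,X_3,\tH_2)\le I(X_1;Y_1,Y_2,Y_3|X_2,X_3,\utH)$, so the hardened condition also makes the relay-decoding constraint non-binding at the target rate. Second, $\Tbad$ must recover $m_1$ at the boosted rate, which is precisely what \eqref{eq:VSI hardened con with feedback to own transmitter} is engineered to ensure and is the reason this extra condition is imposed. Third, $\Rgood$ must attain the sum rate $I(X_1,X_2,X_3;Y_1|\tH_1)$, which requires checking that the individual and pairwise multiple-access constraints at $\Rgood$ do not cut into the sum-rate face; as in section \ref{sec: Full FB CR for VSI} the VSI conditions make the sum-rate constraint the active one. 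Once these three checks close, time-sharing $P_1$ and $P_2$ (together with the trivial points on the axes) fills out all of $\tmR_{ICRF}^{VSI}$ and establishes $\tmC_{ICRF}^{VSI}\supseteq\tmR_{ICRF}^{VSI}$.
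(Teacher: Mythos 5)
Your inner bound is essentially the paper's own proof. You identify the same two corner points (the paper's points $A$ and $B$), achieve the top corner by Theorem \ref{thm:ICRF VS} under the VSI conditions, achieve $\big(I(X_1,X_2,X_3;Y_1|\tH_1),0\big)$ by turning $\Tbad$ into a second decode-and-forward relay for $m_1$ with block-Markov encoding and backward decoding at $\Rgood$, use \eqref{eq:VSI hardened con with feedback to own transmitter} together with $I(X_1;Y_2|X_2,X_3,\tH_2)\le I(X_1;Y_1,Y_2,Y_3|X_2,X_3,\utH)$ to make decoding at $\Tbad$ and at the relay non-binding, and then time-share. One small correction: in the point-$B$ scheme all three codewords carry the single message $m_1$, so there are no ``individual and pairwise multiple-access constraints'' at $\Rgood$ to check; the backward-decoding error analysis directly gives $R_1\le I(X_2,X_3;Y_1|X_1,\tH_1)+I(X_1;Y_1|\tH_1)=I(X_1,X_2,X_3;Y_1|\tH_1)$.

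The outer bound is where you diverge from the paper, and where your justification has a real gap. You assert that ``the cut-set bound is valid for any code, so the feedback to the transmitters does not affect its applicability.'' That blanket claim is false, and the paper refutes it explicitly in Comment \ref{rem:Not_Possible_Cut-Set}: with feedback from $\Rbad$ to $\Tbad$, the assignment $\mS=\{\Tgood,\mbox{Relay},\Rbad\}$, $\mS^C=\{\Tbad,\Rgood\}$ is invalid because $X_{\mS^C}=X_2$ must be generated from $(W_2,Y_{2,1}^{i-1})$ while $Y_2$ lives on the $\mS$ side; if your claim held in general it would also deliver the no-feedback bound \eqref{eq:cs bound R12} and contradict Corollary \ref{thm:feedback does improve capacity}. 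Your six cuts happen to be exactly the ones that survive --- in each of them, every node whose past outputs drive a transmitter or the relay sits in $\mS^C$ (in particular both receivers are in $\mS^C$ for the $\tmR_{OB}$ constraints, which is the paper's ``treat $(Y_1,Y_2)$ as a single MIMO receiver'' device phrased as a cut choice) --- but you must state and verify this compatibility; as written the argument proves too much. The paper instead obtains $\tmC_{MAC-FB}$ by enhancing the channel (giving each receiver the relay's and the other receiver's outputs and extending the feedback), which reduces the ICRF to a fading vector MAC with feedback whose capacity is known, and derives $\tmR_{OB}$ separately with the joint receiver $(Y_1,Y_2)$. Your route is more direct and would be preferable once the cut-validity check is supplied; the appeal to Appendix \ref{app:maximzing dist for cut set bound} for maximization by mutually independent Gaussian inputs is correct in either route.
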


    \begin{proof}
    See Appendix \ref{Appndx:proof_prop_feedback_enlarges}.
    \end{proof}

    As a direct consequence of Proposition \ref{prop:capacity_increased_VSI_fb_corrs_Txs} we have the following corollary:

\begin{corollary}
\label{thm:feedback does improve capacity}
Consider the ICRF with two noiseless feedback links from the receivers to the relay. Then, additional feedback links from each receiver to its corresponding transmitter (see Fig. \ref{fig:ICRF3}), enlarge the capacity
region in the SI and VSI regimes.
\end{corollary}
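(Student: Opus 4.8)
The plan is to derive the corollary directly from Proposition~\ref{prop:capacity_increased_VSI_fb_corrs_Txs} by comparing its achievable region $\tmR_{ICRF}^{VSI}$ (attained when feedback is also sent to the corresponding transmitters) against the capacity regions attained with feedback to the relay only, namely the rectangular region of Theorem~\ref{thm:ICRF VS} in the VSI regime and the region of Theorem~\ref{thm:ICRF SI} in the SI regime. To certify strict enlargement it suffices to exhibit a single rate pair lying in $\tmR_{ICRF}^{VSI}$ but outside the relevant no-feedback-to-transmitters capacity region. Throughout I restrict attention to channel coefficients for which the VSI conditions \eqref{eq:ICRF VSI Con} and the hardened condition \eqref{eq:VSI hardened con with feedback to own transmitter} both hold (so that the inner bound of Proposition~\ref{prop:capacity_increased_VSI_fb_corrs_Txs} is valid) and for which the cross-link gain $a_{21}$ is strictly positive.

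For the VSI regime, abbreviate $A_1 \triangleq I(X_1,X_3;Y_1|X_2,\tH_1)$ and $B_1 \triangleq I(X_1,X_2,X_3;Y_1|\tH_1)$, evaluated at the mutually independent Gaussian inputs $X_k \sim \CN(0,P_k)$. The $R_1$-intercept of the rectangle of Theorem~\ref{thm:ICRF VS} is the pair $(A_1,0)$, whereas setting $R_2 = 0$ in $\tmR_{ICRF}^{VSI}$ gives the achievable pair $(B_1,0)$. By the chain rule $B_1 = I(X_2;Y_1|\tH_1) + A_1$, and since $X_2$ reaches $Y_1$ through the nonzero cross-link $H_{21}$ we have $I(X_2;Y_1|\tH_1) > 0$, hence $B_1 > A_1$. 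Thus $(B_1,0)$ is achievable with feedback to the corresponding transmitters but violates the constraint $R_1 \le A_1$ of the no-feedback capacity region, so the region is strictly enlarged in the VSI regime.

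For the SI regime I use that the VSI conditions are strictly stronger than the SI conditions, so any coefficients satisfying \eqref{eq:ICRF VSI Con} also lie in the SI regime. For such coefficients I first check that the sum-rate constraint of the Theorem~\ref{thm:ICRF SI} region is inactive at the rectangle, i.e. that $A_1 + A_2 \le \min\{B_1,\,B_2\}$ with $A_2 \triangleq I(X_2,X_3;Y_2|X_1,\tH_2)$ and $B_2 \triangleq I(X_1,X_2,X_3;Y_2|\tH_2)$. Indeed, condition \eqref{eq:first con in VSI con2} gives $A_2 \le I(X_2;Y_1|\tH_1)$, whence $A_1 + A_2 \le A_1 + I(X_2;Y_1|\tH_1) = B_1$; and condition \eqref{eq:first con in VSI con} together with the fact that conditioning on the independent inputs $X_2,X_3$ can only increase the mutual information gives $A_1 \le I(X_1;Y_2|\tH_2) \le I(X_1;Y_2|X_2,X_3,\tH_2)$, whence $A_1 + A_2 \le B_2$. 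Consequently the SI-regime region coincides with the rectangle on VSI coefficients, the pair $(B_1,0)$ lies outside it as before, and feedback to the corresponding transmitters therefore enlarges the SI capacity region as well.

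The main obstacle, and the only place where the interference structure is genuinely invoked, is the strict positivity $I(X_2;Y_1|\tH_1) > 0$ that yields $B_1 > A_1$; this requires the cross-link gain $a_{21}$ to be nonzero. A secondary point requiring care is confirming that the coefficient set defined by \eqref{eq:ICRF VSI Con}, \eqref{eq:VSI hardened con with feedback to own transmitter} and $a_{21} > 0$ is nonempty, so that the demonstrated enlargement is not vacuous; and, for the SI claim, the verification above that the sum-rate constraint is inactive at the rectangle, so that the SI and VSI no-feedback regions coincide on VSI coefficients and the single witnessing point settles both regimes at once.
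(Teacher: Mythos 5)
Your proposal is correct and follows essentially the same route as the paper: both exhibit the rate pair $\big(I(X_1,X_2,X_3;Y_1|\tH_1),0\big)$, achievable by Proposition~\ref{prop:capacity_increased_VSI_fb_corrs_Txs} under \eqref{eq:VSI hardened con with feedback to own transmitter}, and observe that it strictly violates the constraint $R_1\le I(X_1,X_3;Y_1|X_2,\tH_1)$ common to the regions of Theorems~\ref{thm:ICRF VS} and~\ref{thm:ICRF SI}. The only substantive difference is that the paper's accompanying comment handles the SI case directly under the SI conditions \eqref{eq:SI Conditions}, whereas you restrict to coefficients satisfying the stronger VSI conditions, so your argument certifies enlargement in the SI regime only on that sub-regime; like the paper, you also leave the non-emptiness of the witnessing coefficient set to be checked by example.
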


\subsubsection{Comments}

    \begin{remark}
    \em{}
        Note that feedback to the corresponding transmitters also increases the capacity region of the ICRF in the SI regime. The proof is identical
        to the one used in Proposition \ref{prop:capacity_increased_VSI_fb_corrs_Txs} subject to \eqref{eq:VSI hardened con with feedback to own transmitter} and
        conditions \eqref{eq:SI Conditions}. In particular, the outer bound is identical to that in the proof of
        Proposition \ref{prop:capacity_increased_VSI_fb_corrs_Txs}, and the achievable rate region is obtained by time sharing between
        $\big(I(X_1,X_2,X_3;Y_1|\tH_1),0\big)$ and the rate points of the SI region \eqref{eq:SI_region}.
    \end{remark}

    \begin{remark}
        \em{} Note that in the coding scheme described in Proposition \ref{prop:capacity_increased_VSI_fb_corrs_Txs}, Tx$_2$ behaves like a second relay node for Tx$_1$, i.e., the ICRF is transformed into a multiple relay channel.
        It should also be noted that when $\Tbad$ cooperates with $\Tgood$ and with the relay in sending $m_1$, this decreases the maximal rate of information that could be sent from $\Tbad$ to $\Rbad$.
        However, as this cooperation increases the maximal achievable rate from  $\Tgood$ to $\Rgood$, compared to the case where feedback is available only at the relay, the capacity region is increased.
    \end{remark}

    \begin{remark}
        \em{} Recall that in the classic relay channel Rx-Tx feedback does not enlarge the capacity region once feedback from the receiver is available at the relay node. In ICRF, in contrary to the classic relay
        channel, Rx-Tx feedback can enlarge the capacity region beyond what is achieved with Rx-relay feedback. Thus, not all of the insights from the study of the classic relay channel hold for the ICRF.
    \end{remark}

    \begin{remark}
    \label{rem:Tightness_Outer_Bound}
    \em{}
        The boundaries of $\tmR_{ICRF}^{VSI}$ and  $\tmC_{MAC-FB}\cap\tmR_{OB}$, together with the capacity region of the ICRF in the VSI regime are depicted in Figure \ref{fig:RxTx benefit}.
        Observe that adding feedback to corresponding transmitters increases the capacity region of the ICRF in the SI and the VSI regimes. Also observe  from the figure  that the rate point
        $\big(I(X_1,X_2,X_3;Y_1|\tH_1), I(X_1,X_2,X_3;Y_2|\tH_2) \big)$ is outside the outer bound. This shows that the outer bound is not trivial.
        Since $\tmR_{OB}\nsubseteq\tmC_{MAC-FB}$ and $\tmC_{MAC-FB}\nsubseteq\tmR_{OB}$, then both regions are needed in the outer bound.

        Next, we note that when \eqref{eq:VSI hardened con with feedback to own transmitter} holds for mutually independent Gaussian inputs,
        then the achievable rate pair $(R_{1,B}, R_{2,B}) = \big(I(X_1,X_2,X_3;Y_1|\tH_1),0\big)$  is clearly on the boundary of the capacity region of the ICRF with additional
        feedback links from each receiver to its corresponding transmitter. Therefore, for this rate pair our achievability scheme is tight.
        We note that in all expressions in the outer bound, both signals $(Y_1,Y_2)$ appear together. Therefore, we do not expect the outer bound to be tight. However, the outer bound is not trivial
        as it {\em excludes the rate point $(R_1,R_2) = (I(X_1,X_2,X_3;Y_1|\tH_1), I(X_1,X_2,X_3;Y_2|\tH_2))$}.

    \end{remark}

    \begin{remark}
    \label{rem:Not_Possible_Cut-Set}
    \em{}
        We note that it is not possible to apply  directly the cut-set bound  \cite[Thm. 15.10.1]{cover-thomas:it-book} to the present case. To demonstrate this, consider the rate from $\Tgood$ to
        $\Rgood$. To obtain the corresponding bound using the cut-set theorem one should assign $\Tgood$ and $\Rbad$ to $\mS$ and $\Tbad$ and $\Rgood$ to $\mS^c$. Now, to generate $X_{\mS^c}$
        we need both $W_2$ and $Y_{2,1}^n$ (see, e.g. \cite[Eq. (15.330)]{cover-thomas:it-book}). But as $\Rbad \in \mS$, then this is not a valid assignment.
        In order to handle feedback to corresponding transmitters, we treat $(Y_1,Y_2)$ as a single MIMO receiver when deriving $\tmR_{OB}$.
    \end{remark}

    \begin{remark}
        \em{} In \cite{Xie:04} and \cite{Xie:05}, Xie and Kumar derived achievable rates for relay channels with $k$ different relay nodes where $B-k$ messages are sent in $B$ transmission blocks. Xie and Kumar proposed
        a scheme where the $l$'th relay  node transmits only after the transmission of the source and the first $l-1$ relays are finished. Note that in general the coding scheme proposed in \cite{Xie:04} and \cite{Xie:05}
        achieves higher rates for the relay channels, however, in the SI and VSI regimes as defined in Theorems \ref{thm:ICRF VS} and \ref{thm:ICRF SI}, there is no such improvement.
    \end{remark}

\begin{figure}[t]
    \centering
     \centering
    \includegraphics[scale=0.62]{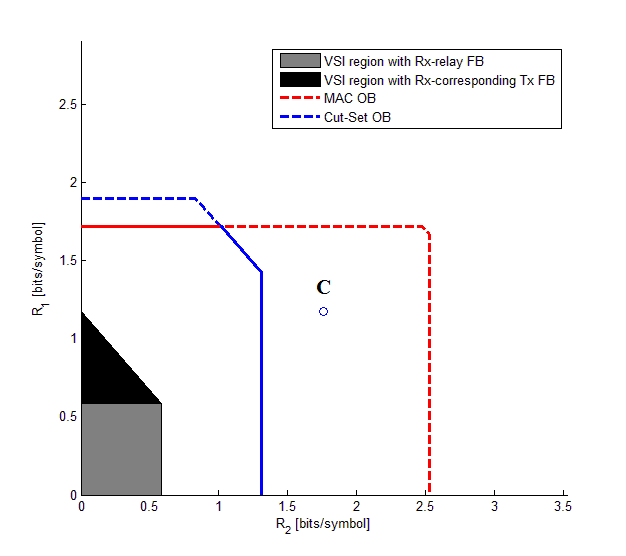}
\caption{The capacity region of the ICRF in the VSI regime with Rx-relay feedback and without Rx-Tx feedback (the gray area) and the achievable region
of Proposition \ref{prop:capacity_increased_VSI_fb_corrs_Txs} (gray and black area), when the channel is subject to phase fading
and $P_1=P_2=P_3=10$, $a_{11}=0.2$, $a_{12}=0.44$, $a_{22}=0.2$, $a_{21}=0.27$ ,
$a_{13} = 0.01$, $a_{23} = 0.6$, $a_{31}=0.1$, and $a_{32}=0.1$. Point C corresponds to the rate pair $(R_1,R_2) = (I(X_1,X_2,X_3;Y_1|\tH_1), I(X_1,X_2,X_3;Y_2|\tH_2))$.
Observe that this rate pair is outside the outer bound. }
\label{fig:RxTx benefit}
\end{figure}

    \begin{remark}
    \em{}
    Recall that in \cite{Cadambe:08}  it was shown that feedback can provide an unbounded gain as the SNR and INR increase to infinity. In \cite{Tse:09} it was shown that an unbounded capacity
    gain can be obtained also for the weak interference regime. We note that these results deal with the degrees of freedom of the channel, thus the conclusion
    holds only when the SNR and INR increase to infinity. As to the present case, we show in Proposition \ref{prop:capacity_increased_VSI_fb_corrs_Txs}
    that the rate pair $(R_1^*, R_2^*) = (I(X_1,X_2,X_3;Y_1|\tH_1),0)$
    is achievable when $I(X_1,X_2,X_3;Y_1|\tH_1) \le I(X_1;Y_2|X_2,X_3,\tH_2)$ holds. In the following we show that this implies an unbounded capacity gain over the no-feedback case for Rayleigh fading on the VSI regime.

    \smallskip

    Let $P_1 = P_2 = P_3 = SNR$, Let $a_{11}$, $a_{13}$, $a_{22}$, $a_{23}$, $a_{31}$, $a_{32}$ be constants, and let $a_{12} = \alpha_{12}SNR^{\frac{b-1}{2}}$, $a_{21} = \alpha_{21}SNR^{\frac{a-1}{2}}$.
    Note that under these definitions
    \begin{eqnarray*}
        INR_{12} & = & a_{12}^2P_1 = \alpha_{12}^2SNR^{b-1}SNR = \alpha_{12}SNR^b\\
        INR_{21} & = & a_{21}P_2 = \alpha_{21}SNR^a.
    \end{eqnarray*}
    Now consider the condition $I(X_1,X_2,X_3;Y_1|\tH_1) \le I(X_1;Y_2|X_2,X_3,\tH_2)$: Using the above definitions we obtain
    \begin{eqnarray*}
     & & \!\!\!\!\! \dsE_{\tU_1}\Big\{\log_2\big(1+ a_{11}^2|U_{11}|^2SNR \\
     & & \quad + \alpha_{21}^2 |U_{21}|^2SNR^a + a_{31}^2|U_{31}|^2SNR\big)\Big\}\\
                    & & \qquad \qquad \quad \le  \dsE_{\tU_2}\left\{\log_2\left(1 + \alpha_{12}^2|U_{12}|^2SNR^b\right)\right\}.
    \end{eqnarray*}
    Taking $SNR\rightarrow\infty$ and restricting $a>1$ and $b>1$ we arrive to the equivalent relationship
    \begin{eqnarray*}
   & &  \dsE_{\tU_1}\Big\{\log_2\big(\alpha_{21}^2 |U_{21}|^2SNR^a\big)\Big\} \\
   & & \qquad    \le \dsE_{\tU_2}\left\{\log_2\big(\alpha_{12}^2|U_{12}|^2SNR^b\big) \right\},
    \end{eqnarray*}
    which requires  $b\ge a$.  When this holds, the asymptotic sum-rate (we consider only the maximal $R_1$  when $R_2 =0$) is given by
    \begin{eqnarray*}
       & &  \!\! \!\!\!\!\!\!\!\!\!\!\!\!\!\!\!\!\!\!\!\!\!\!\!\!\!\! \mC_{\mbox{\scriptsize sum}}^{\mbox{\scriptsize FB-Tx+relay}}(R_2=0,SNR,a,b)\\
        & = & I(X_1,X_2,X_3;Y_1|\tH_1)\\
        & = & \dsE_{\tU_1}\Big\{\log_2\big(1+ a_{11}^2|U_{11}|^2SNR\\
        & & \quad + \alpha_{21}^2 |U_{21}|^2SNR^a + a_{31}^2|U_{31}|^2SNR\big)\Big\}\\
        &\stackrel{SNR\rightarrow\infty}{\longrightarrow}& \dsE_{\tU_1}\Big\{\log_2\big(\alpha_{21}^2 |U_{21}|^2SNR^a\big)\Big\} + O(1)\\
        & = & a\log_2SNR + O(1),
    \end{eqnarray*}
    where $O(1)$ means that for some $SNR$ large enough, the term is bounded by a constant, see, e.g. \cite{Zheng:03}.
    Next, we consider the sum-rate with feedback only to the relay, starting with the VSI regime.
    Recall that the same sum-rate is achieved without feedback when relay reception is good in the sense that the channel coefficients satisfy \cite[Eqns. (8)]{Dabora:101}.
    Consider first the VSI condition \eqref{eq:first con in VSI con2}:
    $I(X_2,X_3;Y_2|X_1,\tH_2) \le I(X_2;Y_1|\tH_1)$. Writing this explicitly we obtain
    \begin{eqnarray*}
       &  & \!\!\!\! \dsE_{\tU_1}\!\!\left\{\log_2\left(\!\!1 + \frac{\alpha_{21}^2|U_{21}|^2SNR^a}{1 + a_{11}^2|U_{11}|^2SNR + a_{31}^2|U_{31}|^2SNR}\right)\!\!\right\}\\
       &  &       \ge \dsE_{\tU_2}\!\Big\{\!\! \log_2\big(1 + a_{22}^2|U_{22}|^2SNR +  a_{32}^2|U_{32}|^2SNR\big)\Big\},
    \end{eqnarray*}
    which, as $SNR\rightarrow\infty$, becomes
    \begin{eqnarray*}
       &  &\!\!\!\! \dsE_{\tU_2}\Big\{\log_2\big( (a_{22}^2|U_{22}|^2 + a_{32}^2|U_{32}|^2)SNR\big)\Big\} \\
       &  &\quad   \le \dsE_{\tU_1}\left\{\log_2\left( \alpha_{21}^2|U_{21}|^2SNR^a\right)\right\}\\
      & &\qquad \quad        - \dsE_{\tU_1}\left\{\log_2\left( (a_{11}^2|U_{11}|^2 + a_{31}^2|U_{31}|^2)SNR\right)\right\}.
    \end{eqnarray*}
    This inequality holds asymptotically when $a >2$. Similarly we can show that $I(X_1,X_3;Y_1|X_2,\tH_1) \le I(X_1;Y_2|\tH_2)$ holds for $SNR \rightarrow \infty$ when $b>2$.

    Recall that at asymptotically high  SNR and INR, the VSI regime is defined as $a \ge 2$ and $b\ge  2$ (see  \cite{Cadambe:08}, \cite{Tse:09}.
    We thus conclude that with feedback only at the relay, the maximal achievable sum-rate at asymptotically high SNR in the VSI regime is
    \begin{eqnarray*}
        & & \!\!\! \mC_{\mbox{\scriptsize sum}}^{\mbox{\scriptsize FB}}(SNR,a,b) \\
        & & =  I(X_1,X_3;Y_1|X_2,\tH_1) + I(X_2,X_3;Y_2|X_1,\tH_2)\\
        & & = \dsE_{\tU_1}\Big\{\log_2\big(1+ a_{11}^2|U_{11}|^2SNR+  a_{31}^2|U_{31}|^2SNR\big)\Big\} \\
        & &  \quad  + \dsE_{\tU_2}\!\Big\{\!\log_2\big(\!1+ a_{22}^2|U_{22}|^2SNR+  a_{32}^2|U_{32}|^2SNR\big)\!\Big\} \\
        & &\stackrel{SNR\rightarrow\infty}{\longrightarrow}  \dsE_{\tU_1}\Big\{ \log_2\big((a_{11}^2|U_{11}|^2 +  a_{31}^2|U_{31}|^2)SNR\big)\Big\} \\
        & & \quad\;\;  \qquad + \dsE_{\tU_2}\Big\{\log_2\big((a_{22}^2|U_{22}|^2 +  a_{32}^2|U_{32}|^2)SNR\big)\Big\} \\
        & & =  2 \log_2 SNR + O(1).
    \end{eqnarray*}
    Comparing the sum-capacity with and without  feedback  to the transmitters we observe that in VSI
    \[
        \frac{\mC_{\mbox{\scriptsize sum}}^{\mbox{\scriptsize FB-Tx+relay}}(R_2=0,SNR,a,b)}{\mC_{\mbox{\scriptsize sum}}^{\mbox{\scriptsize FB}}(SNR,a,b)} = \frac{a}{2}.
    \]
    We conclude that adding feedback links from each receiver to the its corresponding transmitter allows an unbounded rate gain in the VSI regime.
    This follows directly from our capacity results.

    \end{remark}


\section{ICRs with Partial Feedback at the Relay}
\label{sec:Partial FB}
In this section we study the scenarios in which only partial feedback is available at the relay. We consider the case where feedback is available only from Rx$_1$,
the case where feedback is available only from Rx$_2$ is symmetric. This scenario is described in Fig. \ref{fig:ICPRF}.
\begin{figure}[h]
    \centering
    \includegraphics[scale=0.35]{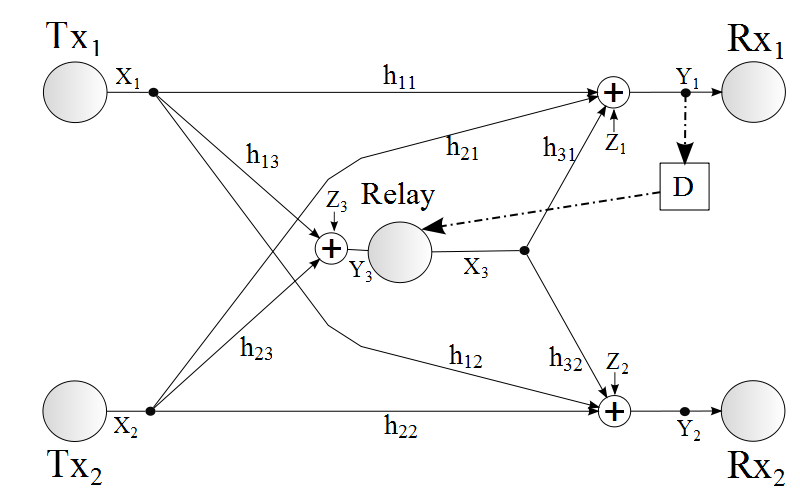}
\caption{The interference channel with a relay and feedback only from Rx$_1$ to the relay. The `D' block represents a single-symbol delay.}
\label{fig:ICPRF}
\end{figure}
\subsection{Partial Feedback in the Very Strong Interference Regime}
First, we characterize the capacity region of the ICRF in the VSI regime for the case where the relay receives feedback only from Rx$_1$, i.e, $y_{1,1}^{i-1}$ and $\hvec_{1,1}^{i-1}$
are available at the relay at time $i$ prior to transmission. In this scenario, the CSI at the relay is represented by $\utH=\big(H_{13}, H_{23},\tH_1\big)\in\setC^5 \triangleq \underline{\tilde{\mathfrak{H}}}$.
Thus, the encoder at the relay in \eqref{eq:relay encoding} in Definition \ref{def:code} is replaced by
\begin{equation}
\label{eq:relay encoding Partial FB}
x_{3,i} = t_i\big(y_{1,1}^{i-1},y_{3,1}^{i-1},h_{13,1}^{i-1},h_{23,1}^{i-1},\tilde{h}_{1,1}^{i-1}\big) \in \setC,
\end{equation}
$i = 1, 2, ..., n$. The other definitions remain unchanged, as described in section \ref{sec:Model}. Next, we have the following theorem:
\begin{theorem}
\label{thm:Partial FB VSI}
        Consider the fading ICRF with Rx-CSI. Assume that the channel coefficients are independent in time and independent of each other s.t. their phases are i.i.d. and distributed
        uniformly over $[0,2\pi)$. Let the additive noises be i.i.d. circularly symmetric complex Normal processes, $\CN(0,1)$, and let the sources have power constraints
        $\E\big\{|X_k|^2\big\} \le P_k$, $k\in\{1,2,3\}$. Assume that there is only one noiseless feedback link -- from Rx$_1$ to the relay (see Fig. \ref{fig:ICPRF}). If
        \begin{subequations}
        \label{eq:ICR partial FB VSI Con}
        \begin{eqnarray}
        I(X_1,X_3;Y_1|X_2,\tH_1) &\le& \min\big\{I(X_1;Y_1,Y_3|X_2,X_3,\utH),\nonumber\\
            & &  \qquad \quad I(X_1;Y_2|\tH_2)\big\}\\
        I(X_2,X_3;Y_2|X_1,\tH_2) &\le& I(X_2;Y_1|\tH_1),
        \end{eqnarray}
        \end{subequations}
        where the mutual information expressions are evaluated with $X_k\sim \CN(0,P_k), k\in\{1,2,3\}$, mutually independent,
        then the capacity region is given by all the nonnegative rate pairs s.t.
        \begin{subequations}
        \label{eq:ICR partial FB_region}
        \begin{eqnarray}
        &&R_1\le I(X_1,X_3;Y_1|X_2,\tH_1)\\
        &&R_2\le I(X_2,X_3;Y_2|X_1,\tH_2),
        \end{eqnarray}
        \end{subequations}
        and it is achieved with $X_k\sim \CN(0,P_k), k\in\{1,2,3\}$, mutually independent and with DF strategy at the relay.
\end{theorem}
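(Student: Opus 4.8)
The plan is to follow the same three-part structure used in the proof of Theorem \ref{thm:ICRF VS} (cut-set outer bound, DF achievable region, and matching the two under the VSI conditions), modifying only those steps that are sensitive to the fact that the relay now observes the feedback signal of $\Rgood$ alone. First I would write down the cut-set outer bound for this configuration. The cuts $\mathcal{S}=\{\mbox{Tx}_1,\mbox{Relay},\mbox{Rx}_2\}$ and $\mathcal{S}=\{\mbox{Tx}_2,\mbox{Relay},\mbox{Rx}_1\}$ still yield $R_1\le I(X_1,X_3;Y_1|X_2,\tH_1)$ and $R_2\le I(X_2,X_3;Y_2|X_1,\tH_2)$, since these expressions depend only on the receiver CSI $\tH_1,\tH_2$ and not on the feedback topology. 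The argument of Appendix \ref{app:maximzing dist for cut set bound} continues to show that these bounds are maximized by mutually independent, zero-mean, circularly symmetric complex Normal inputs transmitting at full power, because the relay input still enters only through $f(x_3)$ and the i.i.d. uniform-phase fading forbids any useful cross-correlation between codewords. This gives the outer bound \eqref{eq:ICR partial FB_region}.

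Next I would establish achievability with the same block-Markov DF codebook and backward-decoding scheme as in Section \ref{sec:Achievabiliy VSI}. The only structural change is at the relay: since only $y_{1,1}^{i-1}$ and $\hvec_{1,1}^{i-1}$ are fed back, the relay decodes $(m_{1,b},m_{2,b})$ from $\bigl(\yvec_1(b),\yvec_3(b)\bigr)$ and its CSI rather than from all three outputs. Repeating the analysis of \cite[Sec.\ 4.D]{Kramer:05} with this reduced observation, reliable decoding at the relay holds provided $R_1\le I(X_1;Y_1,Y_3|X_2,X_3,\utH)$, $R_2\le I(X_2;Y_1,Y_3|X_1,X_3,\utH)$ and $R_1+R_2\le I(X_1,X_2;Y_1,Y_3|X_3,\utH)$. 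The destination-decoding step is unchanged: each receiver first decodes the interference, treating the relay and desired signals as i.i.d.\ noise, which succeeds when $R_2\le I(X_2;Y_1|\tH_1)$ and $R_1\le I(X_1;Y_2|\tH_2)$, and then decodes its own message under the constraints $R_1\le I(X_1,X_3;Y_1|X_2,\tH_1)$ and $R_2\le I(X_2,X_3;Y_2|X_1,\tH_2)$, exactly as in \eqref{eq:VSI R1R2 decoding Bound}.

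To close the proof I would show that under \eqref{eq:ICR partial FB VSI Con} all three relay constraints are inactive, so that the achievable region collapses to \eqref{eq:ICR partial FB_region} and meets the cut-set bound. The first VSI condition explicitly supplies $I(X_1,X_3;Y_1|X_2,\tH_1)\le I(X_1;Y_1,Y_3|X_2,X_3,\utH)$, making the relay's $R_1$-constraint redundant. For the $R_2$-constraint I would reuse the entropy manipulation of \eqref{eq:VSI FB loosened conditions on relay}: because the codebooks are independent of each other and of the CSI, and $\tH_1$ is a sub-vector of $\utH$, one obtains $I(X_2;Y_1|\tH_1)\le I(X_2;Y_1,Y_3|X_1,X_3,\utH)$, so the second VSI condition renders it redundant as well. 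For the sum constraint I would split $I(X_1,X_2;Y_1,Y_3|X_3,\utH)=I(X_2;Y_1,Y_3|X_3,\utH)+I(X_1;Y_1,Y_3|X_2,X_3,\utH)$; the second term is bounded below by $I(X_1,X_3;Y_1|X_2,\tH_1)$ via the first VSI condition, and the first term satisfies $I(X_2;Y_1,Y_3|X_3,\utH)\ge I(X_2;Y_1|\tH_1)\ge I(X_2,X_3;Y_2|X_1,\tH_2)$ by the same conditioning argument together with the second VSI condition.

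The main obstacle, and the reason the VSI conditions here differ from those of Theorem \ref{thm:ICRF VS}, is precisely the relay's $R_1$-constraint. With full feedback the relay also sees $Y_2$, and the chain $I(X_1,X_3;Y_1|X_2,\tH_1)\le I(X_1;Y_2|\tH_2)\le I(X_1;Y_1,Y_2,Y_3|X_2,X_3,\utH)$ makes relay decoding of $m_1$ automatic under VSI. Once $Y_2$ is removed the second inequality is no longer available, so the extra requirement $I(X_1,X_3;Y_1|X_2,\tH_1)\le I(X_1;Y_1,Y_3|X_2,X_3,\utH)$ must be imposed directly; verifying that this single added constraint, together with the unchanged destination conditions, is exactly what is needed for the achievable region to coincide with the cut-set bound is the crux of the argument. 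The asymmetry of the conditions, namely the absence of any analogous extra term on the $R_2$-link, reflects that the fed-back output $Y_1$ carries $X_2$ strongly through the cross-link, so the relay decodes the interfering message $m_2$ without an additional restriction.
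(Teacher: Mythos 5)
Your proposal is correct and follows essentially the same route as the paper's proof: cut-set outer bound maximized by mutually independent Gaussian inputs, DF achievability with the relay decoding from $(\yvec_1,\yvec_3)$ only, interference-first backward decoding at the destinations, and the observation that the extra condition $I(X_1,X_3;Y_1|X_2,\tH_1)\le I(X_1;Y_1,Y_3|X_2,X_3,\utH)$ must be imposed explicitly because the chain through $I(X_1;Y_2|\tH_2)$ no longer bounds the relay's observation once $Y_2$ is absent from the feedback. Your redundancy arguments for the relay's $R_2$ and sum-rate constraints match the paper's \eqref{eq:eq192}--\eqref{eq:eq193} step for step.
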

\subsubsection{Proof of Theorem \ref{thm:Partial FB VSI}}
The proof consists of the following steps:
\begin{itemize}
    \item We obtain an outer bound on the capacity region using the cut-set bound.
    \item We show that the input distribution that maximizes the outer bound is zero-mean, circularly symmetric complex Normal with channel inputs independent of each other and with maximum allowed power.
    \item We derive an achievable rate region based on DF with partial feedback at the relay,  using codebooks generated according to mutually independent, zero-mean circularly symmetric complex Normal input distributions.
         \begin{itemize}
         \item We derive an achievable rate region for decoding at the relay using steps similar to \cite[Sec. 4.D]{Kramer:05}.
        \item We obtain an achievable rate region for decoding at the destinations by decoding the interference first, while treating the relay signal and the desired signal as additive i.i.d. noises, followed by using a backward decoding scheme for decoding the desired message.
        \end{itemize}
    \item We derive the VSI conditions which guarantee that decoding the interference first at each receiver, does not constrain the rate of the other pair.
    \item We obtain conditions on the channel coefficients that guarantee that the achievable rate region coincides with the cut-set bound and thus it is the capacity region of the ICR with partial feedback in the VSI regime.
\end{itemize}
We follow steps similar to the case in which full feedback is available at the relay, so we only provide a sketch of the proof. \paragraph{An Outer Bound}
An outer bound on the capacity region is given by the cut-set bound in \eqref{eq:cut-set bound}. Following similar steps as in
section \ref{sec:Full FB VSI upper bounds } and
Appendix \ref{app:maximzing dist for cut set bound}, we conclude that the outer bound is maximized by mutually independent, zero-mean, circularly symmetric complex Normal channel inputs with maximum allowed power.
\paragraph{An Achievable Rate Region}
The code construction and encoding process are similar to sections \ref{sec:ICRF Code Book} and \ref{sec:ICRF Encoding}. Hence, following similar steps as in \cite[Sec. 4.D]{Kramer:05}, we conclude that an achievable rate region for decoding at the relay is given by
\begin{subequations}
\label{eq:Partial FB VSI Decoing at Relay}
\begin{eqnarray}
& & \!\!\!\!\!\!\!\!  \!\!\!\!\!\! \!\!\!\!\!\!\R_{\mbox{\scriptsize{Relay Decoding}}}\nonumber\\
& &\!\!\!\!\!\!  \!\!\!\!\!\!   =  \bigg\{ (R_1, R_2)\in \Rset^2_+:\nonumber\\
& & R_1  \le  I(X_1;Y_1,Y_3|X_2,X_3,\utH)\\
& &    R_2  \le  I(X_2;Y_1,Y_3|X_1,X_3,\utH)\\
& &   R_1 + R_2  \le  I(X_1,X_2;Y_1,Y_3|X_3,\utH) \bigg\}.
\end{eqnarray}
\end{subequations}
At the destinations, Rx$_1$ can decode the interference if
    \begin{subequations}
    \label{eq:Partial FB VSI condition}
    \begin{equation}
    R_2 \le I(X_2;Y_1|\tH_1),
    \end{equation}
and Rx$_2$ can decode the interference if
    \begin{equation}
    R_1 \le I(X_1;Y_2|\tH_2).
    \end{equation}
    \end{subequations}
Thus, decoding the interference first, we obtain an achievable rate region for decoding at the destinations:
    \begin{subequations}
    \label{eq:Partial FB VSI Decoing at Destination}
    \begin{eqnarray}
& & \!\!\!\!\!\!\!\!\!\!  \!\!\!\!\! \!\!    \R'_{\mbox{\scriptsize{Destination Decoding}}} \nonumber\\
& & \!\!\! \!\!\!\!\!\!\!\! \!\! \!\!=  \bigg\{ (R_1, R_2)\in \Rset^2_+:\nonumber\\
& & \!\!\!\!\!\!\!\!\!\!\!\!\!\!\!\! R_1\! \le \! \min\big\{I(X_1,X_3;Y_1|X_2,\tH_1), I(X_1;Y_2|\tH_2)\big\}\\
& & \!\!\!\! \! \!\!\!\! \! \!\!\!\! \!\! R_2 \!\le \! \min\big\{I(X_2,X_3;Y_2|X_1,\tH_2),I(X_2;Y_1|\tH_1)\big\}\!\! \bigg\}.
    \end{eqnarray}
    \end{subequations}
Hence, an achievable rate region for the ICR with partial feedback is given by
\begin{equation}
\label{eq:VSI PFB achievable region }
    \R_{\mbox{\scriptsize{Achievable}}}=\R_{\mbox{\scriptsize{Relay Decoding}}} \cap \R'_{\mbox{\scriptsize{Destination Decoding}}}.
\end{equation}
\paragraph{The Capacity Region}
Next, we should guarantee that decoding the interference does not constrain the rates at the destinations, this is satisfied if
    \begin{subequations}
    \label{eq:Partial FB - hardened VSI conditons}
    \begin{eqnarray}
    & & \!\!\!\!\!\!\!\!\!\!\!\!\!\!\!\!\!\!\!\!\!\!\!\!\!\!\!\!\!\!\!\!\!\min\big\{I(X_1;Y_1,Y_3|X_2,X_3,\utH),I(X_1,X_3;Y_1|X_2,\tH_1)\big\} \nonumber\\
    \phantom{xxxxxxx}&\le& I(X_1;Y_2|\tH_2)\\
    \label{eq:eq18}
    & & \!\!\!\!\!\!\!\!\!\!\!\!\!\!\!\!\!\!\!\!\!\!\!\!\!\!\!\!\!\!\!\!\!\min\big\{I(X_2;Y_1,Y_3|X_1,X_3,\utH),I(X_2,X_3;Y_2|X_1,\tH_2)\big\}\nonumber\\
    \phantom{xxxxxxx} &\le& I(X_2;Y_1|\tH_1).
    \end{eqnarray}
    \end{subequations}
Note that as the channel inputs are mutually independent, we obtain that $I(X_2;Y_1|\tH_1)\le I(X_2;Y_1,Y_3|X_1,X_3,\utH)$. Hence, the conditions in \eqref{eq:Partial FB - hardened VSI conditons} can be reduced to
    \begin{subequations}
    \label{eq:Partial FB - hardened VSI conditons2}
    \begin{eqnarray}
   & &\!\!\!\!\!\!\!\!\!\!\!\!\!\!\!\!\!\!\!\!\!\!\!\!\!\!\!\!\!\!\!\!\!\!\!\!\!\!\!\!\!\!\!\!\!\!\!\!\!\!\!\!\!\!\!\!\!\!\!\!\!\!\!\!\min\big\{I(X_1;Y_1,Y_3|X_2,X_3,\utH),I(X_1,X_3;Y_1|X_2,\tH_1)\big\}\nonumber\\
    &\le& I(X_1;Y_2|\tH_2)\\
    \label{eq:eq182}
    I(X_2,X_3;Y_2|X_1,\tH_2) &\le& I(X_2;Y_1|\tH_1).
    \end{eqnarray}
    \end{subequations}

Finally, in order to achieve capacity, we should guarantee that,
whenever the destinations can reliably decode their messages, the
relay can decode both messages reliably. This can be done if
    \begin{subequations}
    \label{eq:RELAY hardened conditions VSI}
    \begin{eqnarray}
    \label{eq:eq19}
  & &\!\!\! I(X_1,X_3;Y_1|X_2,\tH_1) \le I(X_1;Y_1,Y_3|X_2,X_3,\utH)\phantom{xx}\\
    \label{eq:eq192}
  & & \!\!\! I(X_2,X_3;Y_2|X_1,\tH_2) \le I(X_2;Y_1,Y_3|X_1,X_3,\utH)\phantom{xx}\\
    \label{eq:eq193}
     & &\!\!\!  I(X_1,X_3;Y_1|X_2,\tH_1) \nonumber\\
     & & \!\!\! +I(X_2,X_3;Y_2|X_1,\tH_2)  \le  I(X_1,X_2;Y_1,Y_3|X_3,\utH)\phantom{xx}.
    \end{eqnarray}
    \end{subequations}
Recall that the channel inputs are independent, hence when \eqref{eq:eq182} holds, \eqref{eq:eq192} is always satisfied since
    \begin{eqnarray*}
    I(X_2,X_3;Y_2|X_1,\tH_2) & \le & I(X_2;Y_1|\tH_1)\\
    & \le & I(X_2;Y_1,Y_3|X_3,\utH)\\
    & \le & I(X_2;Y_1,Y_3|X_1,X_3,\utH).
    \end{eqnarray*}
Similar arguments show that when \eqref{eq:eq182} and \eqref{eq:eq19} hold, \eqref{eq:eq193} is always satisfied:
    \begin{eqnarray*}
  & &\!\! \!\!\!\!  I(X_1,X_2;Y_1,Y_3|X_3,\utH)\\
     & & = I(X_2;Y_1,Y_3|X_3,\utH)+I(X_1;Y_1,Y_3|X_2,X_3,\utH)\\
     & & \ge I(X_2,X_3;Y_2|X_1,\tH_2) +I(X_1,X_3;Y_1|X_2,\tH_1).
    \end{eqnarray*}
Therefore, if \eqref{eq:eq182} holds, \eqref{eq:eq19} is enough to guarantee reliable decoding at the relay (i.e., \eqref{eq:Partial FB VSI Decoing at Relay} is satisfied).

Finally note that by combining \eqref{eq:Partial FB - hardened VSI conditons2} with \eqref{eq:eq19} we obtain conditions which coincide with \eqref{eq:ICR partial FB VSI Con} and under these
conditions \eqref{eq:VSI PFB achievable region } specialize to \eqref{eq:ICR partial FB_region}. Comparing with the cut-set bound in \eqref{eq:cut-set bound},  we conclude that if
\eqref{eq:ICR partial FB VSI Con} holds, the achievable rate region  \eqref{eq:ICR partial FB_region}, coincides with the cut-set bounds and hence it is the capacity region.
\tend

\subsubsection{Ergodic Phase Fading}
When the channel is subject to ergodic phase fading, we obtain the following explicit result:
\begin{corollary}
\label{Cor:Partial FB VSI phase fading}
            Consider the phase fading ICR with Rx-CSI and partial feedback s.t. $y_{1,1}^{i-1}$ and $\hvec_{1,1}^{i-1}$ are available at the relay at time $i$. If the channel coefficients satisfy
            \begin{subequations}
            \label{eq:VSPhaseFadingICR Partial FB CONDITIONS}
            \begin{eqnarray}
            a_{11}^2P_1+a_{31}^2P_3 &\le& \min\Big\{\frac{a_{12}^2P_1}{1+a_{22}^2P_2+a_{32}^2P_3},\nonumber\\
            & & \qquad \qquad (a_{11}^2+a_{13}^2)P_1\Big\}\\
            a_{22}^2P_2+a_{32}^2P_3 &\le& \frac{a_{21}^2P_2}{1+a_{11}^2P_1+a_{31}^2P_3},
            \end{eqnarray}
            \end{subequations}
            then the capacity region is characterized by all the nonnegative rate pairs s.t.
            \begin{subequations}
            \label{eq:VSPhaseFadingICR Partial FB_region}
            \begin{eqnarray}
            R_1 &\le& \log_2\big(1+a_{11}^2P_1+a_{31}^2P_3\big)\\
            R_2 &\le& \log_2\big(1+a_{22}^2P_2+a_{32}^2P_3\big),
            \end{eqnarray}
            \end{subequations}
            and it is achieved with $X_k\sim \CN(0,P_k), k\in\{1,2,3\}$, mutually independent and with DF strategy at the relay.
\end{corollary}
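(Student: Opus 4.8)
The plan is to specialize Theorem \ref{thm:Partial FB VSI} to the phase fading model, mirroring the computation carried out for the full-feedback case in Corollary \ref{Cor:ICRF VSI Phase fading}. Since Theorem \ref{thm:Partial FB VSI} already establishes that the capacity-achieving inputs are $X_k\sim\CN(0,P_k)$, mutually independent, with DF at the relay, the only task remaining is to evaluate every mutual information expression appearing in \eqref{eq:ICR partial FB VSI Con} and \eqref{eq:ICR partial FB_region} under this distribution and the phase fading law, and to read off the resulting explicit conditions and rate region.

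First I would reuse the evaluations already established in the full-feedback corollary. The desired-plus-relay terms are computed via the right-hand side of \eqref{eq:eq6}, giving $I(X_1,X_3;Y_1|X_2,\tH_1)=\log_2(1+a_{11}^2P_1+a_{31}^2P_3)$ and $I(X_2,X_3;Y_2|X_1,\tH_2)=\log_2(1+a_{22}^2P_2+a_{32}^2P_3)$, which immediately yields the rate region \eqref{eq:VSPhaseFadingICR Partial FB_region}. For the interference-decoding terms I would invoke the phase-fading observation made in the footnote of Corollary \ref{Cor:ICRF VSI Phase fading}, namely that $H_{31}X_3$, $H_{11}X_1$, $H_{32}X_3$, $H_{22}X_2$ are mutually independent, zero-mean, circularly symmetric complex Normal and hence may be treated as additive Gaussian noise, so that $I(X_1;Y_2|\tH_2)=\log_2\big(1+\frac{a_{12}^2P_1}{1+a_{22}^2P_2+a_{32}^2P_3}\big)$ and $I(X_2;Y_1|\tH_1)=\log_2\big(1+\frac{a_{21}^2P_2}{1+a_{11}^2P_1+a_{31}^2P_3}\big)$.

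The single genuinely new expression, absent from the full-feedback corollary, is the relay term $I(X_1;Y_1,Y_3|X_2,X_3,\utH)$, which arises precisely because the relay now receives feedback only from Rx$_1$ and must resolve $X_1$ from its own observation $Y_3$ together with the fed-back $Y_1$. Here I would argue that, conditioned on $(X_2,X_3)$ and on the CSI $\utH=(H_{13},H_{23},\tH_1)$, the input $X_1$ is seen through a $1\times 2$ SIMO channel with the known gains $H_{11}=a_{11}e^{j\Theta_{11}}$ and $H_{13}=a_{13}e^{j\Theta_{13}}$; with a Gaussian input and independent unit-variance noises $Z_1,Z_3$, the conditional mutual information equals $\log_2\big(1+(|H_{11}|^2+|H_{13}|^2)P_1\big)$, and under phase fading the magnitudes are the deterministic constants $a_{11},a_{13}$, so this evaluates to $\log_2\big(1+(a_{11}^2+a_{13}^2)P_1\big)$ with no averaging required. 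Verifying this SIMO evaluation is the main (and essentially the only) obstacle; everything else is a direct transcription of Corollary \ref{Cor:ICRF VSI Phase fading}.

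Substituting these evaluations into the VSI conditions \eqref{eq:ICR partial FB VSI Con} then produces exactly \eqref{eq:VSPhaseFadingICR Partial FB CONDITIONS}: the bound $I(X_1,X_3;Y_1|X_2,\tH_1)\le\min\{I(X_1;Y_1,Y_3|X_2,X_3,\utH),I(X_1;Y_2|\tH_2)\}$ collapses to $a_{11}^2P_1+a_{31}^2P_3\le\min\{\frac{a_{12}^2P_1}{1+a_{22}^2P_2+a_{32}^2P_3},(a_{11}^2+a_{13}^2)P_1\}$, while the second bound reduces to $a_{22}^2P_2+a_{32}^2P_3\le\frac{a_{21}^2P_2}{1+a_{11}^2P_1+a_{31}^2P_3}$. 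With the conditions and the region matched, the corollary follows directly from Theorem \ref{thm:Partial FB VSI}.
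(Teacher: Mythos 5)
Your proposal is correct and follows essentially the same route as the paper: it specializes Theorem \ref{thm:Partial FB VSI} by reusing the phase-fading evaluations from Corollary \ref{Cor:ICRF VSI Phase fading} and computing the one new term $I(X_1;Y_1,Y_3|X_2,X_3,\utH)$ as a SIMO mutual information $\E_{\utH}\{\log_2(1+P_1(|H_{11}|^2+|H_{13}|^2))\}$, which collapses to $\log_2(1+(a_{11}^2+a_{13}^2)P_1)$ because the phase-fading magnitudes are deterministic. This is exactly the calculation the paper carries out in equation \eqref{eq:eq13} and the surrounding text.
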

\begin{proof}
The result follows from the expressions of Theorem \ref{thm:Partial FB VSI}. In order to obtain the conditions on the channel coefficients in \eqref{eq:VSPhaseFadingICR Partial FB CONDITIONS},
we evaluate $I(X_1;Y_1,Y_3|X_2,X_3,\utH)$ using mutually independent, zero-mean circularly symmetric complex Normal channel inputs. This leads to
\begin{eqnarray}
\label{eq:eq13}
& &\!\!\!\!\!\!\!\! I(X_1;Y_1,Y_3 | X_2,X_3,\utH)\nonumber\\
& & \quad =\E_{\utH}\bigg\{\log_2\Big(1+P_1(|H_{11}|^2+|H_{13}|^2)\Big)\bigg\}.
\end{eqnarray}
Next, note that under the phase fading model $|H_{11}|^2 = a_{11}^2$ and $|H_{13}|^2 = a_{13}^2$, thus \eqref{eq:eq13} can be rewritten as
\begin{eqnarray*}
 I(X_1;Y_1,Y_3|X_2,X_3,\utH)  =\log_2\big(1+P_1(a_{11}^2+a_{13}^2)\big).
\end{eqnarray*}
The rest of the expressions in Theorem \ref{thm:Partial FB VSI} have been already evaluated for the phase fading model in Corollary \ref{Cor:ICRF VSI Phase fading}.
\end{proof}

\subsubsection{Ergodic Rayleigh Fading}
Define $\tU_k \triangleq(U_{kk},U_{3k}), \hat{U}_k \triangleq(U_{1k},U_{2k},U_{3k}), k\in\{1,2\}$. If the channel is subject to ergodic Rayleigh fading, we obtain the following explicit result:
\begin{corollary}
            Consider the Rayleigh fading ICR with Rx-CSI and partial feedback s.t. $y_{1,1}^{i-1}$ and $\hvec_{1,1}^{i-1}$ are available at the relay at time $i$. If the channel coefficients satisfy
            \begin{subequations}
            \begin{eqnarray}
            &   & \!\!\!\!\!\!\!\!\!\!\!\!\!\!\E_{\tU_1}\Big\{\!\log_2\big(1\!+\!a_{11}^2|U_{11}|^2P_1\!+\!a_{31}^2|U_{31}|^2P_3\big)\Big\} \nonumber\\
            &  & \!\!\!\!\!\!\!\!\!\!\!\!\!\le     \E_{U_{11},U_{13}}\Big\{\!\log_2\big(1\!+\!(a_{11}^2|U_{11}|^2\!+\!a_{13}^2|U_{13}|^2)P_1\big)\!\Big\}\\
            &   & \!\!\!\!\!\!\!\!\!\!\!\!\!\!\E_{\tU_1}\Big\{\log_2\big(1+a_{11}^2|U_{11}|^2P_1+a_{31}^2|U_{31}|^2P_3\big)\Big\} \nonumber\\
            &  & \!\!\!\!\!\!\!\!\!\!\!\!\!\le   \E_{\hat{U}_2} \!\!\left\{\! \log_2\!\left( \!1\! +\! \frac{ a_{12}^2 |U_{12}|^2 P_1 }
            {1\! +\!  a_{22}^2|U_{22}|^2P_2\! +\! a_{32}^2 |U_{32}|^2 P_3}  \right) \!\right\}\\
            &   & \!\!\!\!\!\!\!\!\!\!\!\!\!\!\E_{\tU_2}\!\Big\{\!\log_2\big(1\!+\!a_{22}^2|U_{22}|^2P_2\!+\!a_{32}^2|U_{32}|^2P_3\big)\!\Big\} \nonumber\\
            &  &\!\!\!\!\!\!\!\!\!\!\!\!\! \le  \E_{\hat{U}_1}\!\! \left\{\! \log_2\!\left(\! 1\! +\! \frac{ a_{21}^2 |U_{21}|^2 P_2 }
            {1\! +\!  a_{11}^2|U_{11}|^2P_1\! +\! a_{31}^2 |U_{31}|^2 P_3}  \right) \!\right\}
            \end{eqnarray}
            \end{subequations}
            then the capacity region is characterized by all the nonnegative rate pairs s.t.
            \begin{subequations}
            \label{eq:VSRayleighFadingICR Partial FB_region}
            \begin{eqnarray}
            & & \!\!\!\!\!\!\!\!\!\!\!\!\!\!\!\!\! R_1 \le \E_{\tU_1}\Big\{\!\log_2\big(1\!+\!a_{11}^2|U_{11}|^2P_1\!+\!a_{31}^2|U_{31}|^2P_3\big)\!\Big\}\\
            & & \!\!\!\!\!\!\!\!\!\!\!\!\!\!\!\!\! R_2 \le \E_{\tU_2}\Big\{\!\log_2\big(1\!+\!a_{22}^2|U_{22}|^2P_2\!+\!a_{32}^2|U_{32}|^2P_3\big)\!\Big\},
            \end{eqnarray}
            \end{subequations}
            and it is achieved with $X_k\sim \CN(0,P_k), k\in\{1,2,3\}$, mutually independent and with DF strategy at the relay.
\end{corollary}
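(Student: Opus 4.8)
The plan is to obtain the corollary as a direct specialization of Theorem \ref{thm:Partial FB VSI} to the Rayleigh fading model, so that the whole task reduces to evaluating, under $H_{lk}=a_{lk}U_{lk}$ with $U_{lk}\sim\CN(0,1)$ and mutually independent inputs $X_k\sim\CN(0,P_k)$, the five mutual-information functionals appearing in the conditions \eqref{eq:ICR partial FB VSI Con} and in the region \eqref{eq:ICR partial FB_region}. The organizing observation I would use throughout is that, conditioned on the pertinent Rx-CSI, every product term $H_{lk}X_k$ is conditionally $\CN(0,a_{lk}^2|U_{lk}|^2P_k)$; hence each conditional channel is an ordinary Gaussian channel and the corresponding mutual information equals the expectation over the fading of a $\log_2(1+\mathrm{SINR})$ term.

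First I would evaluate the two rate constraints. For $I(X_1,X_3;Y_1|X_2,\tH_1)$, conditioning on $X_2$ and $\tH_1$ cancels the term $H_{21}X_2$ and leaves the two independent Gaussian inputs $X_1,X_3$ seen through $H_{11},H_{31}$; applying the capacity expression of \cite[Theorem~8]{Kramer:05} (as restated in the preliminaries) yields $\E_{\tU_1}\{\log_2(1+a_{11}^2|U_{11}|^2P_1+a_{31}^2|U_{31}|^2P_3)\}$, and the symmetric computation gives $I(X_2,X_3;Y_2|X_1,\tH_2)=\E_{\tU_2}\{\log_2(1+a_{22}^2|U_{22}|^2P_2+a_{32}^2|U_{32}|^2P_3)\}$. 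Substituting these into \eqref{eq:ICR partial FB_region} produces the region \eqref{eq:VSRayleighFadingICR Partial FB_region}.

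Next I would evaluate the three conditions. The relay term $I(X_1;Y_1,Y_3|X_2,X_3,\utH)$ is a two-branch SIMO observation of $X_1$: conditioning on $X_2,X_3,\utH$ removes every other signal, leaving $Y_1=H_{11}X_1+Z_1$ alongside $Y_3=H_{13}X_1+Z_3$, which maximal-ratio combining collapses into a scalar channel of SNR $(a_{11}^2|U_{11}|^2+a_{13}^2|U_{13}|^2)P_1$; taking $\E_{U_{11},U_{13}}$ gives the first stated condition. For the interference-decoding terms $I(X_1;Y_2|\tH_2)$ and $I(X_2;Y_1|\tH_1)$, the decoding receiver treats its desired signal and the relay signal as noise, which conditioned on its Rx-CSI has variance $1+a_{22}^2|U_{22}|^2P_2+a_{32}^2|U_{32}|^2P_3$ (respectively $1+a_{11}^2|U_{11}|^2P_1+a_{31}^2|U_{31}|^2P_3$); the mutual informations therefore equal the expectations over $\hat{U}_2$ and $\hat{U}_1$ of the displayed $\log_2(1+\mathrm{SINR})$ expressions, which are precisely the remaining two conditions. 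Feeding all five evaluations into \eqref{eq:ICR partial FB VSI Con} and \eqref{eq:ICR partial FB_region} finishes the argument.

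The one point I would treat carefully is the conditional-Gaussianity step. For Rayleigh fading each $H_{lk}$ is itself complex Normal, so a term such as $H_{22}X_2$ is a product of two complex Normal variables and is not Gaussian unconditionally; the $\log_2(1+\mathrm{SINR})$ forms are valid only after conditioning on the Rx-CSI, which fixes the coefficients and renders the residual desired-plus-relay signal conditionally Gaussian, with the outer expectation then averaging over the fading. This is also the reason the conditions remain expectations rather than collapsing to the single-coefficient closed forms (via $e^{1/c}E_1(1/c)$) used in the earlier full-feedback corollaries: here the SINR denominators contain several independent fading magnitudes, so no single $E_1$ substitution applies.
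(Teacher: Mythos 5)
Your proposal is correct and follows essentially the same route as the paper: the corollary is obtained by specializing Theorem \ref{thm:Partial FB VSI} to the Rayleigh model and evaluating the five mutual-information expressions in \eqref{eq:ICR partial FB VSI Con} and \eqref{eq:ICR partial FB_region} under mutually independent $\CN(0,P_k)$ inputs, with the conditional-on-Rx-CSI Gaussianity argument you highlight being exactly the justification the paper relies on (cf.\ the evaluation of $I(X_1;Y_1,Y_3|X_2,X_3,\utH)$ in Corollary \ref{Cor:Partial FB VSI phase fading} and the discussion of Rayleigh fading in the full-feedback corollaries). Your write-up is in fact more explicit than the paper's one-line proof, which simply defers to the phase-fading analogue.
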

\begin{proof}
The proof follows similar arguments to those used in the proof of Corollary \ref{Cor:Partial FB VSI phase fading}.
\end{proof}

\subsubsection{Comments}
\begin{remark}
\em{}Comparing to Theorem \ref{thm:ICRF VS}, we observe that in Theorem \ref{thm:Partial FB VSI} there is an additional condition in \eqref{eq:ICR partial FB VSI Con}.
This is due to the fact that with partial feedback at the relay, the cut-set bound cannot be achieved at the destination without guaranteeing reliable decoding at the relay.
\end{remark}

\begin{remark}
\em{} For the configuration described in Theorem \ref{thm:Partial FB VSI}, then from Proposition \ref{thm:feedback doesnt improve capacity} it is clear that adding a noiseless feedback link,
from Rx$_1$ to Tx$_2$ (partial Rx-opposite Tx feedback) does not enlarge the capacity region in the VSI regime.
\end{remark}
\begin{remark}
\em{} Consider the configuration described in Theorem \ref{thm:Partial FB VSI} with an additional  noiseless feedback link from Rx$_1$ to Tx$_1$ (partial Rx-corresponding Tx feedback).
Then, following the same arguments as in Proposition \ref{prop:capacity_increased_VSI_fb_corrs_Txs}, we conclude that if
        \begin{subequations}
        \label{eq:ICR partial FB VSI Con2}
        \begin{eqnarray}
        \!\!\!\!\!\!\!\!\!\!I(X_1,X_3;Y_1|X_2,\tH_1) &\!\le\!& \min\big\{I(X_1;Y_2|\tH_2),\nonumber\\
                &\!\! & I(X_1;Y_1,Y_3|X_2,X_3,\utH)\!\big\}\\
        \!\!\!\!\!\!\!\!\!\!I(X_1,X_2,X_3;Y_2|\tH_2) &\!\le\!& I(X_2;Y_1|\tH_1),
        \end{eqnarray}
        \end{subequations}
hold,
then $(R_1,R_2)=\big(0,I(X_1,X_2,X_3;Y_2|\tilde{H}_2)\big)$ is achievable.
Note that \eqref{eq:ICR partial FB VSI Con2} guarantees \eqref{eq:ICR partial FB VSI Con}. Hence, when partial feedback (only from Rx$_1$) is available at the relay,
then an additional feedback link from Rx$_1$ to Tx$_1$ increases the capacity region in the VSI regime.
Figure \ref{fig:RxTx partial benefit} demonstrates the corresponding capacity region.
\begin{figure}[h]
    \centering
    \includegraphics[scale=0.60]{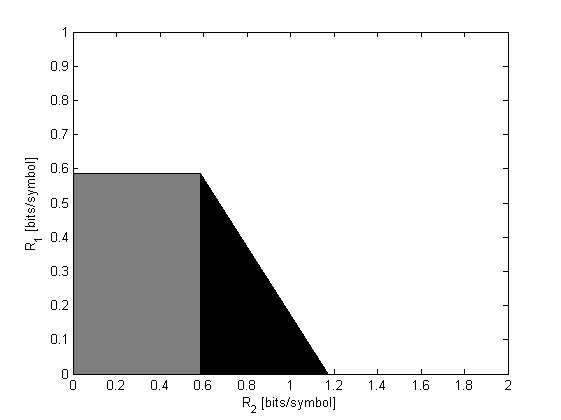}
\caption{The capacity region of the ICRF in the VSI regime with Rx$_1$-relay feedback and without Rx-Tx feedback (the gray area) and the achievable region with
Rx$_1$-relay and Rx$_1$-Tx$_1$ feedback (gray and black area), when the channel is subject to phase fading and $P_1=P_2=P_3=10, a_{11}=0.2, a_{12}=0.27, a_{21}=0.44 ,a_{22}=0.2, a_{31}=0.1,a_{32}=0.1$ and $a_{13}=0.3$.}
\label{fig:RxTx partial benefit}
\end{figure}
\end{remark}

\subsection{Partial Feedback in the Strong Interference Regime}
In this section, we characterize the capacity region of the ICR with partial feedback in the strong interference regime. We consider the case in which feedback is available only from Rx$_1$.
\begin{theorem}
\label{thm:ICR Partial FB SI}
                For the scenario of Theorem \ref{thm:Partial FB VSI}, if the channel coefficients satisfy
                \begin{subequations}
                \label{eq:Partial FB SI Conditions}
                \begin{eqnarray}
        \!\!\!\!\!\!\!\!\!  \!\!      I(X_1,X_3;Y_2|X_2,\tH_2) &\!\le\!& I(X_1;Y_1,Y_3|X_2,X_3,\utH)\\
        \!\!\!\!\!\!\!\!\!    \!\!    I(X_2,X_3;Y_1|X_1,\tH_1) &\!\le\!& I(X_2;Y_1,Y_3|X_1,X_3,\utH)\\
        \!\!\!\!\!\!\!\!\!      \!\!  I(X_1,X_2,X_3;Y_2|\tH_2) &\!\le\!& I(X_1,X_2;Y_1,Y_3|X_3,\utH)\\
        \!\!\!\!\!\!\!\!\!    \!\!    I(X_1,X_3;Y_1|X_2,\tH_1) &\!\le\!& I(X_1;Y_2|X_2,\tH_2)\\
        \!\!\!\!\!\!\!\!\!      \!\!  I(X_2,X_3;Y_2|X_1,\tH_2) &\!\le\!& I(X_2;Y_1|X_1,\tH_1),
                \end{eqnarray}
                \end{subequations}
                where the mutual information expressions are evaluated with $X_k\sim \CN(0,P_k), k\in\{1,2,3\}$, mutually independent,
                then the capacity region is given by all the nonnegative rate pairs s.t.
                \begin{subequations}
                \label{eq:Partial FB SI_region}
                \begin{eqnarray}
                \!\!\!\!R_1     &\le& I(X_1,X_3;Y_1|X_2,\tH_1)\\
                \!\!\!\!R_2     &\le& I(X_2,X_3;Y_2|X_1,\tH_2)\\
                \!\!\!\!R_1+R_2 &\le& \min\big\{I(X_1,X_2,X_3;Y_1|\tH_1),\nonumber\\
                & & \quad \quad \qquad I(X_1,X_2,X_3;Y_2|\tH_2)\big\},
                \end{eqnarray}
                \end{subequations}
                and it is achieved with $X_k\sim \CN(0,P_k), k\in\{1,2,3\}$, mutually independent and with DF strategy at the relay.
\end{theorem}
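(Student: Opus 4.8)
The plan is to follow the enhanced-MARC (EMARC) decomposition of Theorem \ref{thm:ICRF SI}, but with the relay observation restricted to the partial-feedback signal of Theorem \ref{thm:Partial FB VSI}. First I would define EMARC$_1$ and EMARC$_2$ as in Section \ref{sec: Full FB SI Achievable region}, the only change being that the relay now receives feedback from Rx$_1$ alone, so its encoder is the causal map \eqref{eq:relay encoding Partial FB} and its joint-typicality decoder acts on $\big(\yvec_1(b),\yvec_3(b),\uth(b)\big)$ instead of $\big(\yvec_1(b),\yvec_2(b),\yvec_3(b),\uth(b)\big)$. Redoing the derivation of Appendix \ref{App:MARC+feedback - capacity region} with this modification, the relay-decoding constraints become those of \eqref{eq:Partial FB VSI Decoing at Relay} (mutual informations of $(Y_1,Y_3)$ conditioned on $\utH=(H_{13},H_{23},\tH_1)$), while the destination-decoding constraints for EMARC$_k$ are the MAC region at Rx$_k$, namely $R_1\le I(X_1,X_3;Y_k|X_2,\tH_k)$, $R_2\le I(X_2,X_3;Y_k|X_1,\tH_k)$, $R_1+R_2\le I(X_1,X_2,X_3;Y_k|\tH_k)$. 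As in Proposition \ref{pro:same strategy}, the maximizing input is mutually independent zero-mean circularly symmetric complex Normal at maximal power and one DF strategy $t_{DF}$ achieves both EMARC regions simultaneously, so the achievable region is $\mathcal{C}_{\mbox{\scriptsize EMARC}_1}\cap\mathcal{C}_{\mbox{\scriptsize EMARC}_2}$.

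For the converse I would invoke the SI property exactly as in Section \ref{sec: SI, CONVERSE full FB}: in the SI regime any ICRF-achievable pair is decodable at both receivers, hence $\mathcal{C}_{\mbox{\scriptsize ICRF}}\subseteq \mathcal{C}_{\mbox{\scriptsize EMARC}_1}\cap\mathcal{C}_{\mbox{\scriptsize EMARC}_2}$; combined with achievability this fixes the capacity as the intersection and leaves two tasks: identifying the SI conditions and collapsing the intersection to \eqref{eq:Partial FB SI_region}. The SI conditions come from the worst-case interference-decoding argument of Theorem \ref{thm:ICRF SI}: Rx$_1$ forms $\yvec_1'=\hv_{21}\cdot\xvec_2+\hv_{31}\cdot\xvec_3+\zvec_1$ and decodes $m_2$ treating $\hv_{31}\cdot\xvec_3$ as independent noise, which succeeds whenever $\max R_2\le I(X_2;Y_1|X_1,\tH_1)$, and symmetrically at Rx$_2$; since the optimal inputs are independent Gaussians these reduce to the fourth and fifth inequalities of \eqref{eq:Partial FB SI Conditions}. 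Using these two inequalities and the fact that conditioning on $X_j$ only raises the relevant mutual informations, the destination part of the intersection collapses: $I(X_1,X_3;Y_1|X_2,\tH_1)\le I(X_1;Y_2|X_2,\tH_2)\le I(X_1,X_3;Y_2|X_2,\tH_2)$ makes the active $R_1$-bound $I(X_1,X_3;Y_1|X_2,\tH_1)$, symmetrically the active $R_2$-bound is $I(X_2,X_3;Y_2|X_1,\tH_2)$, and the sum bound becomes the stated minimum over the two receivers.

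The genuinely new step, and the one I expect to be the main obstacle, is the relay-decoding constraints, which under partial feedback are \emph{not} automatically slack as they were in Theorem \ref{thm:ICRF SI}. There the relay saw both $Y_1$ and $Y_2$, so reliable destination decoding forced reliable relay decoding for free; here the relay sees only $(Y_1,Y_3)$, so I must impose the first three inequalities of \eqref{eq:Partial FB SI Conditions} and verify that they push the claimed region inside \eqref{eq:Partial FB VSI Decoing at Relay}. Chaining the fourth condition with the first gives $I(X_1,X_3;Y_1|X_2,\tH_1)\le I(X_1;Y_2|X_2,\tH_2)\le I(X_1,X_3;Y_2|X_2,\tH_2)\le I(X_1;Y_1,Y_3|X_2,X_3,\utH)$, so the relay $R_1$-constraint is inactive; chaining the fifth with the second gives $I(X_2,X_3;Y_2|X_1,\tH_2)\le I(X_2;Y_1|X_1,\tH_1)\le I(X_2,X_3;Y_1|X_1,\tH_1)\le I(X_2;Y_1,Y_3|X_1,X_3,\utH)$, so the relay $R_2$-constraint is inactive; and the third condition directly yields $\min\{I(X_1,X_2,X_3;Y_1|\tH_1),I(X_1,X_2,X_3;Y_2|\tH_2)\}\le I(X_1,X_2,X_3;Y_2|\tH_2)\le I(X_1,X_2;Y_1,Y_3|X_3,\utH)$, so the relay sum-constraint is inactive. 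With every relay constraint and the two redundant destination constraints thus shown to be inactive, the intersection reduces precisely to \eqref{eq:Partial FB SI_region}, completing the proof. I would close by noting that these three extra inequalities are the analogue of \eqref{eq:RELAY hardened conditions VSI} from the partial-feedback VSI proof and are exactly the cost of feeding back only Rx$_1$: they shrink the set of channel coefficients for which DF is capacity-achieving relative to the full-feedback Theorem \ref{thm:ICRF SI}.
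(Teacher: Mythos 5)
Your proposal is correct and takes essentially the same route as the paper's proof: the paper likewise decomposes the channel into two component MARCs (which it calls the MARCF and the PEMARC to stress their asymmetry under Rx$_1$-only feedback), proves DF achievability with mutually independent Gaussian inputs, uses the SI property for the converse, and obtains the first three inequalities of \eqref{eq:Partial FB SI Conditions} as exactly the relay-decodability conditions that are no longer automatic under partial feedback. Your chaining arguments showing the relay constraints are inactive (including handling the sum-rate via the minimum) match the paper's ``simplification of the capacity region'' step up to minor reorganization.
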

\subsubsection{Proof of Theorem \ref{thm:ICR Partial FB SI}}
The proof consists of the following steps:
\begin{itemize}
    \item From the ICRF we obtain two component channels: the first is the MARC with feedback (MARCF) which is a MARC whose message destination is Rx$_1$ and the relay
        receives feedback only from Rx$_1$.     The MARCF is defined by equations \eqref{eqn:Rx_1_sig} and \eqref{eqn:Relay_sig}.
        The second is the partially enhanced MARC (PEMARC) defined as a MARC whose message destination is Rx$_2$, while the relay receives feedback
        from Rx$_1$.
        The PEMARC is defined by equations \eqref{eqn:ICR_model} and $(y_{1,1}^{i-1},\hvec_{1,1}^{i-1})$ denotes the available feedback at the relay at
        time $i$ in both components.
        Note that contrary to Theorem \ref{thm:ICRF SI}, in the present case {\em the component channels are not symmetric}.
    \item We obtain the conditions on the channel coefficients s.t. the capacity of the MARCF and the PEMARC is achieved with DF at the relay.
            We show that for each component, capacity is achieved with zero-mean, circularly symmetric complex Normal channel inputs.
    \item We show that the same coding strategy at the sources and at the relay achieves capacity for both the MARCF and the PEMARC simultaneously.
    \item We therefore provide an achievable rate region for the ICR with partial feedback as the intersection of the capacity regions of the MARCF and the PEMARC.
    \item We show that when the conditions for SI are satisfied, the intersection of the capacity regions of the MARCF and the PEMARC contains the capacity region of the ICR with partial feedback.
    \item We conclude the capacity region of the ICR with partial feedback in the SI regime, to be the intersection of the capacity regions of the MARCF and the PEMARC.
    \item We explicitly characterize the SI conditions for the ICR with partial feedback.
\end{itemize}
\paragraph{An Achievable Rate Region}
Recall the achievable rate region for decoding at the destination as in Theorem \ref{thm:ICRF SI}.
Define  $\uavec$ as in section \ref{sec: Full FB SI Achievable region} and let $\mathcal{C}_{\scriptsize{\mbox{MARCF}}}(\uavec)$ and
$\mathcal{C}_{\scriptsize{\mbox{PEMARC}}}(\uavec)$ denote the capacity regions of the MARCF and the PEMARC, respectively. Moreover, let $t_1(R_1,R_2)$ and $t_2(R_1,R_2)$ denote the coding strategy for
the MARCF and the PEMARC, respectively.
From the derivation in Appendix \ref{App:MARC+feedback - capacity region} it follows that when only partial feedback is available at the relay,
 achievable rate regions for the MARCF and the PEMARC obtained with DF at the relay are given by
    \begin{subequations}
    \label{eq:eq22}
    \begin{eqnarray}
     & &   \!\!\!\!\!\!\!\!\!\! \R'_{\scriptsize{\mbox{MARCF/PEMARC}}}\nonumber\\
     & &  \!\!\!\! =  \bigg\{ (R_1, R_2)\in \Rset^2_+:\nonumber\\
     & & \qquad R_1 \le \min\big\{I(X_1;Y_1,Y_3|X_2,X_3,\utH),\nonumber\\
     & & \qquad\qquad \qquad I(X_1,X_3;Y_m|X_2,\tH_m)\big\}\\
     & &  \qquad R_2 \le \min\big\{I(X_2;Y_1,Y_3|X_2,X_3,\utH),\nonumber\\
     & & \qquad \qquad\qquad I(X_2,X_3;Y_m|X_1,\tH_m)\big\}\\
     & &  \qquad R_1+R_2 \le \min\big\{I(X_1,X_2;Y_1,Y_3|X_3,\utH),\nonumber\\
     & & \qquad \qquad\qquad I(X_1,X_2,X_3;Y_m|\tH_m)\big\} \bigg\},
    \end{eqnarray}
    \end{subequations}
where $m=1$ for the MARCF and $m=2$ for the PEMARC, and decoding at the relay at block $b$ is done using rule \eqref{eq:VSI Decoing at Relay Rule} without considering $\yvec_2(b)$,
and $X_k\sim\CN(0,P_k), k\in\{1,2,3\}$, mutually independent.
Let $t_{DF}(R_1,R_2)$ denote the coding scheme with mutually independent complex Normal inputs, achieving rates $R_1$ and $R_2$ in each component channel.
Following steps similar to Appendix \ref{app:maximzing dist for cut set bound}, we can show that
in order for $t_{DF}(R_1,R_2)$ to achieve the capacity of each component channel, we should guarantee that the relay decodes both messages reliably, i.e., we should guarantee that
    \begin{subequations}
    \label{eq:EMARC hardened conditions}
    \begin{eqnarray}
    \label{eq:eq21}
    & & \!\!\!\! \!\!\!\!\!\!\!\! \!\!\!\!  \max_{k\in\{1,2\}}\left\{I(X_1,X_3;Y_k|X_2,\tH_k)\right\}\nonumber\\
    & & \qquad \qquad \le I(X_1;Y_1,Y_3|X_2,X_3,\utH)\\
    \label{eq:eq26}
    & &\!\!\!\! \!\!\!\! \!\!\!\! \!\!\!\! \max_{k\in\{1,2\}}\left\{I(X_2,X_3;Y_k|X_1,\tH_k)\right\}\nonumber\\
     & &\qquad \qquad \le I(X_2;Y_1,Y_3|X_1,X_3,\utH)\\
    \label{eq:eq25}
    & & \!\!\!\! \!\!\!\! \!\!\!\! \!\!\!\! \max_{k\in\{1,2\}}\left\{I(X_1,X_2,X_3;Y_k|\tH_k)\right\}\nonumber\\
    & &\qquad \qquad \le I(X_1,X_2;Y_1,Y_3|X_3,\utH).
    \end{eqnarray}
    \end{subequations}
Note that if \eqref{eq:EMARC hardened conditions} holds then the capacity regions of the component channels are given by:
    \begin{subequations}
    \label{eq:eq23}
    \begin{eqnarray}
     & & \!\!\!\!\!\!\!\!\!\!\!\!\!\!\!\!    \mathcal{C}_{\scriptsize{\mbox{MARCF/PEMARC}}}(\uavec) \nonumber\\
     & & \!\!\!\!\!\!\!\!  =  \bigg\{ (R_1, R_2)\in \Rset^2_+:\nonumber\\
     & & \quad \quad   R_1 \le I(X_1,X_3;Y_m|X_2,\tH_m)\\
     & & \quad \quad  R_2 \le I(X_2,X_3;Y_m|X_1,\tH_m)\\
     & & \quad \quad  R_1+R_2 \le I(X_1,X_2,X_3;Y_m|\tH_m)\bigg\},
    \end{eqnarray}
    \end{subequations}
where $m=1$ for the MARCF and $m=2$ for the PEMARC, and they are achieved with $X_k\sim\CN(0,P_k), k\in\{1,2,3\}$, mutually independent and with DF strategy at the relay.

Note that when the conditions in \eqref{eq:EMARC hardened conditions} hold, by following similar steps as in the proof of Proposition \ref{pro:same strategy} we conclude that the same coding strategy
achieves capacity for both component channels simultaneously. Let $\R_{\scriptsize{\mbox{MARCF}}}\big(\uavec,t_1(R_1,R_2)\big)$ and $\R_{\scriptsize{\mbox{PEMARC}}}\big(\uavec,t_2(R_1,R_2)\big)$ denote
the achievable rate regions
of the MARCF and the PEMARC, respectively. Hence, when \eqref{eq:EMARC hardened conditions} holds and by choosing $t_1=t_2=t_{DF}$, any achievable rate pair
$(R_1,R_2)\in\R_{\scriptsize{\mbox{MARCF}}}\big(\uavec,t_{DF}(R_1,R_2)\big)\cap\R_{\scriptsize{\mbox{PEMARC}}}\big(\uavec,t_{DF}(R_1,R_2)\big)$ is also achievable in the ICR with partial
feedback. Thus, if \eqref{eq:EMARC hardened conditions} holds then an achievable rate region for the ICR with partial feedback is given by
\begin{eqnarray}
\label{eq:ICRF Partial SI Achievable}
    \R_{\scriptsize{ICRF}}(\uavec,t_{DF}) & = &
            \R_{\scriptsize{\mbox{MARCF}}}(\uavec,t_{DF})\cap\R_{\scriptsize{\mbox{PEMARC}}}(\uavec,t_{DF})\nonumber\\
    & = &
    \mathcal{C}_{\scriptsize{\mbox{MARCF}}}(\uavec) \cap \mathcal{C}_{\scriptsize{\mbox{PEMARC}}}(\uavec)\nonumber\\
    & \subseteq & \mathcal{C}_{\scriptsize{\mbox{ICRF}}}(\uavec) .
 \end{eqnarray}
\paragraph{Converse}
The proof of the converse follows similar arguments to those used in section \ref{sec: SI, CONVERSE full FB}. Note that in the SI regime, both receivers can decode both messages without
reducing the capacity region. Thus, any achievable rate pair $(R_1,R_2)\in\mathcal{C}_{\scriptsize{\mbox{ICRF}}}(\uavec)$ is also achievable in the component channels, i.e.,
$\mathcal{C}_{\scriptsize{\mbox{ICRF}}}(\uavec)\subseteq \mathcal{C}_{\scriptsize{\mbox{MARCF}}}(\uavec)\cap\mathcal{C}_{\scriptsize{\mbox{PEMARC}}}(\uavec)$. Thus, combined with
\eqref{eq:ICRF Partial SI Achievable} we conclude that in the SI regime
\begin{equation}
\label{eq:SI PF Capacity Region}
\mathcal{C}_{\scriptsize{\mbox{ICRF}}}(\uavec)= \mathcal{C}_{\scriptsize{\mbox{MARCF}}}(\uavec)\cap\mathcal{C}_{\scriptsize{\mbox{PEMARC}}}(\uavec).
\end{equation}
Recall that when
decoding at the relay does not constrain the rates, then
$\mathcal{C}_{\scriptsize{\mbox{MARCF}}}(\uavec)$ and $\mathcal{C}_{\scriptsize{\mbox{PEMARC}}}(\uavec)$ are given in \eqref{eq:eq23}. Next, we determine the SI conditions in the ICR with
partial feedback. Note that since $\mathcal{C}_{\scriptsize{\mbox{MARCF}}}(\uavec)$ and $\mathcal{C}_{\scriptsize{\mbox{PEMARC}}}(\uavec)$ in \eqref{eq:eq23} are achieved with
$t_{DF}(R_1,R_2)$, {for the rest of the proof we only consider mutually independent, circularly symmetric complex Normal channel inputs with zero mean}. Recall the converse proof in
\ref{sec: SI, CONVERSE full FB} and consider any rate pair $(R_1,R_2)\in\mathcal{C}_{\scriptsize{\mbox{ICRF}}}(\uavec)$. If Rx$_1$ can decode $m_1$ from the signal
\begin{equation*}
    \yvec_1=\hv_{11}\cdot\xvec_1+\hv_{21}\cdot\xvec_2+\hv_{31}\cdot\xvec_3+\zvec_1,
\end{equation*}
then it can create the signal
\begin{equation*}
    \yvec_1'=\hv_{21}\cdot\xvec_2+\hv_{31}\cdot\xvec_3+\zvec_1,
\end{equation*}
from which it can decode $m_2$ by treating $\hv_{31}\cdot\xvec_3$ as additive i.i.d. noise\footnote{Note that for this step we use the fact that the capacity-achieving codebooks are generated independently.} if
\begin{equation*}
    R_2\le I(X_2;Y_1'|\tH_1)\triangleq R_2',
\end{equation*}
and similarly Rx$_2$ can decode $m_1$ if
\begin{equation*}
    R_1\le I(X_1;Y_2'|\tH_2)\triangleq R_1'.
\end{equation*}
In order to guarantee that decoding both messages at each receiver does not reduce the capacity region we should require: $\max R_1\le R_1'$ and $\max R_2\le R_2'$. This is satisfied if
\begin{subequations}
\label{eq:Partial FB SI2}
\begin{eqnarray}
    & &\!\!\!\!\!\!\!\!\!\! \sup_{f(x_1,x_2,x_3)}\Big\{I(X_1,X_3;Y_1|X_2,\tH_1)\Big\} \!\le\! I(X_1;Y_2'|\tH_2)\nonumber\\
    & &\qquad \qquad \qquad \qquad \!\stackrel{(a)}{=}  I(X_1;Y_2|X_2,\tH_2)\\
    & &\!\!\!\!\!\!\!\!\!\!\sup_{f(x_1,x_2,x_3)}\Big\{I(X_2,X_3;Y_2|X_1,\tH_2)\Big\} \!\le\! I(X_2;Y_1'|\tH_1)\nonumber\\
    & & \qquad \qquad\qquad \qquad \!= \! I(X_2;Y_1|X_1,\tH_1).
\end{eqnarray}
\end{subequations}
Note that all the above mutual information expressions are evaluated using the same channel input distribution. Here, (a) follows from the fact that the l.h.s. of \eqref{eq:Partial FB SI2} is
maximized by $X_k\sim \CN(0,P_k), k\in\{1,2,3\}$, \em{mutually independent and independent of the channel coefficients}\em{.} Thus, when \eqref{eq:EMARC hardened conditions} holds,
the conditions for the strong interference are given by
    \begin{subequations}
    \label{eq:PartialFB SI Hardened conditons}
    \begin{eqnarray}
    \label{eq:eq24}
    I(X_1,X_3;Y_1|X_2,\tH_1) &\le& I(X_1;Y_2|X_2,\tH_2)\\
    \label{eq:eq20}
    I(X_2,X_3;Y_2|X_1,\tH_2) &\le& I(X_2;Y_1|X_1,\tH_1).
    \end{eqnarray}
    \end{subequations}
We conclude that if the conditions for reliable decoding at the relay \eqref{eq:EMARC hardened conditions} and the conditions for SI \eqref{eq:PartialFB SI Hardened conditons} are satisfied,
then the capacity region is given in \eqref{eq:SI PF Capacity Region} where the rate expressions for the component channels are given in \eqref{eq:eq23}.

\paragraph{Simplification of the Capacity Region}
Next, note that when \eqref{eq:PartialFB SI Hardened conditons} is satisfied, we obtain
\begin{eqnarray*}
    I(X_1,X_3;Y_1|X_2,\tH_1)&\le& I(X_1;Y_2|X_2,\tH_2)\\
    & \le & I(X_1,X_3;Y_2|X_2,\tH_2)\\
    I(X_2,X_3;Y_2|X_1,\tH_2)&\le& I(X_2;Y_1|X_1,\tH_1)\\
    & \le & I(X_2,X_3;Y_1|X_1,\tH_1),
\end{eqnarray*}
hence, \eqref{eq:SI PF Capacity Region} can be reduced to
\begin{subequations}
\label{eq:eq41}
\begin{eqnarray}
        \label{eq:eq411}
        R_1 &\le& I(X_1,X_3;Y_1|X_2,\tH_1)\\
        \label{eq:eq412}
        R_2 &\le& I(X_2,X_3;Y_2|X_1,\tH_2)\\
        \label{eq:eq413}
        R_1+R_2 &\le& \min\big\{I(X_1,X_2,X_3;Y_1|\tH_1),\nonumber\\
        & &\qquad \quad I(X_1,X_2,X_3;Y_2|\tH_2)\big\},
\end{eqnarray}
\end{subequations}
and \eqref{eq:EMARC hardened conditions} can be reduced to
    \begin{subequations}
    \label{eq:EMARC hardened conditions2}
    \begin{eqnarray}
    \label{eq:eq212}
    & &\!\!\!\!\!\!\!\!\!\! I(X_1,X_3;Y_2|X_2,\tH_2) \nonumber\\
     & &\qquad\qquad\le I(X_1;Y_1,Y_3|X_2,X_3,\utH)\\
    \label{eq:eq262}
    & & \!\!\!\!\!\!\!\!\!\!I(X_2,X_3;Y_1|X_1,\tH_1) \nonumber\\
    & &\qquad\qquad\le I(X_2;Y_1,Y_3|X_1,X_3,\utH)\\
    \label{eq:eq252}
    & & \!\!\!\!\!\!\!\!\!\!\max_{k\in\{1,2\}}\left\{I(X_1,X_2,X_3;Y_k|\tH_k)\right\}\nonumber\\
    & &\qquad\qquad\le I(X_1,X_2;Y_1,Y_3|X_3,\utH).
    \end{eqnarray}
    \end{subequations}
Next, note that when \eqref{eq:eq24} and \eqref{eq:eq212} hold, we obtain
    \begin{eqnarray*}
    & &\!\!\!\!\!\!\!\!\!\!\!\!\! I(X_1,X_2,X_3;Y_1|\tH_1)\\
    &=& I(X_2;Y_1|\tH_1)+I(X_1,X_3;Y_1|X_2,\tH_1)\\
    &\le&I(X_2;Y_1|\tH_1)+I(X_1,X_3;Y_2|X_2,\tH_2)\\
    &\le&I(X_2;Y_1,Y_3|X_3,\utH)+I(X_1;Y_1,Y_3|X_2,X_3,\utH)\\
    &=&I(X_1,X_2;Y_1,Y_3|X_3,\utH).
    \end{eqnarray*}
Therefore, if \eqref{eq:PartialFB SI Hardened conditons} holds,
\eqref{eq:eq252} reduces to
    \begin{eqnarray}
    \label{eq:eq2522}
    \!\!\!\!\!\!\!\!I(X_1,X_2,X_3;Y_2|\tH_2) &\le& I(X_1,X_2;Y_1,Y_3|X_3,\utH).
    \end{eqnarray}
Observe that in the SI regime with partial feedback, reliable decoding at the relay does not constrain the sum-rate in the MARCF. In the PEMARC, however,
\eqref{eq:eq2522} constrains the sum-rate to guarantee reliable decoding at the relay. Hence, decoding at the relay imposes an additional condition on the channel
coefficients in the PEMARC onto those required in the MARCF.
Note that \eqref{eq:eq41} gives \eqref{eq:Partial FB SI_region} and by combining \eqref{eq:PartialFB SI Hardened conditons} with \eqref{eq:eq212}, \eqref{eq:eq262} and \eqref{eq:eq2522},
we obtain the conditions in \eqref{eq:Partial FB SI Conditions}. This completes the proof.
\tend

\subsubsection{Ergodic Phase Fading}
Define $\tilde{\underline{\theta}}\triangleq(\theta_{11},\theta_{13},\theta_{21},\theta_{23})$. When the channel is subject to ergodic phase fading, we obtain the following explicit result:
\begin{corollary}
\label{cor:Partial FB Phase}
            Consider the phase fading ICR with Rx-CSI and partial feedback s.t. $y_{1,1}^{i-1}$ and $\hvec_{1,1}^{i-1}$ are available at the relay at time $i$. If the channel coefficients satisfy
            \begin{subequations}
            \label{eq:SIPhaseFadingICR Partial FB CONDITIONS}
            \begin{eqnarray}
            \label{eqn:part_cond1}
            a_{12}^2P_1+a_{32}^2P_3 &\le& (a_{11}^2+a_{13}^2)P_1\\
            a_{21}^2P_2+a_{31}^2P_3 &\le& (a_{21}^2+a_{23}^2)P_2\\
            \label{eqn:part_cond3}
            \!\!\!\!\!\!\!\!\!\!\!\!\log_2\big(1\!+\!a_{12}^2P_1\!+\!a_{22}^2P_2\!+\!a_{32}^2P_3\big) &\le&\nonumber\\
            & & \!\!\!\!\!\!\!\!\!\!\!\!\!\!\!\!\!\!\!\!\!\!\!\!\!\!\!\!\!\!\!\!\!\!\!\!\!\!\!\!\!\!\!\!\!\!\!\!\!\!\!\!\! \!\!\!\!\!\!\!\!\!\!\!\!\!\!\!\!\!\!\!\!\!\!\!\!\!\!\!\!\!\!\!\! \E_{\tilde{\underline{\theta}}}\bigg\{\log_2\Big(1+\big(\sum_{k=1}^2 P_k\cdot(a_{k1}^2+a_{k3}^2)\big)\nonumber\\
            & &\!\!\!\!\!\!\!\!\!\!\!\!\!\!\!\!\!\!\!\!\!\!\!\!\!\!\!\!\!\!\!\!\!\!\!\!\!\!\!\!\!\!\!\!\!\!\!\!\!\!\! \!\!\!\!+P_1P_2\cdot\big(a_{11}^2a_{23}^2+a_{13}^2a_{21}^2 \nonumber\\
            & &\!\!\!\!\!\!\!\!\!\!\!\!\!\!\!\!\!\!\!\!\!\!\!\!\!\!\!\!\!\!\!\!\!\!\!\!\!\!\!\!\!\!\!\!\!\!\!\!\!\!\! \!\!\!\! -2\cdot a_{13}a_{21}a_{11}a_{23}\cdot\cos(\theta_{13}+\theta_{21}\nonumber\\
            & &\!\!\!\!\!\!\!\!\!\!\!\!\!\!\!\!\!\!\!\!\!\!\! -\theta_{11}-\theta_{23})\big)\Big) \bigg\}\\
            a_{11}^2P_1+a_{31}^2P_3 &\le& \frac{a_{12}^2P_1}{1+a_{32}^2P_3}\\
            a_{22}^2P_2+a_{32}^2P_3 &\le& \frac{a_{21}^2P_2}{1+a_{31}^2P_3},
            \end{eqnarray}
            \end{subequations}
            then the capacity region is characterized by all the nonnegative rate pairs s.t.
            \begin{subequations}
            \label{eq:SIPhaseFadingICR Partial FB_region}
            \begin{eqnarray}
            R_1 &\le& \log_2\big(1+a_{11}^2P_1+a_{31}^2P_3\big)\\
            R_2 &\le& \log_2\big(1+a_{22}^2P_2+a_{32}^2P_3\big)\\
            R_1+R_2 &\le& \min\Big\{\log_2\big(1+a_{11}^2P_1+a_{21}^2P_2+a_{31}^2P_3\big),\nonumber\\
            & & \log_2\big(1+a_{12}^2P_1+a_{22}^2P_2+a_{32}^2P_3\big)\Big\},
            \end{eqnarray}
            \end{subequations}
            and it is achieved with $X_k\sim \CN(0,P_k), k\in\{1,2,3\}$, mutually independent and with DF strategy at the relay.
\end{corollary}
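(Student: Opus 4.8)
The plan is to specialize Theorem~\ref{thm:ICR Partial FB SI} to the phase fading model, evaluating every mutual information term that appears in the conditions \eqref{eq:Partial FB SI Conditions} and in the rate region \eqref{eq:Partial FB SI_region} under mutually independent inputs $X_k\sim\CN(0,P_k)$, $k\in\{1,2,3\}$. Two structural facts of the phase fading model do all the work: the magnitudes are deterministic, $|H_{lk}|^2=a_{lk}^2$, and each product $H_{lk}X_k$ is again a zero-mean circularly symmetric complex Normal variable, these products being mutually independent---this is exactly the argument used in the footnote of Corollary~\ref{Cor:ICRF VSI Phase fading}. As a result, every term in which each observation is driven by a single source (after conditioning) collapses to a deterministic logarithm, with no surviving expectation over the fading phases.

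First I would collect the evaluations already available. The terms $I(X_1,X_3;Y_1|X_2,\tH_1)=\log_2(1+a_{11}^2P_1+a_{31}^2P_3)$, $I(X_2,X_3;Y_2|X_1,\tH_2)=\log_2(1+a_{22}^2P_2+a_{32}^2P_3)$, the interference-decoding terms $I(X_1;Y_2|X_2,\tH_2)=\log_2\!\big(1+\tfrac{a_{12}^2P_1}{1+a_{32}^2P_3}\big)$ and $I(X_2;Y_1|X_1,\tH_1)=\log_2\!\big(1+\tfrac{a_{21}^2P_2}{1+a_{31}^2P_3}\big)$, and the sum-rate terms $I(X_1,X_2,X_3;Y_k|\tH_k)$ were all computed in Corollary~\ref{Cor:SI Phase fading}; the relay SIMO term $I(X_1;Y_1,Y_3|X_2,X_3,\utH)=\log_2\big(1+(a_{11}^2+a_{13}^2)P_1\big)$ was computed in \eqref{eq:eq13}. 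The remaining single-source terms $I(X_1,X_3;Y_2|X_2,\tH_2)$, $I(X_2,X_3;Y_1|X_1,\tH_1)$ and $I(X_2;Y_1,Y_3|X_1,X_3,\utH)$ are evaluated identically, as a point-to-point or SIMO channel obtained after conditioning on the other source and, where relevant, on $X_3$; these give $\log_2(1+a_{12}^2P_1+a_{32}^2P_3)$, $\log_2(1+a_{21}^2P_2+a_{31}^2P_3)$ and $\log_2\big(1+(a_{21}^2+a_{23}^2)P_2\big)$. Substituting into \eqref{eq:Partial FB SI Conditions} immediately yields \eqref{eqn:part_cond1} and all remaining inequalities of the statement except the sum condition, while substituting into \eqref{eq:Partial FB SI_region} yields \eqref{eq:SIPhaseFadingICR Partial FB_region}.

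The only genuinely new and more delicate computation---and the main obstacle---is $I(X_1,X_2;Y_1,Y_3|X_3,\utH)$, which produces the sum condition \eqref{eqn:part_cond3}. Here conditioning removes only $X_3$, so both sources remain active and the map from $(X_1,X_2)$ to $(Y_1,Y_3)$ is a genuine $2\times2$ MIMO channel. I would compute $\E_{\utH}\{\log_2\det(\Imat+\Hmat\dsQ\Hmat^H)\}$ with
\begin{equation*}
    \Hmat=\begin{pmatrix} H_{11} & H_{21} \\ H_{13} & H_{23}\end{pmatrix},\qquad \dsQ=\mathrm{diag}(P_1,P_2).
\end{equation*}
Writing $A=1+a_{11}^2P_1+a_{21}^2P_2$, $D=1+a_{13}^2P_1+a_{23}^2P_2$ and $B=a_{11}a_{13}P_1e^{j(\theta_{11}-\theta_{13})}+a_{21}a_{23}P_2e^{j(\theta_{21}-\theta_{23})}$, the determinant equals $AD-|B|^2$. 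The cross term of $|B|^2$ carries the factor $2a_{11}a_{13}a_{21}a_{23}P_1P_2\cos(\theta_{13}+\theta_{21}-\theta_{11}-\theta_{23})$, and after cancelling the $a_{11}^2a_{13}^2P_1^2$ and $a_{21}^2a_{23}^2P_2^2$ terms against $AD$ one is left with exactly the integrand of \eqref{eqn:part_cond3}. The essential point---distinguishing this corollary from the cut-set computations of the appendix---is that this phase-dependent cross term sits \emph{inside} the logarithm, so the expectation over $\tilde{\underline{\theta}}$ does not average away, and the condition genuinely retains its $\E_{\tilde{\underline{\theta}}}\{\cdot\}$ form. The computation is elementary $2\times2$ linear algebra, but it is the one place where the phase averaging must be handled with care. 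Assembling all evaluations into \eqref{eq:Partial FB SI Conditions} and \eqref{eq:Partial FB SI_region} completes the specialization.
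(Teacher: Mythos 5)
Your proposal is correct and follows the same route as the paper: the published proof simply states that the corollary ``follows from the expressions of Theorem \ref{thm:ICR Partial FB SI},'' and the evaluations you supply (including the $2\times 2$ determinant $AD-|B|^2$ for $I(X_1,X_2;Y_1,Y_3|X_3,\utH)$, whose phase-dependent cross term survives inside the logarithm and yields the $\E_{\tilde{\underline{\theta}}}\{\cdot\}$ form of \eqref{eqn:part_cond3}) are exactly the calculations the authors intended. Your determinant expansion matches the stated condition term by term, so nothing is missing.
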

\begin{proof}
The result follows from the expressions of Theorem \ref{thm:ICR Partial FB SI}.
\end{proof}
\subsubsection{Ergodic Rayleigh Fading}
Define $\tU_k\triangleq(U_{kk},U_{3k}), \utU_k\triangleq(U_{1k},U_{2k},U_{3k}), k\in\{1,2\}$, and
$\hat{U}_1\triangleq(U_{12},U_{32})$, $\hat{U}_2\triangleq(U_{11},U_{13}),\hat{U}_3\triangleq(U_{21},U_{31})$,
$\hat{U}_4\triangleq(U_{21},U_{23})$,
$\hat{U}_5\triangleq(U_{11},U_{21},U_{13},U_{23})$. If the channel is subject to ergodic Rayleigh fading, then we obtain the following explicit result:
\begin{corollary}
            Consider the Rayleigh fading ICR with Rx-CSI and partial feedback s.t. $\yvec_{1}^{i-1}$ and $\hvec_{1,1}^{i-1}$ are available at the relay at time $i$. If the channel coefficients satisfy
            \begin{subequations}
            \begin{eqnarray}
            &   &\!\!\!\!\!\! \E_{\hat{U}_1}\Big\{\log_2\big(1+a_{12}^2|U_{12}|^2P_1+a_{32}^2|U_{32}|^2P_3\big)\Big\} \nonumber\\
            & & \le         \E_{\hat{U}_2}\Big\{\!\log_2\big(1+a_{11}^2|U_{11}|^2P_1+a_{13}^2|U_{13}|^2P_1\big)\!\Big\}\\
            &   &\!\!\!\!\!\! \E_{\hat{U}_3}\Big\{\log_2\big(1+a_{21}^2|U_{21}|^2P_2+a_{31}^2|U_{31}|^2P_3\big)\Big\} \nonumber\\
            &   &  \le  \E_{\hat{U}_4}\Big\{\!\log_2\big(1+a_{21}^2|U_{21}|^2P_2+a_{23}^2|U_{23}|^2P_2\big)\!\Big\}\\
            &   & \!\!\!\!\!\!\E_{\utU_2}\Big\{\log_2\big(1+a_{12}^2|U_{12}|^2P_1+a_{22}^2|U_{22}|^2P_2\nonumber\\
            & & \qquad\qquad \qquad\qquad \qquad  \qquad +a_{32}^2|U_{32}|^2P_3\big)\Big\} \nonumber\\
            &  &         \le \E_{\hat{U}_5}\Bigg\{\log_2\Bigg(1+\bigg(\sum_{k=1}^2 P_k\cdot(a_{k1}^2|U_{k1}|^2\nonumber\\
            & & \qquad \qquad \qquad\qquad\qquad\qquad +a_{k3}^2|U_{k3}|^2)\bigg)\nonumber\\
            &  &   + \big(P_1P_2\cdot(a_{11}^2|U_{11}|^2 a_{23}^2|U_{23}|^2 +a_{13}^2|U_{13}|^2a_{21}^2|U_{21}|^2 \nonumber\\
            & &\quad - 2\cdot \Real\{a_{13}U_{13}a_{21}U_{21}a_{11}U_{11}^*a_{23}U_{23}^*\})\big)\Bigg) \Bigg\}
            \end{eqnarray}
            and
            \begin{eqnarray}
            1+a_{11}^2P_1+a_{31}^2P_3 &\le& \frac{\frac{a_{12}^2P_1}{1+a_{32}^2P_3}}{e^{ \frac{1+ a_{32}^2P_3}{a_{12}^2P_1}}E_1\left(\frac{1+ a_{32}^2P_3}{ a_{12}^2P_1}\right)}\\
            1+a_{22}^2P_2+a_{32}^2P_3 &\le& \frac{\frac{a_{21}^2P_2}{1+a_{31}^2P_3}}{e^{ \frac{1+ a_{31}^2P_3}{a_{21}^2P_2}}E_1\left(\frac{1+a_{31}^2P_3}{a_{21}^2P_2}\right)},
            \end{eqnarray}
            \end{subequations}
            then the capacity region is characterized by all the nonnegative rate pairs s.t.
            \begin{subequations}
            \label{eq:SIRayleighFadingICR Partial FB_region}
            \begin{eqnarray}
            R_1 &\le& \E_{\tU_1}\Big\{\log_2\big(1+a_{11}^2|U_{11}|^2P_1\nonumber\\
            & & \qquad \qquad \qquad \quad +a_{31}^2|U_{31}|^2P_3\big)\Big\}\\
            R_2 &\le& \E_{\tU_2}\Big\{\log_2\big(1+a_{22}^2|U_{22}|^2P_2\nonumber\\
                & & \qquad \qquad \quad \qquad +a_{32}^2|U_{32}|^2P_3\big)\Big\}\\
            R_1+R_2 &\le& \min_{k\in\{1,2\}}\Big\{\E_{\utU_k}\big\{\log_2(1+a_{1k}^2|U_{1k}|^2P_1\nonumber\\
            & & \quad \!\!+a_{2k}^2|U_{2k}|^2P_2+a_{3k}^2|U_{3k}|^2P_3)\big\}\!\Big\}.
            \end{eqnarray}
            \end{subequations}
            and it is achieved with $X_k\sim \CN(0,P_k), k\in\{1,2,3\}$, mutually independent and with DF strategy at the relay.
\end{corollary}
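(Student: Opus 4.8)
The plan is to specialize every mutual-information term appearing in Theorem \ref{thm:ICR Partial FB SI} to the Rayleigh model, following the same route used for phase fading in Corollary \ref{cor:Partial FB Phase}. Recall that the capacity-achieving inputs are mutually independent with $X_k \sim \CN(0,P_k)$, and that under Rayleigh fading $H_{lk}=a_{lk}U_{lk}$ with $U_{lk}\sim\CN(0,1)$. The single idea I would exploit throughout is that, once we condition on the relevant receiver CSI, every channel coefficient becomes a known constant, so the conditional channel is Gaussian and each conditional mutual information reduces to a log-determinant that is then averaged over the fading.

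First I would evaluate the \emph{clean} terms, in which no independent signal is treated as noise. Conditioning $I(X_1,X_3;Y_1|X_2,\tH_1)$ on $\tH_1$ gives $\E_{\tU_1}\{\log_2(1+a_{11}^2|U_{11}|^2P_1+a_{31}^2|U_{31}|^2P_3)\}$, and symmetrically for the $R_2$ bound; these produce the individual-rate constraints in \eqref{eq:SIRayleighFadingICR Partial FB_region}. The relay-side SIMO terms $I(X_1;Y_1,Y_3|X_2,X_3,\utH)$ and $I(X_2;Y_1,Y_3|X_1,X_3,\utH)$ fold the two relay-facing outputs into $\E\{\log_2(1+(a_{k1}^2|U_{k1}|^2+a_{k3}^2|U_{k3}|^2)P_k)\}$, giving the right-hand sides of the first two conditions. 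The sum terms $I(X_1,X_2,X_3;Y_m|\tH_m)$ yield the sum-rate bounds directly.

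Next I would treat the interference-decoding terms $I(X_1;Y_2|X_2,\tH_2)$ and $I(X_2;Y_1|X_1,\tH_1)$, in which the relay signal is treated as noise. Conditioning on the CSI makes $a_{3k}^2|U_{3k}|^2P_3$ a known noise power, but the resulting SINR has a random denominator, so the average is not a single elementary log. I would bound the desired-signal side by Jensen's inequality, $\E\{\log_2(1+\cdot)\}\le\log_2(1+\E\{\cdot\})$, to obtain the terms $1+a_{11}^2P_1+a_{31}^2P_3$ and $1+a_{22}^2P_2+a_{32}^2P_3$, and evaluate the interference side in closed form via the exponential-integral formula of \cite[Proposition 1]{Dabora:10}, which produces exactly the $E_1$-ratios appearing in the last two conditions.

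The hardest term, and the main obstacle, is the relay sum-decoding bound $I(X_1,X_2;Y_1,Y_3|X_3,\utH)$, a genuine $2\times2$ MIMO mutual information whose channel matrix has rows $(H_{11},H_{21})$ and $(H_{13},H_{23})$. Conditioning on $\utH$ and expanding $\log_2\det(\Imat+\Hmat\,\mathrm{diag}(P_1,P_2)\,\Hmat^H)$ leaves a cross term $-2\Real\{a_{13}U_{13}a_{21}U_{21}a_{11}U_{11}^*a_{23}U_{23}^*\}$ that couples all four relay-facing coefficients and must be averaged over $\hat{U}_5=(U_{11},U_{21},U_{13},U_{23})$. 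This expansion proceeds exactly as in the phase-fading computation, the only change being that the magnitudes are now Rayleigh rather than fixed; the cross term cannot be dropped because $Y_1$ and $Y_3$ carry both source signals. Collecting all the evaluated terms reproduces the stated conditions together with the region in \eqref{eq:SIRayleighFadingICR Partial FB_region}, completing the proof.
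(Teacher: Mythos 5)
Your proposal is correct and follows essentially the same route as the paper, which simply invokes the expressions of Theorem \ref{thm:ICR Partial FB SI} and the argument of the phase-fading Corollary \ref{cor:Partial FB Phase}: exact evaluation of the first three (relay-decoding) conditions under independent $\CN(0,P_k)$ inputs, including the $2\times2$ log-determinant with the $\hat{U}_5$ cross term, and sufficient versions of the two interference-decoding conditions obtained by Jensen on the left-hand sides and the exponential-integral lower bound of \cite[Proposition 1]{Dabora:10} on the right-hand sides. You in fact supply more detail than the paper does.
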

\begin{proof}
The proof follows similar arguments to those used in the proof of Corollary \ref{cor:Partial FB Phase}.
\end{proof}
\begin{figure}[t]
  \centering
    {\includegraphics[scale=0.65]{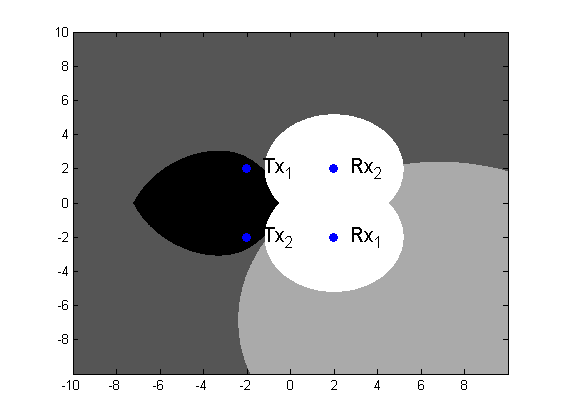}}
    \caption{The geographical position in the 2D-plane in which the VSI conditions hold for the ICR subject to
        phase fading. The black region shows the location of the relay for ICRs without feedback in which DF at the relay achieves capacity at VSI. The union of the black and dark gray regions shows the location of the
        relay for ICRFs with partial feedback, only from $\Rgood$ to the relay, in which DF at the relay achieves capacity at VSI.
        The union of the black, dark gray and light gray corresponds to ICRF with feedback from both receivers to the relay.
        The scenario parameters are detailed in Comment \ref{rem:fig:Commnet_on_VSI-FULLNO}.}
  \label{fig:fig5+6}
\end{figure}

\subsection{Comments}
\begin{remark}
\em{}
    Comparing partial feedback (Corollary \ref{cor:Partial FB Phase}) with no-feedback \cite[Theorem 2]{Dabora:101}, we note that the rate constraints at the destinations are the
    same. Also note that  $\avec \in \mD_1 \cap \mD_2$ guarantees \eqref{eqn:part_cond1}-\eqref{eqn:part_cond3}, hence as in Comment \ref{rem:Comparing_FB_with_NFB_VSI},
    feedback allows obtaining capacity characterization for a larger set of channel
    coefficients, and there are scenarios in which feedback strictly {\em enlarges} the capacity region (e.g., when $a_{23}$ is very small).

    Fig. \ref{fig:fig5+6} was created using the same parameters as those used for generating Fig. \ref{fig:VSI-FULLNO}, see Comment \ref{rem:fig:Commnet_on_VSI-FULLNO} for details.
    The figure demonstrates most clearly the benefits of combining feedback with relaying for interference management.
    Observe that without feedback, achieving capacity in the VSI regime requires the relay to be close to the transmitters, while partial feedback and moreover
    full feedback allow achieving capacity for a significantly larger geographical region.
\end{remark}
\begin{remark}
\em{}Note that if we assume a unidirectional noiseless cooperation link from one of the receivers to the other one, then the conditions of the SI and the VSI regime can not be satisfied.
Without loss of generality assume a noiseless cooperation link from Rx$_1$ to Rx$_2$, then the achievable rate at Rx$_2$ will always exceed the achievable rate at Rx$_1$. Hence decoding
the interference at Rx$_1$ will always decrease the capacity region. The same conclusion also holds for decoding both messages at Rx$_1$. Thus if there is a unidirectional noiseless
cooperation link from one of the receivers to the other one, then the SI and the VSI conditions can not be satisfied. Note that this conclusion does not hold for the scenarios where
the link between the receivers is noisy or if the receiver first compresses its channel observations prior to forwarding them to the other receiver (see \cite{Goldsmith:03} and \cite{Goldsmith:07}).
\end{remark}

\section{Conclusions}
\label{sec:conclusions}
In this paper we characterized the capacity region of the fading interference channel with a relay for different feedback configurations.
The capacity region was characterized explicitly for the phase fading and Rayleigh fading scenarios in both SI and VSI regimes. We showed that the
capacity is achieved with zero-mean, circularly symmetric complex Normal channel inputs, independent of each other with all transmitters using their
maximum available power. It was also shown that when feedback is available at the relay, then the best strategy for the relay in these regimes is to
decode both messages and forward them to both receivers and thus assist both receivers simultaneously. We showed that with such a strategy at the relay,
when VSI occurs the ICRF behaves like two parallel relay channels. We also showed that when SI occurs, the ICRF behaves like two EMARCs and the same coding
strategy achieves capacity for both simultaneously. We next showed that if feedback from both receivers is available at the relay, then additional feedback
from each receiver to its opposite transmitter provides no further improvement to the capacity region. However,  additional feedback links from
each receiver to its corresponding transmitter can enlarge the capacity region. By comparing the scenario where there is partial feedback (from one of the receivers only)
or full feedback (from both receivers) at the relay, versus the scenario with no feedback at all, we showed that partial feedback and moreover full feedback increase the
range of the channel coefficients which allow achieving capacity in both VSI and SI regimes significantly. With no feedback however, the relay reception must be good in order
to achieve capacity (Recall \cite[Theorem 6]{Kramer:05}: in the relay channel under the phase fading assumption, DF achieves capacity when the relay is closer to the source
than to the destination).

The fact that the capacity achieving channel inputs are mutually independent allows a relatively simple integration of relaying into existing wireless networks.
Also note that since the relay can be optimal for both communicating pairs simultaneously, then a relatively small number of relay stations can optimally assist several
nodes simultaneously. These observations  support the deployment of relay nodes to assist in managing interference in practical wireless systems such as cellular and WiFi networks.
We note however, that additional research, especially on non-fading scenarios is still required to obtain a complete assessment of cost-benefit tradeoff.

\section*{Acknowledgements}
We would like to thank the associate editor and the anonymous reviewers for their comments, which greatly improved the results of this manuscript.


\appendices
\setcounter{equation}{0}
\numberwithin{equation}{section}
\numberwithin{theorem}{section}
\numberwithin{proposition}{section}

\section{Maximizing Distribution for the Cut-Set Bound}
\label{app:maximzing dist for cut set bound}

\subsection{The Upper Bound on $R_1$}
\label{Bounds On R_1}
Starting with the upper bound on $R_1$, we first consider $I(X_1;Y_1,Y_2,Y_3 | X_2,X_3,\utH)$. Note that by fixing the side information $\utH=\underline{\tilde{h}}$, we obtain
    \begin{eqnarray}
    \label{eq:p1}
    & & \!\!\!\!\!\!\!\!\!\!\!\!\!\!\!h(Y_1,Y_2,Y_3 | X_2,X_3,\utH=\underline{\tilde{h}})\nonumber\\
    & & =h(h_{11}X_1 + Z_1,h_{12}X_1 + Z_2 , \nonumber\\
    & & \qquad \qquad h_{13}X_1+ Z_3 | X_2,X_3,\utH=\underline{\tilde{h}}).
    \end{eqnarray}
Define $\Hv\triangleq(H_{11},H_{12},H_{13})^T, \Zvec\triangleq(Z_1,Z_2,Z_3)^T$. Following these definitions we define ($\hY_1,\hY_2,\hY_3)^T \triangleq \Hv \cdot X_1 + \mathbf{Z}$.
For $\Xvec\triangleq(X_1,X_2,X_3)$, zero-mean, define $\C_1=\E\{\Xvec \Xvec^H\}$. Letting $\alpha_{ij}\triangleq \E[X_{i} X_{j}^*]$ we get
\begin{eqnarray}
\label{eq:c1 def}
\C_1 & \triangleq  & \E \left\{ \left[ \begin{array}{c}
X_1\\
X_2\\
X_3\end{array} \right] [X_1^*, X_2^*, X_3^*] \right\}\\
&  = & \E \left\{ \left[ \begin{array}{ccc}
X_1 X_1^* & X_1 X_2^* & X_1 X_3^*\\
X_2 X_1^* & X_2 X_2^* & X_2 X_3^*\\
X_3 X_1^* & X_3 X_2^* & X_3 X_3^* \end{array} \right] \right\} \nonumber \\
& = &\left[ \begin{array}{ccc}
P_1 & \alpha_{12} & \alpha_{13}\\
\alpha_{21} & P_2 & \alpha_{23}\\
\alpha_{31} & \alpha_{32} & P_3 \end{array} \right]\nonumber \\
 & \triangleq & \left[ \begin{array}{cc}
t_{11}    & \tvec_{21}^H\\
\tvec_{21} & \Tmat_{22} \end{array} \right],
\end{eqnarray}
where $t_{11} \triangleq P_1$ and $ \Tmat_{22}$ is $2 \times 2$ and p.d.. Next we obtain
\begin{eqnarray}
& &\!\!\!\!\!\!\!\!\!\!I(X_1;Y_1,Y_2,Y_3 | X_2,X_3,\utH)\nonumber\\
&=& \E_{\utH}\bigg\{h(Y_1,Y_2,Y_3 | X_2,X_3,\uth)\nonumber\\
& & \qquad \qquad \qquad \qquad -h(Y_1,Y_2,Y_3 | X_1,X_2,X_3,\uth)\bigg\}\nonumber\\
&\stackrel{(a)}{=}& \E_{\utH}\bigg\{h(\hY_1,\hY_2,\hY_3 | X_2,X_3,\uth)-h(Z_1,Z_2,Z_3)\bigg\}\nonumber\\
&\stackrel{(b)}{\le}&\E_{\utH}\bigg\{\log_2\Big((\pi e)^3 \det\big(\cov(\hY_1,\hY_2,\hY_3 | X_2,X_3,\uth)\big)\Big)\nonumber\\
& & \qquad \qquad \qquad \qquad\qquad \qquad  -\log_2(\pi e)^3\bigg\}\nonumber\\
&=& \E_{\utH}\bigg\{\log_2\Big(\det\big(\cov(\hY_1,\hY_2,\hY_3 | X_2,X_3,\uth)\big)\Big)\bigg\}\nonumber\\
&\stackrel{(c)}{=}&\E_{\utH}\bigg\{\log_2\Big(\det\big(\cov(\Hv \cdot X_1 + \mathbf{Z} | X_2,X_3,\uth)\big)\Big)\bigg\}\nonumber\\
&\stackrel{(d)}{=}& \E_{\utH}\bigg\{\log_2\Big(\det\big(\cov(\Hv \cdot X_1|X_2,X_3,\uth)\nonumber\\
& & \qquad \qquad \qquad \qquad \qquad + \cov(\mathbf{Z}|X_2,X_3,\uth) \big) \Big)\bigg\} \nonumber\\
&=& \E_{\utH}\bigg\{\!\log_2\Big(\!\det\big(\Imat+\Hv \cdot \cov(X_1|X_2,X_3) \cdot \Hv^H \big)\Big)\!\bigg\} \nonumber\\
&\stackrel{(e)}{=}& \label{eq:eq2} \E_{\utH}\bigg\{\!\log_2\Big(\!\det\big(\Imat\nonumber\\
& & \quad \qquad \qquad +\Hv \cdot (t_{11}-\tvec_{21}^H \Tmat_{22}^{-1}\tvec_{21}) \cdot \Hv^H\big)\!\Big)\!\bigg\}
\end{eqnarray}
\begin{eqnarray}
&\stackrel{(f)}{=}& \E_{\utH}\bigg\{\log_2\Big(\det\big(\Imat\nonumber\\
& & +\Hv \cdot t_{11}^{\frac{1}{2}}(1- t_{11}^{-\frac{1}{2}} \tvec_{21}^H \Tmat_{22}^{-\frac{1}{2}}
    \Tmat_{22}^{-\frac{1}{2}} \tvec_{21} t_{11}^{-\frac{1}{2}}) t_{11}^{\frac{1}{2}} \cdot \Hv^H\big)\Big)\bigg\} \nonumber \\
&\stackrel{(g)}{=}&\E_{\utH}\bigg\{\! \log_2\Big(\!\det\big(\Imat+\Hv \cdot t_{11}^{\frac{1}{2}} (1 - \mathds{A}\mathds{A}^H) t_{11}^{\frac{1}{2}} \cdot \Hv^H\big)\Big)\!\bigg\} \nonumber \\
&\stackrel{(h)}{\le}&\E_{\utH}\bigg\{\! \log_2\Big(\!\det\big(\Imat+\Hv \cdot t_{11}^{\frac{1}{2}} \cdot(1-\rho^2)\cdot t_{11}^{\frac{1}{2}} \cdot \Hv^H\big)\Big)\!\bigg \} \nonumber \\
&\stackrel{(i)}{\le}& \E_{\utH}\bigg\{\log_2\Big(\det\big(\Imat+\Hv\cdot t_{11}\cdot \Hv^H\big)\Big)\bigg\}.\nonumber
\end{eqnarray}
Thus, by using \cite[Fact 3.7.19]{MatrixMathematics}\footnote{Given $\xvec,\yvec\in\mathcal{C}^n$, define $\Amat\triangleq\Imat-\xvec\yvec^H$. Then, $\det(\Amat)=1-\xvec^H\yvec$.}, we get the following inequality:
\begin{eqnarray}
& &\!\!\!\!\!\!\!\!\!\!\!\!\!\!\!\!\!I(X_1;Y_1,Y_2,Y_3 | X_2,X_3,\utH)\le\nonumber\\
\label{eq:eq1}
& & \!\!\!\!\!\!\!\!\!\!\!\!\!\!\E_{\utH}\bigg\{\!\log_2\Big(1\!+\!P_1(|H_{11}|^2+ |H_{12}|^2+|H_{13}|^2)\Big)\!\bigg\}.
\end{eqnarray}
In the above transitions,
\begin{itemize}
\item (a) follows from \eqref{eq:p1}.
\item (b) follows from Lemma \ref{lem:lem0} which show that  $h(\hat{Y}_1,\hat{Y}_2,\hat{Y}_3| X_2,X_3)$ is maximized by jointly circularly symmetric complex Normal RVs with zero mean and same
covariance matrix as $(\hat{Y}_1,\hat{Y}_2,\hat{Y}_3,X_2,X_3)$.
\item (c) follows from the definition of $\hat{Y}_1,\hat{Y}_2,\hat{Y}_3,\Hv$ and $\mathbf{Z}$.
\item (d) follows from the fact that $\mathbf{Z}$ is independent of $X_k, k\in\{1,2,3\}$.
\item (e) follows from \cite[Sec. VI]{Valerie:81} which shows that the conditional covariance matrix of jointly complex normal RVs is given by the Schur complement
    of $\Tmat_{22}$ in the covariance matrix $\C_1$.
\item (f) follows from \cite[Proposition 8.1.2 and Lemma 8.2.1]{MatrixMathematics}\footnote{Given a p.d. matrix $\Amat$, $\Amat^{-1}$ is also a p.d.
    matrix and $\Amat$ can be written as $\Amat=\Bmat^2$, where $\Bmat$ is also a p.d. matrix.}.
\item (g) follows from the definition $\Amat\triangleq t_{11}^{-\frac{1}{2}} \tvec_{21}^H \Tmat_{22}^{-\frac{1}{2}}$.
\item (h) follows from \cite[Lemma 3.1]{Wang:05} and from \cite[Theorem 7.7.2 and Theorem 7.7.4]{Matrix:99}\footnote{\cite[Theorem 7.7.2]{Matrix:99}:
    Given two Hermitian matrices, $\Amat,\Bmat$, if $\Bmat\preceq\Amat$ then $\Tmat^H\Bmat \Tmat\preceq \Tmat^H\Amat \Tmat$. Thus,
        $\Imat+\Tmat^H\Bmat \Tmat\preceq \Imat+\Tmat^H\Amat \Tmat$.\\\cite[Theorem 7.7.4]{Matrix:99}: Given two matrices, $\Amat,\Bmat$, if $\Bmat\preceq\Amat$ then $\det(\Bmat)\le\det(\Amat)$.}.
\item (i) follows from the range of $\rho$, as given in \cite[Lemma 3.1]{Wang:05}: $\rho\in[0,1]$ and from \cite[Theorem 7.7.2 and Theorem 7.7.4]{Matrix:99} which
    state that since $(1-\rho^2)t_{11}\le t_{11}$ then $\det(\Imat+\Hv(1-\rho^2)t_{11}\Hv^H) \le \det(\Imat+\Hv t_{11}\hv^H)$.
\end{itemize}

Next, note that \eqref{eq:eq1} does not depend on $\alpha_{12}, \alpha_{13}$ or $\alpha_{23}$ and it can be achieved with equality from \eqref{eq:eq2} by setting
$\alpha_{12}=\alpha_{13}=0$, irrespective of the value of $\alpha_{23}$, which can also be set to zero. We also note that since $\log_2(x)$ monotonically increases with respect to $x$, then
$I(X_1;Y_1,Y_2,Y_3 | X_2,X_3,\utH)$ is maximized when Tx$_1$ transmits at its maximum available power.
Finally, (b) is achieved with equality if $X_1, X_2, X_3$ are jointly Normal. As these variables are uncorrelated, they are also independent.
In conclusion, $I(X_1;Y_1,Y_2,Y_3 | X_2,X_3,\utH)$ is maximized by
mutually independent, circularly symmetric complex Normal channel inputs with zero mean.

Next we consider $I(X_1,X_3;Y_1|X_2,\tH_1)$. Following the same arguments as the previous rate bound, note that
\begin{eqnarray}
\label{eq:p2}
& & \!\!\!\!\!\!\!\!\!\!\!\!\!\!\! h(Y_1 | X_2,\tH_1=\tilde{h}_1)\nonumber\\
& & \;\; =h(h_{11}X_1+ h_{31}X_3+ Z_1| X_2,\tH_1=\tilde{h}_1).
\end{eqnarray}
Define
\begin{subequations}
\label{eq:DF2}
\begin{eqnarray}
\label{eq:DF21}\hY_1 &\triangleq &H_{11}X_1+ H_{31}X_3+ Z_1,
\end{eqnarray}
\begin{eqnarray}
\label{eq:DF22}
\C_2& \triangleq & \E \left\{ \left[ \begin{array}{c}
\hY_1\\
X_2 \end{array} \right] \left[ \begin{array}{cc} \hY_1^*, X_2^*  \end{array} \right] \Bigg|\tH_1 \right\}\nonumber\\
& = & \E \left\{ \left[ \begin{array}{cc}
\hY_1 \hY_1^* & \hY_1 X_2^*\\
X_2 \hY_1^*   & X_2 X_2^*\end{array} \right]\Bigg|\tH_1\right\}\nonumber\\
&=&\left[\begin{array}{cc}
T_{11} & T_{12}\\
T_{12}^* & t_{22}\end{array} \right].
\end{eqnarray}
\end{subequations}
Hence, we bound $I(X_1,X_3;Y_1|X_2,\tH_1)$ as follows:
\begin{eqnarray}
& &\!\!\!\!\!\!\!\!\!\!\!\!\!\!\!I(X_1,X_3;Y_1|X_2,\tH_1)\nonumber\\
&=& \E_{\tH_1}\bigg\{h(Y_1 | X_2,\th_1)-h(Y_1| X_1,X_2,X_3,\th_1)\bigg\}\nonumber\\
&\stackrel{(a)}{\le}&\E_{\tH_1}\bigg\{\log_2\Big(\det\big(\cov(\hY_1 | X_2,\th_1)\big)\Big)\bigg\}\nonumber\\
&=&\label{eq:eq3}\E_{\tH_1}\bigg\{\log_2\Big(T_{11}-T_{12} t_{22}^{-1}T_{12}^*\Big)\bigg\}\\
&\stackrel{(b)}{\le}&\E_{\tH_1}\bigg\{\log_2\Big(T_{11}\Big)\bigg\}\nonumber.
\end{eqnarray}
Thus, we get
\begin{eqnarray}
\label{eq:eq4} & &\!\!\!\! I(X_1,X_3;Y_1|X_2,\tH_1)\nonumber\\
& & \le\E_{\tH_1}\Big\{\log_2(|H_{11}|^2 P_1+H_{11}H_{31}^*\alpha_{13}\nonumber\\
& & \qquad \qquad \qquad +H_{31}H_{11}^*\alpha_{31} +|H_{31}|^2 P_3 + 1)\Big\}.\phantom{xxxx}
\end{eqnarray}
Here, (a) follows from the definition of $\hY_1$ in \eqref{eq:DF21}, from \eqref{eq:p2} and from Lemma \ref{lem:lem0}, and (b) follows from the fact that $T_{12}^*T_{12}=|T_{12}|^2\ge 0$. Next, note that:
\begin{eqnarray*}
& & \!\!\!\!\E_{\tH_1}\Big\{\log_2(T_{11})\Big\}\\
& & =\int_{\tilde{h}_1\in\setC^3}\!\!\!\!\!\!\log_2\big(|h_{11}|^2 P_1 + h_{11}h_{31}^*\alpha_{13} +  h_{31}h_{11}^*\alpha_{31} \\
& & \qquad\qquad\qquad\qquad \quad + |h_{31}|^2 P_3 + 1\big) f_{\tH_1}(\tilde{h}_1) d\tilde{h}_1.
\end{eqnarray*}
Hence, if we replace $H_{31}$ with $-H_{31}$, the result of the integral remains unchanged. This follows as $h_{ij}$'s are independent complex RVs with uniform phases on $[0,2\pi)$, independent of their magnitudes,
and therefore the value of the integration is the same for all initial phases. Hence, following the same technique as in \cite[Theorem 8]{Kramer:05} and from concavity of the logarithm function, we can apply
Jensen's inequality and rewrite the bound in \eqref{eq:eq4} as:
\begin{eqnarray}
\label{eq:eq6}
& & \!\!\!\!\!\!\!\!\!\!\!\!\!\!\!\!\!\!\!I(X_1,X_3;Y_1|X_2,\tH_1) \le \E_{\tH_1}\Big\{\log_2(T_{11})\Big\}\nonumber\\
 & &\!\!\!\! \!\!\!\!\le  \E_{\tH_1}\Big\{\log_2(|H_{11}|^2P_1+|H_{31}|^2P_3+1)\Big\}.
\end{eqnarray}
As the bound in \eqref{eq:eq6} does not depend on $\alpha_{12}, \alpha_{13}$ or $\alpha_{23}$, then \eqref{eq:eq6} is achieved with equality from \eqref{eq:eq3} by setting $\alpha_{12}=\alpha_{13}=\alpha_{23}=0$.

We conclude that the upper bound on $R_1$ in cut-set bound is maximized by mutually independent, zero-mean, circularly symmetric complex Normal channel inputs, and with all sources transmitting at their maximum available power.

\subsection{The Upper Bound on $R_2$}
\label{Bounds On R_2}
Following steps similar to those in section \ref{Bounds On R_1}, we conclude that the mutual expressions $I(X_2;Y_1,Y_2,Y_3|X_1, X_3,\utH)$ and $I(X_2,X_3;Y_2|X_1,\tH_2)$ are
both maximized by mutually independent, zero mean, circularly symmetric complex Normal channel inputs.

\subsection{The Upper Bound on $R_1+R_2$}
\label{Bounds On R_1+R_2}
First, note that
\begin{eqnarray}
&&h(Y_1,Y_2,Y_3|X_3,\utH=\tilde{\underline{h}})=\nonumber\\
&&\qquad h\big(h_{11} X_{1} + h_{21} X_{2} +Z_1, h_{12} X_{1} + h_{22} X_{2} + Z_{2},\nonumber\\
& &\qquad\quad \quad  h_{13} X_{1} + h_{23} X_{2} + Z_{3}|X_3,\utH=\tilde{\underline{h}}).
\end{eqnarray}
Define $\Hmat, \mathbf{X}, \mathbf{Z}$ and rewrite $\C_1$ from \eqref{eq:c1 def} as
\begin{subequations}
\begin{eqnarray}
\Hmat &\triangleq &\left[ \begin{array}{cc}
H_{11} & H_{21}\\
H_{12} & H_{22}\\
H_{13} & H_{23}\end{array}\right] \\
\mathbf{X} &\triangleq &\left[ \begin{array}{c}
X_1\\
X_2\end{array} \right]\\
 \mathbf{Z} & \triangleq & \left[ \begin{array}{c}
Z_1\\
Z_2\\
Z_3\end{array} \right] \\
\C_1 & \triangleq & \left[ \begin{array}{cc}
\Tmat_{11} & \tvec_{12}\\
\tvec_{12}^H & t_{22} \end{array} \right],
\end{eqnarray}
\end{subequations}
where $\Tmat_{11}$ is $2\times2$ and p.d. and $ t_{22}\triangleq P_3$. Using the above definitions we obtain
\begin{eqnarray}
& & \!\!\!\!\!\!\!\!\!\!\!\!\!\!\!\!\!\!\!\! I(X_1,X_2;Y_1,Y_2,Y_3|X_3,\utH)\nonumber\\
 &=& \E_{\utH}\bigg\{ h(Y_1,Y_2,Y_3|X_3,\uth)\nonumber\\
 & & \qquad  \qquad-h(Y_1,Y_2,Y_3|X_1,X_2,X_3,\uth)\bigg\}\nonumber\\
&\le& \E_{\utH}\bigg\{\label{eq:eq9}\log_2\Big(\det\big(\Imat+\Hmat \cdot (\Tmat_{11} \nonumber\\
& & \qquad  \qquad \qquad-\tvec_{12}t_{22}^{-1}\tvec_{12}^H) \cdot \Hmat^H\big)\Big)\bigg\} \\
&\stackrel{(a)}{\le}&  \E_{\utH}\bigg\{ \log_2\Big(\det\big(\Imat+\Hmat\cdot\Tmat_{11}\cdot\Hmat^H\big)\Big)\bigg\}.
\end{eqnarray}
Thus, we bound $I(X_1,X_2;Y_1,Y_2,Y_3|X_3,\utH)$ as follows:
\begin{eqnarray}
\label{eq:eq7}
& & \!\!\!\!I(X_1,X_2;Y_1,Y_2,Y_3|X_3,\utH)\nonumber\\
& & \le \E_{\utH}\bigg\{\log_2\Big(1+\big(\sum_{i=1}^2 P_i(|H_{i1}|^2+|H_{i2}|^2+|H_{i3}|^2)\big)\nonumber\\
& & \qquad + \big((P_1P_2-|\alpha_{12}|^2)(|H_{11}|^2|H_{22}|^2+|H_{11}|^2|H_{23}|^2\nonumber\\
& &\qquad +|H_{12}|^2|H_{21}|^2 +|H_{12}|^2|H_{23}|^2  +|H_{13}|^2|H_{21}|^2\nonumber\\
& &\qquad +|H_{13}|^2|H_{22}|^2-2V_1-2V_2-2V_3)\big)\nonumber\\
& & \qquad +\alpha_{12}\big(H_{11}H_{21}^*+H_{12}H_{22}^*+H_{13}H_{23}^*\big)\nonumber\\
& & \qquad + \alpha_{21} \big(H_{21}H_{11}^*+ H_{22}H_{12}^*+H_{23}H_{13}^*\big)\Big)\bigg\},
\end{eqnarray}
where
\begin{subequations}
\label{eq:v def}
\begin{eqnarray}
V_1&=&\Real\{H_{11}H_{22}H_{12}^*H_{21}^*\}\\
V_2&=&\Real\{H_{12}H_{23}H_{13}^*H_{22}^*\}\\
V_3&=&\Real\{H_{13}H_{21}H_{11}^*H_{23}^*\}.
\end{eqnarray}
\end{subequations}
The transitions used in the above derivation are similar to those used in section \ref{Bounds On R_1}. Here, (a) follows from \cite[Lemma 3.1]{Wang:05} and from \cite[Theorem 7.7.2]{Matrix:99}.
Following the same technique as in \cite[Theorem 8]{Kramer:05} we obtain that replacing $H_{11},H_{12},H_{13}$ with $-H_{11},-H_{12},-H_{13}$, respectively, does not change the expected value in
\eqref{eq:eq7}. This is equivalent to replacing $\alpha_{12}$ and $\alpha_{21}$ with $-\alpha_{12}$ and $-\alpha_{21}$. The above observation can be used to express \eqref{eq:eq7} as:
\begin{eqnarray}
& &\!\!\!\!\!\!\!\!\!\! \E_{\utH}\bigg\{\log_2\Big(\det\big(\Imat+\Hmat\Tmat_{11}\Hmat^H\big)\Big)\bigg\} \nonumber\\
&\stackrel{(a)}{\le}& \E_{\utH}\bigg\{\log_2\Big(1+\big(\sum_{i=1}^2 P_i(|H_{i1}|^2+|H_{i2}|^2+|H_{i3}|^2)\big)\nonumber\\
& & \quad +\big((P_1P_2- 2|\alpha_{12}|^2)(|H_{11}|^2|H_{22}|^2\nonumber\\
& &\quad\quad\label{eq:eq8} +|H_{11}|^2|H_{23}|^2+|H_{12}|^2|H_{21}|^2\nonumber\\
& &\quad\quad  +|H_{12}|^2|H_{23}|^2 +|H_{13}|^2|H_{21}|^2\nonumber\\
& & \quad \quad     +|H_{13}|^2|H_{22}|^2-2V_1-2V_2-2V_3)\big)\!\Big) \!\bigg\},
\end{eqnarray}
where (a) follows from the concavity of the logarithm function. Next, we have the following proposition:
\begin{proposition}
\label{pro:R1+R2 INDIPENDENT SOURCES PROOF}
The expression in \eqref{eq:eq8} is maximized when $\alpha_{12}=0$ and when Tx$_1$ and Tx$_2$ transmit at their maximum available power,  i.e.,
\begin{eqnarray}
& &\!\!\!\!\E_{\utH}\bigg\{\log_2\Big(\det\big(\Imat+\Hmat\Tmat_{11}\Hmat^H\big)\Big)\bigg\}\le\nonumber\\
& &\;\label{eq:eq10} \E_{\utH}\bigg\{\log_2\Big(1+\big(\sum_{i=1}^2 P_i(|H_{i1}|^2+|H_{i2}|^2+|H_{i3}|^2)\big)\nonumber\\
& & \qquad \quad +P_1P_2\big(|H_{11}|^2|H_{22}|^2+|H_{11}|^2|H_{23}|^2 \nonumber\\
& & \qquad  \quad + |H_{12}|^2|H_{21}|^2 +|H_{12}|^2|H_{23}|^2 +|H_{13}|^2|H_{21}|^2\nonumber\\
& & \qquad  \quad  +|H_{13}|^2|H_{22}|^2 -2V_1-2V_2-2V_3\big)\Big)\! \bigg\},
\end{eqnarray}
where $V_1,V_2,V_3$ are defined in \eqref{eq:v def}.
\begin{proof}
\em{}
Consider $H_{11},H_{22},H_{12},H_{21} \in \mathcal{C}$ and $V_1,V_2,V_3$ as defined in \eqref{eq:v def}. Without loss of generality, we can write:
\begin{eqnarray*}
H_{11}H_{22}&=&A+Bi\\
H_{12}H_{21}&=&C+Di,
\end{eqnarray*}
$A,B,C,D\in\mathcal{R}$. Using the above definitions we get: $2V_1=2\cdot\Real\{H_{11}H_{22}H_{12}^*H_{21}^*\}=2(AC+BD)$. Also note that: $A^2+C^2\ge2AC$ for all $A,C \in \mathcal{R}$. Thus
\begin{eqnarray*}
    |H_{11}|^2|H_{22}|^2+|H_{12}|^2|H_{21}|^2 & \ge & 2\cdot\Real\{H_{11}H_{22}H_{12}^*H_{21}^*\}\\
    & = & 2V_1.
\end{eqnarray*}
Repeating this argument for $V_2$ and $V_3$, we conclude that the multiplier of $(P_1P_2-2|\alpha_{12}|^2)$ in \eqref{eq:eq8} is non-negative and omitting $2|\alpha_{12}|^2$ from
\eqref{eq:eq8} does not reduce the expected value. This leads to \eqref{eq:eq10}. Also note that since $\log(x)$ monotonically increases with respect to $x$, then \eqref{eq:eq10} is maximized when Tx$_1$ and Tx$_2$ transmit at their maximum available power.
\end{proof}
\end{proposition}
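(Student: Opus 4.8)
The plan is to exploit the fact that, after the symmetrization step that produced \eqref{eq:eq8}, the cross-correlation $\alpha_{12}$ enters the integrand only through the single factor $P_1 P_2 - 2|\alpha_{12}|^2$, which multiplies the channel-dependent bracket
\[
Q \triangleq |H_{11}|^2|H_{22}|^2 + |H_{11}|^2|H_{23}|^2 + |H_{12}|^2|H_{21}|^2 + |H_{12}|^2|H_{23}|^2 + |H_{13}|^2|H_{21}|^2 + |H_{13}|^2|H_{22}|^2 - 2V_1 - 2V_2 - 2V_3 .
\]
Since $\log_2(\cdot)$ is monotonically increasing and the remaining summands in its argument do not involve $\alpha_{12}$, my plan is to reduce the whole proposition to the pointwise claim that $Q\ge 0$ for every realization of $\utH$. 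Granting that, the argument of the logarithm is non-decreasing in $P_1 P_2 - 2|\alpha_{12}|^2$; because $|\alpha_{12}|^2\ge 0$, it is therefore maximized by $\alpha_{12}=0$, after which the argument is also non-decreasing in $P_1$ and $P_2$, so the maximum is attained at full power. This yields \eqref{eq:eq10} directly.

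To establish $Q\ge 0$ I would group the six positive products into three conjugate pairs, one matched to each $V_i$. For $V_1=\Real\{H_{11}H_{22}H_{12}^*H_{21}^*\}$ I would write $H_{11}H_{22}=A+Bi$ and $H_{12}H_{21}=C+Di$ with $A,B,C,D\in\mathcal{R}$; then $|H_{11}|^2|H_{22}|^2+|H_{12}|^2|H_{21}|^2=(A^2+B^2)+(C^2+D^2)$ while $2V_1=2(AC+BD)$, so the elementary bounds $A^2+C^2\ge 2AC$ and $B^2+D^2\ge 2BD$ give $|H_{11}|^2|H_{22}|^2+|H_{12}|^2|H_{21}|^2\ge 2V_1$. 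Repeating the identical computation with the pairing $(|H_{12}|^2|H_{23}|^2,\,|H_{13}|^2|H_{22}|^2)$ against $V_2$ and $(|H_{13}|^2|H_{21}|^2,\,|H_{11}|^2|H_{23}|^2)$ against $V_3$, and summing the three inequalities, delivers $Q\ge 0$.

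The one delicate point, and the step I expect to require the most care, is verifying that the six positive products in $Q$ split into exactly the three pairs whose moduli match the four gains appearing in $V_1$, $V_2$, $V_3$ respectively; this is a bookkeeping check — each $V_i$ is a product of four distinct gains, and its matched pair consists of the two ways of grouping those four gains into moduli-squared terms, all of which are indeed present in $Q$. Once the pairing is fixed, non-negativity is just the AM--GM bound $x^2+y^2\ge 2xy$ applied to real and imaginary parts, so I anticipate no genuine analytic obstacle; the monotonicity argument in $P_1P_2-2|\alpha_{12}|^2$ and in the powers is then immediate, completing the proof.
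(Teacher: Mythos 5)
Your proposal is correct and follows essentially the same route as the paper's own proof: the same decomposition $H_{11}H_{22}=A+Bi$, $H_{12}H_{21}=C+Di$, the same elementary bound $x^2+y^2\ge 2xy$ applied to real and imaginary parts to show each conjugate pair of moduli products dominates $2V_i$, and the same monotonicity argument in $P_1P_2-2|\alpha_{12}|^2$ and in the powers. Your pairing of the six positive products against $V_1,V_2,V_3$ matches the paper's, and your explicit mention of the companion bound $B^2+D^2\ge 2BD$ is a small but welcome addition of rigor over the paper's statement, which records only $A^2+C^2\ge 2AC$.
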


Hence, \eqref{eq:eq10} can be obtained with equality from \eqref{eq:eq9} by setting $\alpha_{12}=\alpha_{13}=\alpha_{23}=0$.
We conclude that the mutual information expressions in the cut-set bound are maximized by zero-mean, complex Normal channel inputs independent of each other, and with all sources transmitting at their maximum available power.

\setcounter{equation}{0}

\section{The Capacity Region of the EMARC}
\label{App:MARC+feedback - capacity region}
The capacity of the EMARC is stated in the following theorem:
\begin{theorem}
\label{thm: capacity of EMARC}
Consider the fading EMARC with Rx-CSI derived from the ICRF, given by equations \eqref{eqn:ICR_model}  where its message destination is Rx$_1$ but the relay
receives feedback from both receivers s.t. $y_{1,1}^{i-1},y_{2,1}^{i-1},\hvec_{1,1}^{i-1}$ and $\hvec_{2,1}^{i-1}$ are available at the relay at time $i$, prior to transmission.
Assume that the channel coefficients are independent in time and independent of each other s.t.
their phases are i.i.d. and distributed uniformly over $[0,2\pi)$. Let the additive noises be i.i.d. circularly symmetric complex Normal processes, $\CN(0,1)$, and let the sources
have power constraints $\E\big\{|X_k|^2\big\} \le P_k$, $k\in\{1,2,3\}$. The capacity region is then given by all nonnegative rate pairs s.t.
\vspace{-0.3 cm}
    \begin{subequations}
    \label{eq:MARC-FeedBack_Capacity}
    \begin{eqnarray}
        R_1 &\le& \min\{I(X_1;Y_1,Y_2,Y_3|X_2,X_3,\utH),\nonumber\\
        & & \qquad \qquad I(X_1,X_3;Y_1|X_2,\tH_1)\}\\
        R_2 &\le& \min\{I(X_2;Y_1,Y_2,Y_3|X_1,X_3,\utH),\nonumber\\
        & & \qquad  I(X_2,X_3;Y_1|X_1,\tH_1)\}\\
        R_1+R_2&\le& \min \{I(X_1,X_2;Y_1,Y_2,Y_3|X_3,\utH),\nonumber\\
        & & \qquad  I(X_1,X_2,X_3;Y_1|\tH_1)\},
    \end{eqnarray}
    \end{subequations}
    and it is achieved with $X_k\sim \CN(0,P_k), k\in\{1,2,3\}$, mutually independent and with DF strategy at the relay.
\end{theorem}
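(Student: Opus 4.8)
The plan is to establish \eqref{eq:MARC-FeedBack_Capacity} by a matching pair of bounds: a converse from the cut-set theorem and an achievability scheme based on DF at the relay with block-Markov encoding and backward decoding at Rx$_1$. The key structural observation is that because the relay has feedback from \emph{both} receivers, its decoding of $(m_1,m_2)$ is governed by the full observation $(Y_1,Y_2,Y_3)$, which reproduces exactly the first term of each minimum, while the single destination Rx$_1$ must itself decode both messages, producing the MAC-type second term of each minimum. Thus the feedback is precisely what aligns DF with the cut-set bound, so that the two bounds coincide.

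For the converse I would apply the cut-set theorem \cite[Theorem 15.10.1]{cover-thomas:it-book} to the EMARC. Since Rx$_2$ has no forward channel input and merely returns $Y_2$ to the relay with delay, its placement in a cut may be chosen freely, and each choice yields a valid bound. Assigning Rx$_2$ to $\mathcal{S}^c$ together with the relay makes the feedback of $Y_2$ internal to $\mathcal{S}^c$, so the outputs available across the cut $\mathcal{S}=\{\mathrm{Tx}_1\}$ are $(Y_1,Y_2,Y_3)$ and we obtain $R_1\le I(X_1;Y_1,Y_2,Y_3|X_2,X_3,\utH)$; assigning Rx$_2$ to $\mathcal{S}=\{\mathrm{Tx}_1,\mathrm{Relay},\mathrm{Rx}_2\}$ instead gives $R_1\le I(X_1,X_3;Y_1|X_2,\tH_1)$. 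The analogous cuts for $R_2$ and for the sum produce the remaining four terms, so every rate pair must lie inside all six bounds, i.e.\ inside the three minima. To finish the converse I would show the maximizing input law is zero-mean, mutually independent, circularly symmetric complex Normal at full power: the terms $I(X_1;Y_1,Y_2,Y_3|X_2,X_3,\utH)$, $I(X_1,X_3;Y_1|X_2,\tH_1)$ and $I(X_1,X_2;Y_1,Y_2,Y_3|X_3,\utH)$ are exactly those maximized in Appendix \ref{app:maximzing dist for cut set bound}, while $I(X_2,X_3;Y_1|X_1,\tH_1)$ and $I(X_1,X_2,X_3;Y_1|\tH_1)$ are handled identically, invoking Lemma \ref{lem:lem0} and the phase-symmetry argument that forces all cross-correlations $\alpha_{ij}$ to vanish because the uniform, magnitude-independent phases leave each averaged expression invariant under sign reversal of a channel coefficient.

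For achievability I would use the DF scheme with mutually independent $\CN(0,P_k)$ codebooks for $X_1,X_2,X_3$, the relay codeword indexed by the pair $(m_1,m_2)$, transmitting $B-1$ messages over $B$ blocks. At block $b$ the relay decodes $(m_{1,b},m_{2,b})$ by joint typicality from $(\yvec_1(b),\yvec_2(b),\yvec_3(b),\mathbf{\uth}(b))$; following the analysis of \cite[Sec. 4.D]{Kramer:05}, this succeeds for large $n$ exactly when $(R_1,R_2)$ satisfies the first terms of the minima in \eqref{eq:MARC-FeedBack_Capacity}. Rx$_1$ then applies backward decoding, recovering \emph{both} messages; since the codebooks are independent, this is a MAC with cooperating relay at the destination, and decoding succeeds provided $(R_1,R_2)$ satisfies the three MAC-type bounds given by the second terms. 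The achievable region is the intersection of the relay-decoding and destination-decoding regions, which equals \eqref{eq:MARC-FeedBack_Capacity}, and it is attained with the independent Gaussian inputs identified in the converse, so the inner and outer bounds coincide.

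The main obstacle I anticipate is the structural matching itself: in a generic MARC the cut-set bound and DF do not meet, and the nontrivial point here is to recognize that the feedback makes the relay-decoding constraints collapse onto the source-side cut terms while the destination-decoding constraints reproduce the relay-included cut terms, so their intersection is tight. A subsidiary technical hurdle is verifying that a single independent-Gaussian distribution simultaneously maximizes every mutual-information term entering the minima, so that achievability and converse are evaluated under a common law; this is needed for the bounds to agree, but it follows by reusing the Appendix \ref{app:maximzing dist for cut set bound} computations essentially verbatim.
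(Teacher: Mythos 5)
Your proposal is correct and follows essentially the same route as the paper's Appendix~B proof: the same six cut-set bounds (with the same placement of Rx$_2$ in each cut), the same reduction to mutually independent zero-mean circularly symmetric Gaussian inputs at full power via the Appendix~A phase-symmetry argument, and the same DF achievability with the relay decoding $(m_1,m_2)$ from $(Y_1,Y_2,Y_3)$ and Rx$_1$ performing backward joint (MAC-type) decoding of both messages, so that the two regions intersect exactly at \eqref{eq:MARC-FeedBack_Capacity}. No gaps.
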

\begin{proof}
The proof consists of the following steps:
\begin{itemize}
    \item We provide an outer bound on the capacity region using the cut-set bound.
    \item We show that the input distribution that maximizes the outer bound is zero-mean, circularly symmetric complex Normal with channel inputs independent of each other and with maximum allowed power.
    \item Assuming codebooks generated according to the maximizing distribution, we present an achievable rate region using the DF strategy at the relay:
    \begin{itemize}
        \item For decoding at the relay, we follow steps similar to \cite[Sec. 4.D]{Kramer:05}.
        \item We provide an achievable rate region for decoding at the destination by considering a backward decoding scheme.
    \end{itemize}
    \item We conclude that the intersection of the achievable rate regions for decoding at the relay and at the destination coincides with the cut-set bound.
\end{itemize}
\subsection{An Outer Bound}
The following bounds are the cut-set bounds of the EMARC rate region:
\begin{subequations}
\label{eq:cut-set bound EMARC}
\begin{eqnarray}
\mathcal{S} &\triangleq& \{\mbox{Tx}_1\}, \mathcal{S}^C\triangleq \{\mbox{Tx}_2, \mbox{Relay}, \mbox{Rx}_1, \mbox{Rx}_2\} :\phantom{xxxxxx}\nonumber\\
& & \quad  R_1\le I(X_1;Y_1,Y_2,Y_3|X_2,X_3,\utH)\\
\mathcal{S} &\triangleq& \{\mbox{Tx}_1, \mbox{Relay}, \mbox{Rx}_2\},\mathcal{S}^C \triangleq \{\mbox{Tx}_2, \mbox{Rx}_1\} :\phantom{xxxxxx}\nonumber\\
& & \quad R_1\le I(X_1,X_3;Y_1|X_2,\tH_1)\\
\mathcal{S} &\triangleq& \{\mbox{Tx}_2\}, \mathcal{S}^C \triangleq\{\mbox{Tx}_1, \mbox{Relay}, \mbox{Rx}_1, \mbox{Rx}_2\} :\phantom{xxxxxx}\nonumber\\
& & \quad R_2\le I(X_2;Y_1,Y_2,Y_3|X_1,X_3,\utH)\\
\mathcal{S} &\triangleq& \{\mbox{Tx}_2, \mbox{Relay}, \mbox{Rx}_1\},\mathcal{S}^C \triangleq \{\mbox{Tx}_1, \mbox{Rx}_2\} :\phantom{xxxxxx} \nonumber\\
& & \quad R_2\le I(X_2,X_3;Y_1|X_1,\tH_2)\\
\mathcal{S} &\triangleq& \{\mbox{Tx}_1, \mbox{Tx}_2\}, \mathcal{S}^C \triangleq \{ \mbox{Rx}_1, \mbox{Rx}_2, \mbox{Relay}\} :\phantom{xxxxxx} \nonumber\\
& & \quad R_1+R_2\le I(X_1,X_2;Y_1,Y_2,Y_3|X_3,\utH)\\
\mathcal{S} &\triangleq& \{\mbox{Tx}_1, \mbox{Tx}_2, \mbox{Relay}, \mbox{Rx}_2\}, \mathcal{S}^C \triangleq \{\mbox{Rx}_1\} :\phantom{xxxxxx} \nonumber\\
& & \quad R_1+R_2\le I(X_1,X_2, X_3;Y_1|\utH).
\end{eqnarray}
\end{subequations}
Following the same arguments as in sections \ref{Bounds On R_1}, \ref{Bounds On R_2} and \ref{Bounds On R_1+R_2} we conclude that the outer bounds on $R_1$ and $R_2$ are maximized by mutually
independent, zero-mean, circularly symmetric complex Normal channel inputs and with all sources transmitting at their maximum available power. Moreover, we obtain an upper bound on $I(X_1,X_2,X_3;Y_1|\tH_1)$:
\begin{eqnarray}
\label{eq:eq12}
& & \!\!\!\!\!\!\!\!\!\!\!\! I(X_1,X_2,X_3;Y_1|\tH_1)\nonumber\\
& & \!\!\!\!\!\!\!\!\!\!\!\! \le\E_{\tH_1}\!\bigg\{\!\!\log_2\Big(\! 1\!+\!|H_{11}|^2P_1\!+\!|H_{21}|^2P_2\!+\!|H_{31}|^2P_3\Big)\!\!\bigg\},
\end{eqnarray}
and we conclude that it is achieved by mutually independent, zero mean, circularly symmetric complex Normal channel inputs.
\subsection{An Achievable Rate Region}
The achievability is based on DF strategy at the relay. Fix the block length $n$ and the input distributions: $f_{X_1}(x_1),f_{X_2}(x_2), f_{X_3}(x_3)$ where
$f_{X_k}(x_k)\sim\CN(0,P_k), k\in\{1,2,3\}$. The following encoding and decoding methods are considered:
\subsubsection{Code Construction}
The code construction is similar to section \ref{sec:ICRF Code Book}.
\subsubsection{Encoding at Block b}
The encoding process is similar to section \ref{sec:ICRF Encoding}.
\subsubsection{Decoding at the Relay at Block b}
The decoding process at the relay is similar to Section \ref{sec:ICRF relay decoding}, leading to the rate constraints \eqref{eq:VSI Decoing at Relay}.
\subsubsection{Decoding at the Destination}
The receiver uses a backward block decoding method. Assume that the receiver has successfully decoded $m_{1,b+1}$ and $m_{2,b+1}$. Then
\begin{itemize}
    \item In the first step the receiver generates the sets:
    \begin{eqnarray*}
     \!\!\!\!\!\!\!\!\mathcal{E}_{0,b}&\triangleq& \Big \{(m_1,m_2)\in\mathcal{M}_1\times \mathcal{M}_2: \big(\xvec_1(m_{1,b+1}),\\
     & &\!\!\!\!\!\!\!\!\!\!\!\!\!\!\!\!\!\!\!\! \xvec_2(m_{2,b+1}), \xvec_3(m_1,m_2) ,\mathbf{y}_1(b+1), \mathbf{\hvec}_1(b+1)\big)\!\! \in\!\! A_\epsilon^{(n)}\Big\}.\\
     \!\!\!\!\!\!\!\! \mathcal{E}_{1,b}&\triangleq& \Big\{(m_1,m_2)\in\mathcal{M}_1\times \mathcal{M}_2: \big(\xvec_1(m_1), \\
     & & \qquad \qquad \xvec_2(m_2),\mathbf{y}_1(b), \mathbf{\hvec}_1(b)\big) \in A_\epsilon^{(n)} \Big\}.
     \end{eqnarray*}
     \item The receiver then decodes $(m_{1,b},m_{2,b})$ by finding a unique pair $(m_1,m_2)\in\mathcal{E}_{0,b}\cap \mathcal{E}_{1,b}$.
\end{itemize}
Let the decoded pair be $(\hat{m}_{1,b},\hat{m}_{2,b})$. A decoding error happens if one of the following error events, associated with the decoding rule at the destination occurs:
\begin{itemize}
\item $E_{0,b} \cup E_{1,b}$, where
    \begin{eqnarray*}
    \!\!\!\!\!\!\!\!E_{0,b} &\triangleq& \Big\{\big(\xvec_1(m_{1,b+1}),\xvec_2(m_{2,b+1}), \xvec_3(m_{1,b},m_{2,b}) ,\\
    & & \qquad \qquad \qquad \mathbf{y}_1(b+1), \mathbf{\hvec}_1(b+1)\big) \notin A_\epsilon^{(n)}\Big\}\\
    \!\!\!\!\!\!\!\!E_{1,b} &\triangleq& \Big\{\big( \xvec_1(m_{1,b}), \xvec_2(m_{2,b}),\mathbf{y}_1(b), \mathbf{\hvec}_1(b)\big) \notin A_\epsilon^{(n)}\Big\}.
    \end{eqnarray*}
    From joint-typicality, the probability of the above event can be arbitrarily small if $n$ is large enough.

\item $E_{2,b}\triangleq\big\{(\hat{m}_{1,b}\neq m_{1,b},m_{2,b}) \in \mathcal{E}_{0,b} \cap \mathcal{E}_{1,b}\big\}$. From \cite[Lemma 2]{CoverElGamal:IT79}:
    \begin{eqnarray*}
      & &  \!\!\!\!\!\! \Pr\{(\hat{m}_{1,b}\neq m_{1,b},m_{2,b}) \in \mathcal{E}_{0,b}\}\\
       & & \qquad \qquad \qquad \qquad \le 2^{-n(I(X_3;Y_1|X_1,X_2,\tH_1)-7\epsilon)}\\
      & &  \!\!\!\!\!\!\Pr\{(\hat{m}_{1,b}\neq m_{1,b},m_{2,b}) \in \mathcal{E}_{1,b}\}\\
       & & \qquad \qquad \qquad \qquad \le        2^{-n(I(X_1;Y_1|X_2,\tH_1)-7\epsilon)}.
    \end{eqnarray*}
    Note that since the codebooks are constructed independent of each other then $\mathcal{E}_{0,b}$ is independent of $\mathcal{E}_{1,b}$ and the probability of $E_{2,b}$ can be arbitrarily small if $n$ is large enough and
    \begin{subequations}
    \label{eq:eq42}
    \begin{eqnarray}
    \label{eq:eq421}
    R_1 & \le & I(X_3;Y_1|X_1,X_2,\tH_1)+I(X_1;Y_1|X_2,\tH_1)\nonumber\\
    & = &I(X_1,X_3;Y_1|X_2,\tH_1).
    \end{eqnarray}

\item $E_{3,b}\triangleq\big\{(m_{1,b},\hat{m}_{2,b}\neq m_{2,b}) \in \mathcal{E}_{0,b}\cap \mathcal{E}_{1,b}\big\}$. From \cite[Lemma 2]{CoverElGamal:IT79},
the probability of this event can be arbitrarily small if $n$ is large enough and
    \begin{equation}
    \label{eq:eq422}
    R_2\le I(X_2,X_3;Y_1|X_1,\tH_1).
    \end{equation}

 \item $E_{4,b}\triangleq\big\{(\hat{m}_{1,b}\neq m_{1,b}, \hat{m}_{2,b}\neq m_{2,b})\in \mathcal{E}_{0,b}\cap\mathcal{E}_{1,b}\big\}$. From \cite[Lemma 2]{CoverElGamal:IT79}:
    \begin{eqnarray*}
        & &\!\!\!\!\! \Pr\{(\hat{m}_{1,b}\neq m_{1,b}, \hat{m}_{2,b}\neq m_{2,b}) \in \mathcal{E}_{0,b}\} \\
        & &  \qquad \qquad \qquad \qquad  \le 2^{-n\big(I(X_3;Y_1|X_1,X_2,\tH_1)-7\epsilon\big)}\\
        & &\!\!\!\!\!  \Pr\{(\hat{m}_{1,b}\neq m_{1,b}, \hat{m}_{2,b}\neq m_{2,b}) \in \mathcal{E}_{1,b}\} \\
        & & \qquad \qquad \qquad \qquad \le  2^{-n\big(I(X_1,X_2;Y_1|\tH_1)-7\epsilon\big)},
    \end{eqnarray*}
     the probability of this event can be arbitrarily small if $n$ is large enough and
     \begin{eqnarray}
     \!\!\!\!\!\!\!\!\!\!R_1+R_2 & \le & I(X_3;Y_1|X_1,X_2,\tH_1)+I(X_1,X_2;Y_1|\tH_1)\nonumber\\
        & = & I(X_1,X_2,X_3;Y_1|\tH_1).
     \end{eqnarray}
     \end{subequations}
\end{itemize}
Combining \eqref{eq:VSI Decoing at Relay} and \eqref{eq:eq42}, we obtain the achievable rate region of the EMARC:
    \begin{subequations}
    \label{eq:MARC+Feedback Region}
    \begin{eqnarray}
      & & \!\!\!\!\!\!\!\!\!\!\!\! \R_{\scriptsize{\mbox{EMARC}}}\nonumber\\
      & & \!\!\!\!\!\! =  \bigg\{ (R_1, R_2)\in \Rset^2_+:\nonumber\\
      & &  R_1 \le \min\big\{I(X_1;Y_1,Y_2,Y_3|X_2,X_3,\utH),\nonumber\\
      & & \qquad \qquad \qquad \quad\; I(X_1,X_3;Y_1|X_2,\tH_1)\big\}\\
      & &  R_2 \le \min\big\{I(X_2;Y_1,Y_2,Y_3|X_2,X_3,\utH),\nonumber\\
      & & \qquad \qquad \qquad \quad\; I(X_2,X_3;Y_1|X_1,\tH_1)\big\}\\
      & &  R_1+R_2 \le \min\big\{I(X_1,X_2;Y_1,Y_2,Y_3|X_3,\utH),\nonumber\\
      & & \qquad \qquad \qquad \quad\; I(X_1,X_2,X_3;Y_1|\tH_1)\big\} \bigg\}.
    \end{eqnarray}
    \end{subequations}
Finally, note that \eqref{eq:MARC+Feedback Region} coincides with the cut-set bound in \eqref{eq:cut-set bound EMARC} and thus it is the capacity region of the EMARC.
\end{proof}

\setcounter{equation}{0}

\section{Proof of Proposition \ref{prop:capacity_increased_VSI_fb_corrs_Txs}}
\label{Appndx:proof_prop_feedback_enlarges}
    \paragraph*{Inner bound}:
    Note that the region $\tmR_{ICRF}^{VSI}$ can be obtained by time-sharing between two rate points: point $A$ is the rate pair $(R_{1,A}, R_{2,A}) = \big(I(X_1,X_3;Y_1|X_2,\tH_1),
    I(X_2,X_3;Y_2|X_1,\tH_2)\big)$, and point $B$ is the rate pair $(R_{1,B}, R_{2,B}) = \big(I(X_1,X_2,X_3;Y_1|\tH_1),0\big)$.
     Theorem 1 shows that in the VSI regime  $(R_{1,A}, R_{2,A})$ is achievable. We next show that $(R_{1,B}, R_{2,B})$ is achievable.
     This is done using the DF-based achievability  scheme described in the following:

    \smallskip

%
%
    Fix the blocklength $n$ and the input distribution $f_{X_1,X_2,X_3} (x_1,x_2,x_3) = f_{X_1}(x_1)\cdot f_{X_2}(x_2)\cdot f_{X_3}(x_3)$ where $X_k\sim\CN(0,P_k)\mbox{ }k\in\{1,2,3\}$.
   Note that as $R_2=0$, then $\Tbad$ acts as a second relay for sending $m_1$ from $\Tgood$ to $\Rgood$. We use $nB$ channel symbols for sending $B-1$ messages.

    \paragraph{Codebook Construction}
    For each $m_1 \in \mathcal{M}_1$ and $k\in\{1,2,3\}$, select a codeword $\xvec_k(m_1)$ according to the p.d.f. $ f_{\Xvec_k}\big(\xvec_k(m_1)\big) =\prod_{i=1}^n f_{X_k}\big(x_{k,i}(m_1)\big)$.

    \paragraph{Encoding at Block $b$}
    At block $b$, Tx$_1$ transmits $m_{1,b}$ using $\xvec_1(m_{1,b})$. Let $\hat{m}_{1,b-1}, \hat{\hat{m}}_{1,b-1}$ denote
    the decoded $m_{1,b-1}$ at the end of block $b-1$, at the relay and at Tx$_2$, respectively.
    At block $b$, Tx$_2$ transmits $\xvec_2(\hat{\hat{m}}_{1,b-1})$ and the relay transmits $\xvec_3(\hat{m}_{1,b-1})$.
    At block $b=1$, Tx$_2$ transmits  $\xvec_2(1)$, and the relay transmits $\xvec_3(1)$, and at block $b=B$, Tx$_1$ transmits $\xvec_1(1)$.

    \paragraph{Decoding at the Relay and at Tx$_2$ at Block $b$}
    The relay and $\Tbad$ each uses a joint-typicality decoder.
    We now find conditions for bounding the average probability of error averaged over all codebooks.
    The decoder at the relay looks for a unique $m_1\in\mathcal{M}_1$ which satisfies
    \begin{eqnarray*}
        & &\Big(\xvec_1(m_1),\xvec_2(m_{1,b-1}),\xvec_3(m_{1,b-1}),\yvec_1(b),\yvec_2(b),\\
        & & \qquad \yvec_3(b),\mathbf{\uth}(b)\Big) \in A_\epsilon^{(n)}(X_1,X_2,X_3,Y_1,Y_2,Y_3,\utH).
    \end{eqnarray*}
    We conclude that the relay can decode $m_{1,b}$ with an arbitrarily small probability of error if $n$ is large enough and
    \begin{subequations}
    \label{eq:Tx2 and Relay Decoding}
    \begin{equation}
        R_1 < I(X_1;Y_1,Y_2,Y_3|X_2,X_3,\utH).
    \end{equation}
    Following the same approach, we can show that $\Tbad$ can decode $m_{1,b}$ reliably if $n$ is large enough and
    \begin{equation}
    \label{eqnprop:cond_dec_at_Tbad}
        R_1 < I(X_1;Y_2|X_2,X_3,\tH_2).
    \end{equation}
    \end{subequations}
    Note that as $I(X_1;Y_1,Y_2,Y_3|X_2,X_3,\utH) \ge I(X_1;Y_2|X_2,X_3,\tH_1)$, then reliable decoding at $\Tbad$ guarantees reliable decoding at the relay.

    \paragraph{Decoding at the Destination}
   $\Rgood$  uses a backward decoding scheme based on a joint-typicality rule. Assume that $\Rgood$ correctly decoded $m_{1,b+1}$. Then
    \begin{itemize}
        \item Rx$_1$ generates the sets:
        \begin{eqnarray*}
         \!\!\!\!\!\!\!\!\!\!\mathcal{E}_{0,b} &\triangleq& \Big \{m_1\in\mathcal{M}_1: \big(\xvec_1(m_{1,b+1}),\xvec_2(m_1), \xvec_3(m_1),\\
         & & \qquad \qquad \qquad \mathbf{y}_1(b+1), \mathbf{\hvec}_1(b+1)\big) \in A_\epsilon^{(n)}\Big\}\\
         \!\!\!\!\!\!\!\!\!\!\mathcal{E}_{1,b}  &\triangleq& \Big\{m_1\in\mathcal{M}_1: \big( \xvec_1(m_1),\mathbf{y}_1(b),\mathbf{\hvec}_1(b)\big) \in A_\epsilon^{(n)} \Big\}.
        \end{eqnarray*}
        \item $\Rgood$ then decodes $m_{1,b}$ by finding a unique $m_{1}\in \mathcal{E}_{0,b} \cap \mathcal{E}_{1,b}$.
    \end{itemize}
        Note that since the codewords are independent of each other, $\mathcal{E}_{0,b}$ is independent of $\mathcal{E}_{1,b}$ and the probability of decoding error
        can be made arbitrarily small by taking $n$ large enough as long as
        \begin{eqnarray}
        \label{eq:VSI con with feedback to own transmitter}
             R_1 & < & I(X_2,X_3;Y_1|X_1,\tH_1) + I(X_1;Y_1|\tH_1)\nonumber\\
              & = & I(X_1,X_2,X_3;Y_1|\tH_1).
        \end{eqnarray}

        Note that since \eqref{eq:VSI hardened con with feedback to own transmitter} holds  then
        \eqref{eq:VSI con with feedback to own transmitter} guarantees \eqref{eqnprop:cond_dec_at_Tbad}. Therefore reliable decoding at $\Rgood$ implies reliable decoding at $\Tbad$, and thus
        reliable decoding at the relay.
        Hence, when  \eqref{eq:VSI hardened con with feedback to own transmitter}  holds the rate pair
             $(R_{1,B}, R_{2,B})=\big(I(X_1,X_2,X_3;Y_1|\tH_1),0\big)$
        is achievable.

        By time sharing between point A and point B we conclude that  the inner bound $\tmR_{ICRF}^{VSI}$ is achievable.


        \bigskip

        \paragraph*{Outer bound}
        Consider the following three modifications to the ICR scenario defined in the proposition:
        (M1) We let each receiver observe the instantaneous channel output and Rx-CSI at the relay, $\big(y_{3,i}, \uth(i)\big)$, and at the other receiver;
        (M2) We also let each receiver send a feedback signal which consists of its channel output and Rx-CSI, to the opposite
        transmitter (in addition to the corresponding transmitter); and (M3) We let the relay send causal feedback of its channel output and Rx-CSI to both transmitters.
        Under these three assumptions, each receiver observes the same channel output at time $i$, $(y_{1,i},y_{2,i},y_{3,i}, \uth(i))$, and each transmitter observes at time $i$
        the feedback $\big(y_{1,i-1},y_{2,i-1},y_{3,i-1}, \uth(i-1)\big)$. Due to (M1) and the data processing inequality, the relay does not need to send any channel input, and we can set $X_3=0$. Equivalently,
        we may assume that the relay channel input is available non-causally at the receivers and therefore they can subtract it from their received signal prior to decoding and to sending feedback
        (as the receivers know at time $i$ $y_{3,i}$ and the CSI at the relay, and as they know the encoding function at the relay, they can generate at time $i$ $x_{3,i}$).
        The resulting scenario is therefore equivalent to the fading vector MAC with a MIMO receiver and causal feedback, of both the channel outputs and the Rx-CSI,  to both transmitters.
        Clearly, the capacity region of this channel constitutes an outer bound on $\tmC_{ICRF}$. In the following we show that this capacity region is given by
        $\tmC_{MAC-FB}$ defined in \eqref{eqn:cap_region_mac_fb}.

        To show this, we first derive an outer bound on the capacity region of the fading vector Gaussian MAC with feedback and Rx-CSI, denoted $\mC_{MAC-FB}$. This outer bound can be obtained from the
        cut-set bound \cite[Theorem 15.10.1]{cover-thomas:it-book} (see also \cite{Jafar:06}), and is given by
        \[
            \mC_{MAC-FB} \subseteq \bigcup_{f(x_1,x_2)}\mR_{MAC-FB}\big(f(x_1,x_2)\big),
        \]
        where
        \begin{subequations}
        \label{eqn:MAC-FB-outer-bound}
            \begin{eqnarray}
            & & \!\!\!\!\!\!\!\!\!\!\!\!\!\!\!\! \mR_{MAC-FB}\big(f(x_1,x_2)\big)\nonumber\\
            & & \!\!\!\!\!\!\!\!\!\!\!\! \triangleq   \bigg\{(R_1,R_2)\in \setR_+^2: \nonumber\\
            & & \!\!\!\!\!\!\!\!\!\!R_1  \le  I(X_1;Y_1, Y_2, Y_3|X_2,X_3=0,\utH)\\
            & & \!\!\!\!\!\!\!\!\!\!R_2  \le  I(X_2;Y_1,Y_2,Y_3|X_1,X_3=0,\utH)\\
            & & \!\!\!\!\!\!\!\!\!\!R_1 \!+\! R_2  \le  I(X_1,X_2;Y_1,Y_2,Y_3|X_3=0,\utH)\!\bigg\},
            \end{eqnarray}
        \end{subequations}
        where all mutual information expressions are evaluated with the specified input distribution $f(x_1,x_2)$.
        Repeating the arguments in Appendix \ref{app:maximzing dist for cut set bound} we conclude that the mutual information expressions in  Eqns. \eqref{eqn:MAC-FB-outer-bound} are simultaneously
        maximized by mutually independent channel inputs $X_k \sim \CN(0,P_k)$, $k=1,2$.
        Denote the corresponding input distributions $g(x_k)$, $k=1,2$. Thus $\mC_{MAC-FB} \subseteq \mR_{MAC-FB}\big(g(x_1)g(x_2)\big)$.

        It is straightforward to conclude that when $X_k \sim \CN(0,P_k)$, $k=1,2$, mutually independent, then any rate pair in $\mR_{MAC-FB}\big(g(x_1)g(x_2)\big)$ is achievable. Thus
    $\mC_{MAC-FB} \supseteq \mR_{MAC-FB}\big(g(x_1)g(x_2)\big)$. Combined with the outer bound we conclude that $\mC_{MAC-FB} = \mR_{MAC-FB}\big(g(x_1)g(x_2)\big)$.
    Lastly, we note that letting $X_3 \sim \CN(0,P_3)$ independent of $X_1$, $X_2$ does not change the rate expressions, thus $\mC_{MAC-FB} = \tmC_{MAC-FB}$.

    \smallskip

    Next, consider $\tmR_{OB}$. The derivation of the rate constraints in $\tmR_{OB}$ uses similar steps as in the derivation of the constraints in the cut-set bound,
    with the exception that the individual rate constraints are derived while letting
    both $Y_{1,1}^n$ and $Y_{2,2}^n$ be available at each receiver for decoding.
    This is necessary in order to accommodate the feedback, see Comment \ref{rem:Not_Possible_Cut-Set}.
    Similarly to Appendix \ref{app:maximzing dist for cut set bound}, it can be shown that mutually independent Gaussian inputs simultaneously maximize the
    mutual information expressions on the right-hand side of all constraints in \eqref{eqn:outer_bound_2}.
    Note that the sum-rate is maximized by mutually independent Gaussian inputs as a consequence of \cite[Theorem 8]{Kramer:05}.

    \smallskip

    This completes the proof of the outer bound  on 
    $\tmC_{ICRF}$.


\begin{biography}{Daniel Zahavi}
 received the B.Sc. degree in electrical engineering in 2009 from Technion, Israel Institute of Technology,
 Israel. He is currently working toward the M.Sc. degree at Ben-Gurion University of the Negev, Israel. Since 2010 he has been working as a communication
 research engineer with Signal Corps of Israel Defense Forces.

\end{biography}

\begin{biography}{Ron Dabora}
received his B.Sc. and M.Sc. degrees in 1994 and 2000, respectively,
from Tel-Aviv University and his Ph.D. degree in 2007 from Cornell
University, all in electrical engineering. From 1994 to 2000 he
worked as an engineer at the Ministry of Defense of Israel, and from
2000 to 2003, he was with the Algorithms Group at Millimetrix
Broadband Networks, Israel. From 2007 to 2009  he is a postdoctoral
researcher at the Department of Electrical Engineering at Stanford
University. Since 2009 he is an Assistant Professor at the
Department of Electrical and Computer Engineering, Ben-Gurion
University, Israel. He currently serves as an associate editor for the IEEE Signal Processing Letters.
\end{biography}

\end{document}